\def\@ACM@checkaffil{
	\if@ACM@instpresent\else
	\ClassWarningNoLine{\@classname}{No institution present for an affiliation}%
	\fi
	\if@ACM@citypresent\else
	\ClassWarningNoLine{\@classname}{No city present for an affiliation}%
	\fi
	\if@ACM@countrypresent\else
	\ClassWarningNoLine{\@classname}{No country present for an affiliation}%
	\fi
}
\renewcommand\footnotetextcopyrightpermission[1]{} 
\newcommand{\multiline}[1]{%
	\begin{tabularx}{\dimexpr\linewidth-\ALG@thistlm}[t]{@{}X@{}}
		#1
	\end{tabularx}
}
\newsavebox{\@brx}
\newcommand{\llangle}[1][]{\savebox{\@brx}{\(\m@th{#1\langle}\)}%
	\mathopen{\copy\@brx\kern-0.5\wd\@brx\usebox{\@brx}}}
\newcommand{\rrangle}[1][]{\savebox{\@brx}{\(\m@th{#1\rangle}\)}%
	\mathclose{\copy\@brx\kern-0.5\wd\@brx\usebox{\@brx}}}
\algrenewcommand\textproc{}
\newcommand{\main}{\textsf{Clover}}
\newcommand{\baseline}{\textsf{ORAM (SH-Sec.)}}
\newcommand{\revise}{\textcolor{black}}
\theoremstyle{plain}
\newtheorem{thm}{Theorem}
\newtheorem{lem}[thm]{Lemma}
\theoremstyle{definition}
\newtheorem{defn}{Definition}
\newtheorem{assump}{Assumption}
\begin{document}
	
	\title{Harnessing Sparsification in Federated Learning: A Secure, Efficient, and Differentially Private Realization}
	
	\titlenote{This is a full version of the paper originally published in ACM CCS 2025 \cite{clover-submission}.}
	
	\author{Shuangqing Xu}
	\affiliation{
		\institution{Harbin Institute of Technology, Shenzhen}
		\city{Shenzhen}
		\country{China}}
	\affiliation{
		\institution{The Hong Kong Polytechnic University}
		\city{Hong Kong}
		\country{China}}
	
	\author{Yifeng Zheng}
	\authornote{Corresponding authors.}
	\affiliation{
		\institution{The Hong Kong Polytechnic University}
		\city{Hong Kong}
		\country{China}}
	
	\author{Zhongyun Hua}
	\authornotemark[2]
	\affiliation{
		\institution{Harbin Institute of Technology, Shenzhen}
		\city{Shenzhen}
		\country{China}}
	
	\renewcommand{\shortauthors}{Shuangqing Xu, Yifeng Zheng, \& Zhongyun Hua}
	\begin{abstract}

		Federated learning (FL) enables multiple clients to jointly train a model by sharing only gradient updates for aggregation instead of raw data.
		Due to the transmission of very high-dimensional gradient updates from many clients, FL is known to suffer from a communication bottleneck. 
		Meanwhile, the gradients shared by clients as well as the trained model may also be exploited for inferring private local datasets, making privacy still a critical concern in FL.
		We present \textsf{Clover}, a novel system framework for communication-efficient, secure, and differentially private FL. 
		To tackle the communication bottleneck in FL, \textsf{Clover} follows a standard and commonly used approach---top-$k$ gradient sparsification, where each client sparsifies its gradient update such that only $k$ largest gradients (measured by magnitude) are preserved for aggregation.
		\textsf{Clover} provides a tailored mechanism built out of a trending distributed trust setting involving three servers, which allows to efficiently aggregate multiple sparse vectors (top-$k$ sparsified gradient updates) into a dense vector while hiding the values and indices of non-zero elements in each sparse vector.  
		This mechanism outperforms a baseline built on the general distributed ORAM technique by several orders of magnitude in server-side communication and runtime, with also smaller client communication cost.
		We further integrate this mechanism with a lightweight distributed noise generation mechanism to offer differential privacy (DP) guarantees on the trained model.
		To harden \textsf{Clover} with security against a malicious server, we devise a series of lightweight mechanisms for integrity checks on the server-side computation.
		Extensive experiments show that \textsf{Clover} can achieve utility comparable to vanilla FL with central DP and no use of top-$k$ sparsification.
		Meanwhile, achieving malicious security introduces negligible overhead in client-server communication, and only modest overhead in server-side communication and runtime, compared to the semi-honest security counterpart.

	\end{abstract}
	
	\keywords{federated learning, differential privacy, secret sharing} 
	
	\settopmatter{printfolios=true}
	
	\maketitle
	
	\section{Introduction} 
	Federated learning (FL) \cite{fedavg} is an increasingly popular privacy-aware machine learning paradigm where clients, coordinated by a central server, jointly train a model by sharing only gradient updates instead of raw data. 
	In each round of FL training, participating clients compute gradient updates using their local data and share them with the server for aggregation to produce an updated global model.
	Due to the sharing of very high-dimensional gradient updates from many clients, it is widely known that communication could easily become a performance bottleneck in FL \cite{Kairouz21OpenChallenge}.
	
	To alleviate the issue of high client-server communication costs in FL, a widely recognized effective method is gradient sparsification \cite{WangniWLZ18,olive,HuGG24, LuLLGY23,Qsparse}. 
	It enables clients to share only a small subset of gradients for substantial communication reduction, while retaining model utility comparable to the case of sharing dense gradient updates.
	Among others, top-$k$ sparsification \cite{WangniWLZ18,olive,HuGG24} is a standard and well-known approach, wherein each client selects $k$ gradients with the largest magnitudes in its gradient update vector, and then only shares out their values along with indices to the server.
	The top-$k$ sparsified gradients from different clients are aggregated into a dense aggregated gradient update at the server side for updating the global model. 
	Since $k$ is far less than the dimension $d$ of a client's gradient update vector (e.g., $k/d=0.3\%$ \cite{olive}), applying top-$k$ sparsification can substantially reduce client-server communication costs in FL.

	Integrating top-$k$ gradient sparsification into FL systems, however, needs a careful treatment due to the existence of critical privacy concerns. 
	Firstly, it is widely known that even just sharing gradients can still lead to privacy leakage of clients' local datasets \cite{MelisSCS19,ZhuLH19}.
	This poses a demand for securing the gradients shared by clients in FL.
	In the context of FL with top-$k$ sparsification, protection is required not only for the selected gradient values but also for their indices. 
	Indeed, it has been recently shown that even just exploiting the indices of gradients in top-$k$ sparsification could allow inference of private information of local datasets \cite{olive}. 
	While there exists a popular cryptographic tool called secure aggregation (denoted by SecAgg) \cite{BonawitzIKMMPRS17} for FL, it was proposed for securely aggregating \emph{dense} gradient update vectors.
	Extending SecAgg to work with top-$k$ gradient sparsification would require that the clients share the same indices of top-$k$ gradients, which is impractical as observed in \cite{Ergun21}.
	It is thus imperative to explore how to realize correct aggregation of top-$k$ sparsified gradient updates while hiding both selected gradient values and indices of individual clients.
	
	Protecting individual gradient updates during the aggregation process, however, is still insufficient for strong privacy protection, due to potential privacy leakage from the trained model \cite{MelisSCS19,shokri2017membership}. 
	In order to prevent such kind of leakage, a widely recognized practice \cite{olive,YangHYGC23,StevensSVRCN22} is to make the trained model differentially private. 
	Specifically, instead of directly exposing the trained model, calibrated differential privacy (DP) noise is added so that only a noisy version of the trained model is revealed. 
	Such addition of calibrated noise can provide DP guarantees, thwarting inference attacks on the trained model.
	In the literature, there exist a number of prior studies on providing DP guarantees in FL \cite{McMahanRT018,AndrewTMR21,Truex0CGW20,AgarwalKL21,kairouz21,StevensSVRCN22,olive}; see Section \ref{sec:related-work} for detailed discussion.
	However, to the best of our knowledge, only the very recent work in \cite{olive} has specifically explored how to effectively render FL systems adopting standard top-$k$ sparsification with DP guarantees on the trained model, while ensuring protection of individual gradient updates and good model utility at the same time.
	The approach in \cite{olive}, however, builds on the use of trusted hardware (Intel SGX) and thus has to place trust in hardware vendors.
	Also, it is widely known that trusted hardware is actually vulnerable to various side-channel attacks \cite{sgx-attacks}.
	
	In light of the above, in this paper, we propose {\main}, a new system framework for top-$k$ sparsification-enabled secure and differentially private FL. 
	{\main} allows the use of standard top-$k$ sparsification in FL for high client-server communication efficiency, while offering full-fledged privacy protection (for individual sparsified gradient updates as well as the trained models) and robustness against a malicious server.
	At a high level, we build {\main} through a delicate synergy of gradient sparsification, data encoding, lightweight cryptography, and differential privacy.
	{\main} leverages a distributed trust setting, where three servers from different trust domains jointly empower the FL service for clients.
	Such a distributed trust setting has been increasingly adopted in recent years for building secure systems related to FL as well as other application domains \cite{ELSA, GehlharM0SWY23, tang2024flexible, camel,DautermanRPS22, NDSS22, RatheeZCP24, PLASMA, bruggemann2024don,he2024rhombus}.
	
	Leveraging distributed trust and lightweight secret sharing techniques, our starting point is the design of a communication-efficient secure sparse vector aggregation mechanism \textsf{SparVecAgg} (see Section \ref{sec:sparvecagg}).
	Consider a set of clients, each holding a $d$-dimensional sparse vector in which there are only $k$ non-zero elements ($k \ll d$).
	Such a sparse vector can be treated as a top-$k$ sparsified gradient update in the context of FL with top-$k$ sparsification. 
	\textsf{SparVecAgg} aims to aggregate these sparse vectors into a dense vector, while achieving client communication independent of $d$ and hiding the indices and values of non-zero elements in each sparse vector.
	
	\revise{
		A strawman solution is to have the clients directly secret-share the non-zero values along with their indices among the servers, and rely on the general distributed ORAM (Oblivious Random Access Machine) \cite{ORAM1, ORAM2}, a standard cryptographic primitive supporting oblivious read and write operations, for oblivious server-side aggregation. }
	While this strawman can enable client communication independent of $d$, it is prohibitively expensive in server-side performance, due to the heavy oblivious write operation being performed over a very high-dimensional vector for every index-value pair.
	To solve this issue, we propose a customized mechanism that substantially outperforms the strawman in server-side performance.
	Our key insight is to conduct a customized permutation-based encoding of sparse vectors and delicately apply lightweight secret-shared shuffle at the server side to efficiently reconstruct original sparse vectors in secret-shared form for aggregation. 
	We also develop a tailored permutation compression mechanism for ensuring high client communication efficiency, which allows \textsf{SparVecAgg} to even achieve a reduction of about 33\% in client communication compared to the strawman solution.

	We next consider how to ensure DP guarantees for the trained model derived from aggregating the (sparse) gradient updates.
	The problem here is how to securely inject DP noise into the aggregated gradient update before revealing it.
	While there exist secure noise sampling protocols \cite{FuW24,WeiYFCW23,ARES24} which allow the servers to jointly sample DP noise in an oblivious manner, they would incur prohibitively high performance overheads \cite{PETS25}. 
	Instead we turn to an alternative strategy of distributed noise generation \cite{PETS25} to enable secure and efficient noise addition.
	The main idea is to have each server locally sample a calibrated noise, which is then secret-shared among the servers and ultimately integrated into the aggregated gradient update with high efficiency.

	With the design for semi-honest security as a basis (Section \ref{sec:semi-honest-construction}), we further design a series of lightweight integrity checks for the server-side computation (Section \ref{sec:malicious}), defending against a compromised server that might maliciously deviate from the designated protocol and compromise computation integrity.
	Specifically, for integrity check on the reconstruction of secret-shared sparse vectors based on permutation decompression and secret-shared shuffle, we devise a tailored blind MAC verification mechanism. 
	Compared to the use of standard information-theoretic MACs which require two secure dot product operations for verification \cite{AsharovHIKNPTT22}, our mechanism requires only one.
	\revise{
		To verify the correctness of each server's locally sampled DP noise, we propose a lightweight verification mechanism using the Kolmogorov-Smirnov (KS) test, which constrains the malicious manipulation in noise sampling to be insignificant. 
	}

	We implement the protocols of {\main} and conduct an extensive performance evaluation regarding the trained model utility and system efficiency (Section \ref{sec:experiments}).
	Results show that {\main} achieves utility comparable to vanilla \textsf{DP-FedAvg} \cite{McMahanRT018} (under central DP), with promising performance. 
	For instance, for securely aggregating 100 sparse vectors of dimension $10^5$ with 1\% non-zero entries in the semi-honest security setting, our \textsf{SparVecAgg} achieves up to 1602$\times$ reduction in inter-server communication cost and up to 12,041$\times$ less server-side computation cost compared to the baseline built on distributed ORAM (implemented using the popular generic secure multi-party computation (MPC) framework MP-SPDZ \cite{Keller20}). 
	Compared to the semi-honest security counterpart, the extension for achieving malicious security introduces negligible overhead in client-server communication, and leads to only about 2.67$\times$ higher inter-server communication cost and at most 3.67$\times$ higher server-side runtime.
	We highlight our main contributions below: 
	\vspace{-2pt}
	\begin{itemize}
		
		\item We present {\main}, a new system framework for secure, efficient, and differentially private FL with top-$k$ sparsification.
		
		\item We propose a novel lightweight secure sparse vector aggregation mechanism through a synergy of customized data encoding and lightweight secret-shared shuffle technique, and further integrate it with a lightweight distributed noise generation mechanism for securing trained models.
		
		\item We devise a series of lightweight mechanisms for integrity checks on the holistic server-side secure computation, achieving robustness against a malicious server. 
		
		\item We formally analyze the privacy, communication, convergence, and security of {\main}. 
		We make a prototype implementation and conduct extensive experiments. 
		The results demonstrate that {\main} achieves utility comparable to FL with central DP, with promising performance. 
	\end{itemize}

	\section{Related Work}
	\label{sec:related-work}

	%
	In order to achieve differential privacy guarantees for FL, some prior works \cite{McMahanRT018,Geyer17,AndrewTMR21} adopt the central DP model. 
	These approaches assume a \textit{trusted} central server that collects gradient updates from clients, aggregates them, and injects DP noise into the aggregated result.
	%
	%
	For instance, the \textsf{DP-FedAvg} algorithm \cite{McMahanRT018} extends the classical \textsf{FedAvg} algorithm \cite{fedavg} by having the central server add Gaussian noise to the aggregated gradient update. 
	However, the assumption of a trusted central server is challenging to satisfy in practice due to growing concerns about data privacy \cite{roy2020crypt}. 
	
	%
	Another line of work \cite{Truex0CGW20, ChamikaraLCNGBK22, miao2022compressed, sun2021ldp, liu2020fedsel, JiangZG22} explores using local differential privacy (LDP) mechanisms in FL to eliminate the need for a trusted central server. 
	With LDP, clients locally perturb gradient updates with calibrated noises before sending them to the untrusted server. 
	Among these works, only a few \cite{JiangZG22,liu2020fedsel} incorporate gradient sparsification into LDP-based FL. 
	\revise{
		For instance, Liu et al. \cite{liu2020fedsel} propose a two-stage LDP-FL framework \textsf{FedSel} consisting of a dimension selection (DS) stage and a value perturbation (VP) stage. 
		In the DS stage, each client privately selects one important dimension from its local top-$k$ set. 
		In the VP stage, the selected value is perturbed using an LDP mechanism to construct a sparse privatized update.}
	However, the LDP-protected index information may still exhibit correlations with the client's private training data, potentially leading to privacy leakage \cite{olive}. 
	Furthermore, LDP typically requires heavy noise addition, causing significant degradation in model utility \cite{RuanXFWWH23}. 
	In other works \cite{CVPR22, CVPR23}, each client also locally adds noise to its sparsified gradient update, yet the noise is very small and insufficient for privacy protection, not even meeting the requirement of LDP. 
	
	%

	To achieve utility comparable to FL with central DP without relying on a trusted server, recent works \cite{kairouz21, AgarwalKL21, StevensSVRCN22} combine secure aggregation (SecAgg) \cite{BonawitzIKMMPRS17, BellBGL020} with DP to train differentially private models in FL. 
	In particular, each client adds a small amount of DP noise to its gradient update and then engages in SecAgg with the server to securely produce only the aggregated noisy gradient update, for which the aggregated noise ensures a meaningful central DP guarantee. 
	These works do not consider the application of gradient sparsification for communication efficiency optimization.
	
	%
	
	\revise{
		Several existing works \cite{KerkoucheEuroSP21,KerkoucheUAI21,HuGG24,LuLLGY23,Ergun21} have attempted to combine gradient compression techniques like compressive sensing (CS) \cite{KerkoucheEuroSP21} and gradient sparsification \cite{KerkoucheUAI21,HuGG24,LuLLGY23,Ergun21} with SecAgg.
		For instance, Kerkouche et al.~\cite{KerkoucheEuroSP21} explore the linearity of CS to enable SecAgg over CS-compressed gradient updates. 
		However, it makes a strong assumption that the gradient update is inherently sparse in the time domain, as also noted in \cite{HuGG24}. 
		Among works employing gradient sparsification, Kerkouche et al. \cite{KerkoucheUAI21} combine SecAgg and DP in sparsification-enabled FL, but require the server to choose a common set of top-$k$ indices for all clients using an auxiliary public dataset. 
		Similarly, Hu et al.~\cite{HuGG24} propose two schemes that combine SecAgg and DP: \textsf{FedSMP-top$_k$}, which also relies on an auxiliary public dataset to identify common top-$k$ indices, and \textsf{FedSMP-rand$_k$}, in which the server randomly selects a common set of $k$ indices for all clients.
		While \textsf{FedSMP-rand$_k$} avoids the need for auxiliary data, it corresponds to random-$k$ sparsification, which is less effective than top-$k$ sparsification in preserving model utility~\cite{olive,LuLLGY23}. 
		In \cite{LuLLGY23}, Lu et al. propose to generate a globally shared top-$k$ index set by taking a union set of individual clients' top-$k$ indices, which incurs extra communication costs. 
		Nevertheless, there is usually very little overlap among the top-$k$ indices for each client \cite{Ergun21}, so using a globally shared top-$k$ index set would impair training quality and is impractical. 
		Ergun et al. \cite{Ergun21} propose to let each pair of users share a common set of randomly selected $k$ indices to facilitate the combination of gradient sparsification and SecAgg, which again corresponds to the less effective random-$k$ sparsification and suffers from degraded utility.
		In short, these works either need to rely on impractical assumption, or do not support standard top-$k$ sparsification. 
	}

	The state-of-the-art work that is closest to this paper is by Kato et al. \cite{olive}.
	They propose a trusted hardware-based framework that integrates native top-$k$ sparsification into FL while ensuring DP guarantees. 
	They first demonstrate a label inference attack exploiting top-$k$ indices of individual gradient updates, underscoring the need to protect both the values and indices of top-$k$ gradients.
	They also propose an Intel SGX-based solution for securing top-$k$ gradients.
	Such a trusted hardware-based solution is orthogonal and requires trust in hardware vendors. 
	Meanwhile, trusted hardware is known to be vulnerable to various side-channel attacks \cite{sgx-attacks}.

	An orthogonal line of work \cite{RuanXFWWH23,PETS25} studies multi-party differentially private model training. 
	Unlike FL, which keeps data local and shares only gradient updates, these frameworks follow a different workflow: 
	(1) each party secret-shares its dataset with others, and 
	(2) distributed DP-SGD \cite{DPSGD} is performed on the secret-shared dataset via MPC. 
	Notably, these works \cite{RuanXFWWH23,PETS25} also only consider semi-honest adversary and do not offer security against malicious adversary. 
	While some other works target the FL setting and can provide malicious security against servers (e.g., ELSA \cite{ELSA}), they do not consider the use of gradient sparsification and achieving strong DP guarantees while preserving good model utility. 
	To the best of our knowledge, our work is the first to harness sparsification in FL for high communication efficiency while simultaneously ensuring security against a malicious server and formal DP guarantees. 
	

	\section{Background}
	\label{sec:preliminary}
	
	\subsection{Federated Learning}
	\label{sec:pre:FL}
	
	Federated learning (FL) enables multiple clients to collaboratively train a global model without sharing their local training data. 
	Consider an FL system consisting of $n$ clients, where each client $\mathcal{C}_i$ holds a local dataset $\mathcal{D}_i$. 
	The set $\mathcal{D} = \cup_{i=0}^{n-1}\mathcal{D}_i$ denotes the full training set. 
	Formally, the overarching goal in FL can be formulated as follows: $
	\min_{\mathbf{w} \in \mathbb{R}^d} F(\mathbf{w}, \mathcal{D}) = \sum_{i=0}^{n-1} \frac{1}{n} F_i\left(\mathbf{w}, \mathcal{D}_i\right), $
	\noindent where $F_i(\cdot)$ denotes the local loss function for client $\mathcal{C}_i$ over a model $\mathbf{w}$ and a local dataset $\mathcal{D}_i$, which captures how well the parameters $\mathbf{w}$ (treated as a $d$-dimensional flattened vector) model the local dataset. 
	The standard FL algorithm to solve this optimization problem is \textsf{FedAvg} \cite{fedavg}, which runs in multiple rounds (say $T$ rounds) and the $t$-th round ($0\leq t\leq T-1$) proceeds as follows:
	
	(1) \textit{Model broadcasting}: The server samples a subset of clients $\mathcal{P}^t$ and sends the current global model $\mathbf{w}^{t}$ to these clients. 
	
	(2) \textit{Local update}: Each selected client $\mathcal{C}_i$ performs $E$ iterations of stochastic gradient descent (SGD) to train a local model using its local training dataset $\mathcal{D}_i$. 
	In FL, a single iteration over the entire local dataset is referred to as an epoch. The update process proceeds as: $\mathbf{w}^{e+1}_i = \mathbf{w}^e_i - \eta_l \nabla F_i \left(\mathbf{w}^e_i , \beta_i \right)$ for $0\leq e\leq E-1$, where $\beta_i$ represents a batch randomly sampled from the local dataset $\mathcal{D}_i$, $\eta_l$ is the learning rate, and $\mathbf{w}^0_i$ is initialized with $\mathbf{w}^{t}$. 
	%
	
	(3) \textit{Global model aggregation}: After completion of the local update process, each selected client shares the individual gradient update $\Delta_i=\mathbf{w}^{E}_i-\mathbf{w}^{0}_i$ with the server.
	The server then produces an updated global model for the next round via: $\mathbf{w}^{t+1} = \mathbf{w}^{t} + \frac{1}{|\mathcal{P}^t|}\sum_{\mathcal{C}_i\in\mathcal{P}^t}  \Delta_i$, where $|\mathcal{P}^t|$ denotes the number of selected clients.

	\subsection{Differential Privacy}
	Differential privacy (DP) is a rigorous definition for formalizing the privacy guarantees of algorithms that process sensitive data. 
	We say that two datasets $\mathcal{D},\mathcal{D}'\in\mathcal{X}^n$ are neighboring datasets if they differ in one data record. 
	The formal definition of DP is as follows.
	
	\begin{defn}
		\textit{\textbf{(Differential Privacy - ($\varepsilon,\delta$)-DP \cite{DworkR14}).}}
		\emph{
			A randomized mechanism $\mathcal{M}$: $\mathcal{X}^n\rightarrow \mathcal{Y}$ satisfies $(\varepsilon,\delta)$-DP if for any two neighboring datasets $\mathcal{D},\mathcal{D}^{\prime} \in \mathcal{X}^n$ and for any subset of outputs $\mathcal{S}\subseteq\mathcal{Y}$ it holds that $\operatorname{Pr}[\mathcal{M}(\mathcal{D}) \in \mathcal{S}] \leq e^{\varepsilon} \operatorname{Pr}\left[\mathcal{M}\left(\mathcal{D}^{\prime}\right) \in \mathcal{S}\right]+\delta$.
		}
	\end{defn}
	
	\noindent Here, the parameter $\varepsilon$ denotes the privacy budget, where a smaller value results in stronger privacy guarantees, but with a corresponding reduction in data utility. 
	The parameter $\delta$ denotes the probability that privacy is violated and is typically selected to be very small. 
	The mechanism is said to satisfy $\varepsilon$-DP or pure DP when $\delta = 0$.
	
	In this paper, we also consider the notion of R\'enyi differential privacy (RDP) \cite{RDP} as a generalization of differential privacy, which allows for tight tracking of privacy loss over multiple iterations in differentially private learning algorithms. 
	
	\begin{defn}
		\textit{\textbf{(R\'enyi Differential Privacy - ($\alpha,\tau$)-RDP {\cite{RDP}}).}}
		\emph{
			A randomized mechanism $\mathcal{M}$: $\mathcal{X}^n\rightarrow \mathcal{Y}$ satisfies $(\alpha,\tau)$-RDP if for any two neighboring datasets $\mathcal{D},\mathcal{D}^{\prime}  \in \mathcal{X}^n$, we have 
			$
			D_\alpha\left(\mathcal{M}(\mathcal{D})\|\mathcal{M}\left(\mathcal{D}^{\prime}\right)\right) \leq \tau,
			$ 
			where $D_\alpha(P\| Q)$ is the R\'enyi divergence between two probability distributions P and Q and is given by
			$$
			D_\alpha(P\| Q) \triangleq \frac{1}{\alpha-1} \log \left(\mathbb{E}_{x \sim Q}\left[\left(\frac{P(x)}{Q(x)}\right)^\alpha\right]\right).
			$$
		}
		
	\end{defn}
	
	RDP offers a key advantage over the notion of ($\varepsilon,\delta$)-DP due to its composition property, which is highlighted as follows.
	
	\begin{lem}\textit{\textbf{(Adaptive Composition of RDP \cite{RDP}).}}
		\label{lem:sequantial}
		Let $\mathcal{M}_1: \mathcal{D}\rightarrow\mathcal{R}_1$ be a mechanism satisfying ($\alpha,\tau_1$)-RDP and $\mathcal{M}_2: \mathcal{D}\times\mathcal{R}_1\rightarrow\mathcal{R}_2$ be a mechanism satisfying ($\alpha,\tau_2$)-RDP. Define their combination $\mathcal{M}_{1,2}:\mathcal{D}\rightarrow\mathcal{R}_2$ by $\mathcal{M}_{1,2}(\mathcal{D})=\mathcal{M}_2(\mathcal{D},\mathcal{M}_1(\mathcal{D}))$. Then $\mathcal{M}_{1,2}$ satisfies ($\alpha, \tau_1+\tau_2$)-RDP.
	\end{lem}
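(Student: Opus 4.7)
The plan is to bound the Rényi divergence between the joint laws of the intermediate and final outputs on neighboring inputs, and then pass to the output alone via post-processing. Fix neighboring datasets $\mathcal{D}, \mathcal{D}'$ and let $P$ and $Q$ denote the joint distributions of $\bigl(\mathcal{M}_1(\mathcal{D}), \mathcal{M}_2(\mathcal{D}, \mathcal{M}_1(\mathcal{D}))\bigr)$ and $\bigl(\mathcal{M}_1(\mathcal{D}'), \mathcal{M}_2(\mathcal{D}', \mathcal{M}_1(\mathcal{D}'))\bigr)$ on $\mathcal{R}_1 \times \mathcal{R}_2$. The marginals of $P$ and $Q$ on $\mathcal{R}_2$ are precisely $\mathcal{M}_{1,2}(\mathcal{D})$ and $\mathcal{M}_{1,2}(\mathcal{D}')$, so by the data-processing inequality for Rényi divergence (projection onto the second coordinate is a post-processing) it suffices to establish $D_\alpha(P\|Q) \le \tau_1 + \tau_2$.

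The main computational step is the chain-rule decomposition of Rényi divergence. Writing $P(r_1,r_2) = P_1(r_1)\,P_{2\mid 1}(r_2\mid r_1)$ and similarly for $Q$, and unfolding the definition of $D_\alpha$, I would obtain
\[
  D_\alpha(P\|Q) = \tfrac{1}{\alpha-1}\log \sum_{r_1} Q_1(r_1)\Big(\tfrac{P_1(r_1)}{Q_1(r_1)}\Big)^{\!\alpha} \exp\!\Bigl((\alpha-1)\, D_\alpha\!\bigl(P_{2\mid 1}(\cdot\mid r_1)\,\big\|\,Q_{2\mid 1}(\cdot\mid r_1)\bigr)\Bigr).
\]
For each fixed $r_1$, the conditional laws $P_{2\mid 1}(\cdot\mid r_1)$ and $Q_{2\mid 1}(\cdot\mid r_1)$ are exactly the distributions of $\mathcal{M}_2(\mathcal{D}, r_1)$ and $\mathcal{M}_2(\mathcal{D}', r_1)$, so the $(\alpha,\tau_2)$-RDP hypothesis on $\mathcal{M}_2$ bounds the inner divergence by $\tau_2$ uniformly in $r_1$. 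Pulling the resulting factor $e^{(\alpha-1)\tau_2}$ out of the outer sum and taking logarithms yields $D_\alpha(P\|Q) \le \tau_2 + D_\alpha(P_1 \| Q_1)$, and the $(\alpha,\tau_1)$-RDP guarantee on $\mathcal{M}_1$ then bounds the remaining marginal divergence by $\tau_1$, giving the claim.

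The only genuine subtlety — and the step I would write out most carefully — is justifying the \emph{conditional} application of $\mathcal{M}_2$'s RDP guarantee on each realization $r_1$. Formally, the hypothesis provides that for every fixed auxiliary input $r_1$ and every neighboring pair $(\mathcal{D}, \mathcal{D}')$, $D_\alpha\bigl(\mathcal{M}_2(\mathcal{D}, r_1)\,\big\|\,\mathcal{M}_2(\mathcal{D}', r_1)\bigr) \le \tau_2$; this coincides with the conditional divergence above because conditioning on $\mathcal{M}_1 = r_1$ fixes the auxiliary input while leaving the independent dataset-dependent randomness of $\mathcal{M}_2$ intact. Beyond this bookkeeping, the remaining work is routine algebra on the definition of $D_\alpha$, so I do not anticipate any serious obstacle.
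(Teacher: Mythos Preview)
The paper does not supply its own proof of this lemma: it is stated as a background result with a citation to Mironov's R\'enyi DP paper, so there is no in-paper argument to compare against. Your proposal is the standard correct proof (joint-distribution bound via the chain rule for $D_\alpha$, uniform conditional bound from the $\mathcal{M}_2$ hypothesis, then data-processing to pass to the $\mathcal{R}_2$-marginal), matching the original reference; nothing further is needed.
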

	
	Since RDP is a generalization of DP, it can be easily converted back to ($\varepsilon,\delta$)-DP using the standard conversion lemma as follows.
	
	\begin{lem}\textit{\textbf{(From RDP to DP \cite{wang2019subsampled}).}}
		\label{lem:rdp_to_dp}
		If a randomized mechanism $\mathcal{M}$ is ($\alpha,\tau$)-RDP, then the mechanism is also ($\varepsilon,\delta$)-DP, where $\varepsilon=( \tau+\frac{\log (1 / \delta)}{\alpha-1})$ for a given $\delta\in(0,1)$. 
	\end{lem}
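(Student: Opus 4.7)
The plan is to derive the $(\varepsilon,\delta)$-DP guarantee from the $(\alpha,\tau)$-RDP assumption by analyzing the \emph{privacy loss random variable} and splitting the output space into a ``typical'' region where the privacy loss is small and a ``tail'' region whose probability can be absorbed into $\delta$. Concretely, fix neighboring datasets $\mathcal{D},\mathcal{D}'$ and let $P,Q$ denote the densities (or pmfs) of $\mathcal{M}(\mathcal{D})$ and $\mathcal{M}(\mathcal{D}')$ respectively. Define the privacy loss $Z(x)\triangleq \log\bigl(P(x)/Q(x)\bigr)$. My first step is to rewrite the RDP hypothesis as a moment bound on $e^{(\alpha-1)Z}$ under $P$: a change of measure from $Q$ to $P$ gives
\begin{equation*}
\mathbb{E}_{x\sim P}\!\left[e^{(\alpha-1)Z(x)}\right]
=\mathbb{E}_{x\sim Q}\!\left[\left(\tfrac{P(x)}{Q(x)}\right)^{\alpha}\right]
\leq e^{(\alpha-1)\tau},
\end{equation*}
where the inequality follows by unrolling the definition of R\'enyi divergence and applying $(\alpha,\tau)$-RDP.

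Next, for any measurable output set $\mathcal{S}\subseteq\mathcal{Y}$ I would decompose
\begin{equation*}
\Pr[\mathcal{M}(\mathcal{D})\in\mathcal{S}]
=\Pr[\mathcal{M}(\mathcal{D})\in\mathcal{S},\ Z\le\varepsilon]
+\Pr[\mathcal{M}(\mathcal{D})\in\mathcal{S},\ Z>\varepsilon].
\end{equation*}
On the event $\{Z\le\varepsilon\}$ we have $P(x)\le e^{\varepsilon}Q(x)$ pointwise, so the first term is bounded by $e^{\varepsilon}\Pr[\mathcal{M}(\mathcal{D}')\in\mathcal{S}]$, matching the multiplicative part of the DP definition. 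For the tail term, I apply Markov's inequality to the nonnegative random variable $e^{(\alpha-1)Z}$ under $P$:
\begin{equation*}
\Pr_{x\sim P}[Z>\varepsilon]
=\Pr_{x\sim P}\!\left[e^{(\alpha-1)Z}>e^{(\alpha-1)\varepsilon}\right]
\leq e^{-(\alpha-1)\varepsilon}\,\mathbb{E}_{x\sim P}\!\left[e^{(\alpha-1)Z}\right]
\leq e^{(\alpha-1)(\tau-\varepsilon)}.
\end{equation*}

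Finally, I would set the tail bound equal to $\delta$, i.e.\ solve $e^{(\alpha-1)(\tau-\varepsilon)}=\delta$, which yields exactly $\varepsilon=\tau+\tfrac{\log(1/\delta)}{\alpha-1}$. Combining the two contributions gives $\Pr[\mathcal{M}(\mathcal{D})\in\mathcal{S}]\le e^{\varepsilon}\Pr[\mathcal{M}(\mathcal{D}')\in\mathcal{S}]+\delta$, which is precisely $(\varepsilon,\delta)$-DP. The only delicate step is the change-of-measure identity that converts the R\'enyi-divergence expectation (taken under $Q$) into a moment of the privacy loss under $P$; once that identity is in place, the Markov tail bound and the thresholding argument are routine. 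Care must also be taken that the bound holds symmetrically in $\mathcal{D},\mathcal{D}'$, which follows because the RDP hypothesis is stated for every ordered pair of neighbors.
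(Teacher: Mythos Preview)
Your argument is correct and is precisely the standard proof of this conversion lemma (change of measure to obtain the moment bound $\mathbb{E}_{P}[e^{(\alpha-1)Z}]\le e^{(\alpha-1)\tau}$, then Markov on the privacy-loss tail). Note, however, that the paper does not supply its own proof of this lemma at all: it is stated in the preliminaries as a known result and attributed to \cite{wang2019subsampled}, so there is no in-paper proof to compare against; your write-up simply fills in the textbook derivation that the authors chose to cite rather than reproduce.
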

	
	A basic approach to achieving RDP is through the \textit{Gaussian mechanism}, which adds calibrated noise from a Gaussian distribution to the output of the data analytics.
	
	\begin{lem}
		\label{thm:gaussian}
		\textit{\textbf{(Gaussian Mechanism{\cite{RDP}}).} 
			Let $f:\mathcal{X}^n\rightarrow \mathbb{R}^d$ be a non-private algorithm taking a dataset as input and outputting a $d$-dimensional vector. 
			For neighboring datasets $\mathcal{D}$ and $\mathcal{D}^{\prime}$, define $f$'s sensitivity as 
			$\Delta_f = \max _{\mathcal{D}, \mathcal{D}^{\prime}} \left\|f(\mathcal{D}) - f\left(\mathcal{D}^{\prime}\right)\right\|_2$.
			The Gaussian mechanism satisfying $(\alpha,\alpha/2\sigma^2)$-RDP is of the form $\mathcal{M}(\mathcal{D})=f(\mathcal{D})+\mathcal{N}\left(0,\sigma^2\Delta_f^2\mathbf{I}_d\right).$ 
		}
	\end{lem}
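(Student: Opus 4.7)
The plan is to directly compute the Rényi divergence of order $\alpha$ between the two output distributions of the mechanism on neighboring datasets, show that the result is exactly $\alpha\|f(\mathcal{D})-f(\mathcal{D}')\|_2^2/(2\sigma^2\Delta_f^2)$, and then apply the sensitivity bound. Since both $\mathcal{M}(\mathcal{D})$ and $\mathcal{M}(\mathcal{D}')$ are spherical Gaussians with the same covariance $\sigma^2\Delta_f^2\mathbf{I}_d$ but different means $\mu=f(\mathcal{D})$ and $\nu=f(\mathcal{D}')$, this reduces to the well-known closed-form Rényi divergence between two equivariance Gaussians.

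First I would write the densities $p(x),q(x)$ of $\mathcal{N}(\mu,\sigma^2\Delta_f^2\mathbf{I}_d)$ and $\mathcal{N}(\nu,\sigma^2\Delta_f^2\mathbf{I}_d)$, form the log-ratio, and observe that
\[
\log\frac{p(x)}{q(x)}=\frac{1}{2\sigma^2\Delta_f^2}\bigl(\|x-\nu\|_2^2-\|x-\mu\|_2^2\bigr)
=\frac{1}{2\sigma^2\Delta_f^2}\bigl(2x^\top(\mu-\nu)-\|\mu\|_2^2+\|\nu\|_2^2\bigr),
\]
so that $(p(x)/q(x))^\alpha$ is the exponential of a linear functional of $x$ plus a constant. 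Second, I would take expectation under $x\sim Q=\mathcal{N}(\nu,\sigma^2\Delta_f^2\mathbf{I}_d)$ using the Gaussian moment generating identity $\mathbb{E}_Q[\exp(c^\top x)]=\exp(c^\top\nu+\tfrac{1}{2}c^\top\Sigma c)$ with $c=\alpha(\mu-\nu)/(\sigma^2\Delta_f^2)$. After some routine cancellation (the cross term $2(\mu-\nu)^\top\nu$ combines with $-\|\mu\|_2^2+\|\nu\|_2^2$ to give $-\|\mu-\nu\|_2^2$), I would obtain
\[
\mathbb{E}_Q\Bigl[\bigl(p(x)/q(x)\bigr)^\alpha\Bigr]=\exp\!\left(\frac{\alpha(\alpha-1)\,\|\mu-\nu\|_2^2}{2\sigma^2\Delta_f^2}\right).
\]
Taking $\tfrac{1}{\alpha-1}\log(\cdot)$ yields $D_\alpha(P\|Q)=\alpha\|\mu-\nu\|_2^2/(2\sigma^2\Delta_f^2)$.

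Third, I would apply the $L_2$-sensitivity assumption $\|\mu-\nu\|_2=\|f(\mathcal{D})-f(\mathcal{D}')\|_2\leq \Delta_f$ to conclude $D_\alpha(\mathcal{M}(\mathcal{D})\|\mathcal{M}(\mathcal{D}'))\leq \alpha/(2\sigma^2)$, matching the definition of $(\alpha,\alpha/(2\sigma^2))$-RDP. By the symmetry of the role of $\mathcal{D}$ and $\mathcal{D}'$, the same bound holds for the reversed direction.

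The only mildly delicate step is the algebraic simplification after the Gaussian expectation, where several terms involving $\|\mu\|_2^2,\|\nu\|_2^2$ and $\mu^\top\nu$ must combine into the clean expression $-\|\mu-\nu\|_2^2$ so as to cancel the first-order term and leave only $\alpha(\alpha-1)\|\mu-\nu\|_2^2/(2\sigma^2\Delta_f^2)$. Once that factorization is pinned down, the $1/(\alpha-1)$ in the definition of Rényi divergence cancels the $(\alpha-1)$ factor and the bound follows immediately; there are no convergence or integrability subtleties because both densities are Gaussian with identical, non-degenerate covariances.
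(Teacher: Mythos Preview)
Your proof is correct and is essentially the standard derivation of the RDP bound for the Gaussian mechanism. Note, however, that the paper does not actually supply its own proof of this lemma: it is stated as a known result imported from the cited reference on R\'enyi differential privacy, so there is no paper-side argument to compare against beyond observing that your computation is the one found in that reference.
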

	
	\revise{
		In the context of FL, DP can be categorized into record-level DP and client-level DP, depending on how neighboring datasets are defined. 
		In this paper, following existing works \cite{McMahanRT018, KerkoucheUAI21,KerkoucheEuroSP21,olive,HuGG24,CVPR22,CVPR23}, we focus on client-level DP,  where two datasets $\mathcal{D}$ and $\mathcal{D}^{\prime}$ are said to be neighboring if $\mathcal{D}^{\prime}$ can be formed by adding or removing all data records associated with a single client from $\mathcal{D}$. 
	}
	
	\subsection{Replicated Secret Sharing}
	\label{sec:background:Replicated-Secret-Sharing}
	Given a private value $x\in\mathbb{Z}_p$, replicated secret sharing (RSS) \cite{RSS} works by splitting it into three secret shares $\langle x \rangle_0$, $\langle x \rangle_1$ and $\langle x \rangle_2\in\mathbb{Z}_p$ such that $x = \langle x \rangle_0+\langle x \rangle_1+\langle x \rangle_2\: \bmod p$, where $p$ is a prime number. 
	The shares are then distributed among three parties, denoted by $\mathcal{P}_0$, $\mathcal{P}_1$ and $\mathcal{P}_2$, where each party holds a pair of shares.
	In particular, $\mathcal{P}_0$ holds $(\langle x \rangle_0,\langle x \rangle_1)$, $\mathcal{P}_1$ holds $(\langle x \rangle_1,\langle x \rangle_2)$, and $\mathcal{P}_2$ holds $(\langle x \rangle_0,\langle x \rangle_2)$.
	%
	%
	For simplicity, we write $i\pm1$ to represent the next ($+$) party (or secret share) or the previous ($-$) party (or secret share) with wrap around. 
	For example, $\mathcal{P}_{2+1}$ (or $\langle x \rangle_{2+1}$) refers to $\mathcal{P}_0$ (or $\langle x \rangle_0$) and $\mathcal{P}_{0-1}$ (or $\langle x \rangle_{0-1}$) refers to $\mathcal{P}_{2}$ (or $\langle x\rangle_{2}$).
	In this way, the secret shares held by $\mathcal{P}_{i}$ can be represented as $(\langle x \rangle_i,\langle x \rangle_{i+1})$ for $i\in\{0,1,2\}$. 
	We use $\llbracket x \rrbracket$ to denote such way of secret sharing of $x$. 
	Unless otherwise stated, we omit ($\bmod$ $p$) for brevity in the subsequent description of RSS-based operations.

	The basic operations related to RSS are as follows. 
	(1) \textit{Reconstruction}. To reconstruct ($\mathsf{Rec(\cdot)}$) $x$ from $\llbracket x \rrbracket$, $\mathcal{P}_i$ sends $\langle x \rangle_{i+1}$ to $\mathcal{P}_{i-1}$, which reconstructs $x = \langle x \rangle_0 + \langle x \rangle_1 + \langle x \rangle_2$ for $i\in\{0,1,2\}$. 
	(2) \textit{Addition/subtraction}. Addition/subtraction of secret-shared values is completed by $\mathcal{P}_i$ locally for $i\in\{0,1,2\}$: To securely compute $\llbracket z \rrbracket = \llbracket x\pm y \rrbracket$, each party $\mathcal{P}_i$ locally computes $\langle z \rangle_i = \langle x \rangle_i\pm\langle y \rangle_i$ and $\langle z \rangle_{i+1} = \langle x \rangle_{i+1}\pm\langle y \rangle_{i+1}$. 
	(3) \textit{Multiplication}. 
	To securely compute $\llbracket z \rrbracket = \llbracket x \cdot y \rrbracket$, $\mathcal{P}_i$ first locally computes $\langle z \rangle_i = \langle x \rangle_i\cdot\langle y \rangle_i + \langle x \rangle_i\cdot\langle y \rangle_{i+1} + \langle x \rangle_{i+1}\cdot\langle y \rangle_i$. 
	This local computation produces a 3-out-of-3 additive secret sharing of $z$ among the three parties, i.e., each party $\mathcal{P}_i$ only holds $\langle z \rangle_i$.
	To generate an RSS of $z$ for subsequent computations, a \textit{re-sharing} operation can be carried out as follows: $\mathcal{P}_i$ sends $\langle z^\prime \rangle_i = \langle z \rangle_i + \langle \alpha \rangle_i$ to $\mathcal{P}_{i-1}$, where $\langle \alpha \rangle_i$ is a share from a 3-out-of-3 additive secret sharing of zero, i.e., $\langle \alpha \rangle_0 + \langle \alpha \rangle_1 + \langle \alpha \rangle_2 = 0$. 
	Such a secret sharing of zero can be generated in an offline phase via pseudorandom functions (PRFs) \cite{RSS}. 
	
	\subsection{Permutation}
	\label{sec:preliminary:permutation}
	%
	A permutation $\pi$ is a rearrangement of $n$ elements in a definite order. 
	In this paper, $\pi$ is explicitly represented as an ordered tuple whose elements are distinct
	from each other and belong to $[n]$, where $[n]$ denotes the set $\{0,\cdots,n-1\}$ for $n\in\mathbb{N}$.
	For an $n$-dimensional vector $\boldsymbol{v}$, applying $\pi$ to $\boldsymbol{v}$ rearranges its elements such that the element at position $i$ is moved to position $\pi(i)$. 
	Here $\pi(i)$ denotes the destination index of the $i$-th element of $\boldsymbol{v}$. 
	The resulting permuted vector is denoted as $\pi(\boldsymbol{v})$. 
	%
	%
	%
	%
	The composition of two permutations $\pi_1$ and $\pi_2$ is denoted by $\pi_1 \circ \pi_2$, which satisfies $\pi_1 \circ \pi_2(\boldsymbol{v}) = \pi_1(\pi_2(\boldsymbol{v}))$. 
	The inverse of a permutation $\pi$ is denoted as $\pi^{-1}$, and it satisfies $\pi \circ \pi^{-1}(\boldsymbol{v}) = \pi^{-1} \circ \pi(\boldsymbol{v}) = \boldsymbol{v}$.
	In this paper, we also work with permutations that are secret-shared across the parties, in a way similar to RSS. Specifically, we use $\llangle \pi \rrangle$ to denote a sharing of permutation $\pi$. 
	Let $\pi=\pi_0\circ\pi_1\circ\pi_2$, where $\pi_0,\pi_1$ are random permutations, and $\pi_2 = \pi_1^{-1}\circ\pi_0^{-1}\circ\pi$.
	Then $\pi$ is secret-shared across the parties in a replicated way: $\mathcal{P}_i$ holds the pair $(\pi_i,\pi_{i+1})$ for $i\in\{0,1,2\}$. 
	
	\section{Problem Statement}
	
	\subsection{System Model}
	\label{sec:problem}
	%
	{\main} is aimed at the realization of sparsification-enabled secure and differentially private federated learning.
	There are two types of actors in {\main}: clients and servers.
	Clients hold their datasets locally and want to collaboratively train a model, under the coordination of the servers.
	The design of {\main} considers a distributed trust setup where three servers from different trust domains collaborate in providing the FL service for the clients. 
	Such distributed trust model has also gained adoption in existing works on secure FL \cite{ELSA, GehlharM0SWY23, tang2024flexible, camel}, as well as in other secure systems and applications \cite{prio,DautermanRPS22, NDSS22, RatheeZCP24, PLASMA, bruggemann2024don,he2024rhombus}.
	For ease of presentation, we denote the three servers $\mathcal{S}_0$, $\mathcal{S}_1$, and $\mathcal{S}_2$ collectively as $\mathcal{S}_{\{0,1,2\}}$. 
	
	At a high level, in each training round of {\main}, clients train their local models using their private datasets and then \textit{securely} share the resulting gradient updates with $\mathcal{S}_{\{0,1,2\}}$. 
	Then $\mathcal{S}_{\{0,1,2\}}$ aggregate the received gradient updates to refine the global model. 
	To reduce client-server communication costs, we let each client first sparsify its gradient update (treated as a vector) using the top-$k$ sparsification mechanism before sharing it to the servers. 
	To securely and efficiently aggregate the top-$k$ sparsified gradient updates, we introduce a communication-efficient secure sparse vector aggregation mechanism. 
	In addition to preserving the privacy of individual sparsified gradient updates, {\main} also enforces lightweight integrity checks on the server-side aggregation process.  
	Meanwhile, to ensure that the aggregated gradient update satisfies DP, we leverage the strategy of lightweight distributed noise generation  \cite{kairouz21,Canonne0S20,google2020secure,PETS25} to enable servers to securely and efficiently inject DP noise into the aggregated gradient update, and devise a lightweight verification mechanism to ensure the correctness of the injected DP noise.

	\subsection{Threat Model and Security Guarantees}
	\label{sec:threat_model}
	
	Similar to prior works under the three-server setting \cite{NDSS22,PLASMA,camel,AsharovHIKNPTT22}, we consider a non-colluding and honest-majority threat model against the servers. 
	Specifically, we assume that each server of $\mathcal{S}_{\{0,1,2\}}$ may individually try to deduce private information during the protocol execution. 
	At most one of the three servers may maliciously deviate from our protocol specification. 
	{\main} aims to ensure that the servers learn no information about individual gradient updates during aggregation. 
	What the servers observe in each training round is a noisy aggregate that satisfies DP (which will be validated through formal analysis). 
	\revise{
		{\main} detects and outputs abort if the noise is heavily deviating/not drawn from the target distribution, or if a malicious server tampers with the integrity of other server-side computations. 
	}
	It is noted that similar to prior works on privacy-aware FL \cite{kairouz21,AgarwalKL21,ChenCKS22,camel}, {\main} focuses on the threats primarily from the servers and provides privacy protection for the clients. 
	Defense against adversarial attacks from clients is an orthogonal line of research and is out of the scope of this work. 
	
	\section{Communication-Efficient Secure Sparse Vector Aggregation}\label{sec:sparvecagg}
	
	In this section, we introduce \textsf{SparVecAgg}, a customized mechanism for communication-efficient secure sparse vector aggregation, which will serve as the basis of {\main} for securely and efficiently harnessing top-$k$ gradient sparsification in FL.

	\subsection{Design Rationale}
	\label{sec:sparvecagg:design_rationale}
	
	The problem of secure sparse vector aggregation is described as follows. 
	Each client $\mathcal{C}_i$ holds a $d$-dimensional sparse vector $\boldsymbol{x}_i$ as defined by:
		\begin{equation}
			\label{eq:top-k}
			\boldsymbol{x}_i[j] = 
			\begin{cases} 
				\boldsymbol{x}_i[j], & \text{if } j \in \mathcal{I}_i, \\ 
				0, & \text{otherwise},
			\end{cases}
		\end{equation}
	where $\boldsymbol{x}_i[j]$ denotes the $j$-th element in $\boldsymbol{x}_i$, only $k$ elements in $\boldsymbol{x}_i$ are non-zero values, and $\mathcal{I}_i\subseteq [d]$ is a set that contains the indices associated with them. 
	%
	The density of the sparse vector $\lambda$ is defined as $\lambda=k/d$. 
	To aggregate the sparse vectors across $n$ clients and produce $\boldsymbol{x} = \sum_{i\in[n]} \boldsymbol{x}_i$ (a dense vector), it is required that each client's input $\boldsymbol{x}_i$ (with respect to the non-zero values and their positions) should not be disclosed for privacy protection.
	In the meantime, for communication efficiency, it is desirable for the aggregation process to be sparsity-aware, meaning that each client's communication cost ideally should depend on the number $k$ of non-zero values, rather than the vector size $d$ which could be quite large in some applications like FL.
	
	\noindent\textbf{Strawman approach.} A strawman solution is to directly secret-share the individual vectors' non-zero elements and corresponding indices, and then rely on distributed ORAM \cite{ORAM1,ORAM2}---which can support oblivious write over secret-shared data---to aggregate the secret-shared index-value pairs. 
	Specifically, the servers collectively maintain a secret-shared vector. 
	For each pair of index and value in secret-shared form, the servers perform oblivious write on that vector to obliviously write the value to the target index. 
	However, as shown in Section \ref{sec:experiments}, this strawman approach is highly inefficient for supporting sparsity-aware oblivious aggregation in FL, due to the expensive oblivious write operation being performed over the large-sized vector for every index-value pair at the server side.
	
	\begin{table}[!t]
		\centering
		\caption{Comparison of per-vector client communication cost between different approaches for oblivious sparse vector aggregation (with semi-honest security). Here, $|\mathbb{Z}_p|$ denotes the bit length of a field element, $|s|$ is the length of a small-sized random seed (e.g., 128 bits), and $k \ll d$.}
		\label{tab:client-server-communication-cost}
		\begin{tabular}{cc}
			\hline
			& Client Comm. Cost (bits) \\ \hline
			\textsf{SparVecAgg} (w/o compression)  &  $6d |\mathbb{Z}_p| + 6k|\mathbb{Z}_p|$\\ 
			\textsf{SparVecAgg} (w/ compression) & $8k |\mathbb{Z}_p| + 4|s|$ \\ 
			Strawman approach & $12k|\mathbb{Z}_p|$ \\ \hline
		\end{tabular}
		\vspace{-9pt}
	\end{table}
	
	\noindent\textbf{Our insight.} We present a customized design that significantly outperforms the strawman solution, in terms of client-server communication, inter-server communication, as well as server-side computation. Our first insight is to re-organize a sparse vector so that the sparse vector can be appropriately encoded by the $k$ non-zero values and a specific permutation.
	For example, if $\mathcal{C}_i$ re-arranges the elements in $\boldsymbol{x}_i$ by placing the non-zero elements at the beginning of $\boldsymbol{x}_i$, this will produce a new vector $\boldsymbol{x}'_i$ and a corresponding permutation $\pi_i$ such that $\boldsymbol{x}_i = \pi_i(\boldsymbol{x}'_i)$.
	The first $k$ values in $\boldsymbol{x}'_i$ and the permutation $\pi_i$ are then \emph{appropriately} secret-shared among the servers.
	With the customized encoding on sparse vectors, the problem is then transformed to oblivious application of the permutation $\pi_i$ to $\boldsymbol{x}'_i$ so that $\llbracket \boldsymbol{x}_i \rrbracket = \llbracket \pi_i(\boldsymbol{x}'_i) \rrbracket$ can be produced for secure aggregation/summation, without the servers learning $\pi_i$, the first $k$ values in $\boldsymbol{x}'_i$, and $\boldsymbol{x}_i$.
	We make an observation that this can be achieved by lightweight secret-shared shuffle techniques \cite{AsharovHIKNPTT22,Araki0OPRT21}, which can securely shuffle data in the secret sharing domain, without revealing the underlying data or permutation.
	We resort to the protocol in \cite{AsharovHIKNPTT22}, which is the state-of-the-art secret-shared shuffle protocol working with RSS and well suits our purpose.
	
	It is noted that directly secret-sharing a permutation can also lead to $O(d)$ communication cost for each client.
	We introduce a mechanism for permutation compression, allowing the client's communication cost to be independent of $d$.
	As shown in Table~\ref{tab:client-server-communication-cost}, our customized design can even result in about \textasciitilde33\% reduction in client-server communication, compared to the way of directly secret-sharing indices and values as in the strawman approach.

	\subsection{Detailed Construction}
	\label{sec:sparvecagg:details}
	\noindent\textbf{Permutation-based encoding of sparse vectors.}
	%
	For $\mathcal{C}_i$'s sparse vector $\boldsymbol{x}_i$, we can follow Eq. \eqref{eq:re-order} to re-order $\boldsymbol{x}_i$ such that the non-zero elements are placed at the beginning, and the remaining positions are filled with zeros: 
		\begin{equation}
			\label{eq:re-order}
			\boldsymbol{x}_i^{\prime}[j] = 
			\begin{cases} 
				\boldsymbol{x}_i[{L_i(j)}], & \text{if } 0\leq j \leq k-1, \\ 
				0, & \text{if } k\leq j\leq d-1.
			\end{cases}
		\end{equation}
	%
	Here $L_i$ is a list that stores the indices of non-zero values in $\boldsymbol{x}_i$ in order with respect to their occurrence in $\boldsymbol{x}_i$ and $L_i(j)$ denotes the $j$-th element of $L_i$. 
	For example, for a sparse vector $\boldsymbol{x} = (0, x_1, 0, x_3, 0, x_5,$ $0,0,0,0)$, where the non-zero elements are $x_1, x_3, x_5$ (i.e., $L = [1, 3, 5]$ and $k=3$), then the re-ordered vector $\boldsymbol{x}^{\prime}$ is: $\boldsymbol{x}^{\prime} = (x_1, x_3, x_5, 0, 0,0,0,$ $0,0, 0)$. 
	Such re-ordering defines a permutation $\pi$ that maps the re-ordered vector $\boldsymbol{x}^{\prime}$ to the original vector $\boldsymbol{x}$, i.e., $\boldsymbol{x} = \pi (\boldsymbol{x}^{\prime})$.
	Let $E_i$ be a list that stores the indices of zero values in $\boldsymbol{x}_i$ in ascending order.
	In particular, the permutation $\pi_i$ for $\boldsymbol{x}_i$ is derived as follows:  
	%
		\begin{equation}
			\label{eq:comp_pi}
			\pi_i(j) = 
			\begin{cases} 
				L_i(j), & 0\leq j \leq k-1, \\ 
				E_i(j-k), & k\leq j\leq d-1.
			\end{cases}
		\end{equation}	
	With the above encoding, $\mathcal{C}_i$ can then secret-share the two components among $\mathcal{S}_{\{0,1,2\}}$ under RSS: (1) the first $k$ non-zero elements of the re-ordered vector $\boldsymbol{x}^\prime$ (denoted as $\boldsymbol{r}_i = (\boldsymbol{x}_i^\prime[0], \dots, \boldsymbol{x}_i^\prime[k-1])$), and (2) the permutation $\pi_i$ (constructed via Eq.~\eqref{eq:comp_pi}).
	For the $k$ non-zero values, $\mathcal{C}_i$ can just generate and distribute the RSS shares $(\langle \boldsymbol{r}_i \rangle_j, \langle \boldsymbol{r}_i \rangle_{j+1})$ to $\mathcal{S}_j$ for $j \in \{0, 1, 2\}$. 
	Regarding $\pi_i$, its three shares can be generated as follows.
	$\mathcal{C}_i$ first locally samples two random permutations $\pi_{i,0},\pi_{i,1}$, and then computes $\pi_{i,2} = \pi_{i,1}^{-1}\circ\pi_{i,0}^{-1}\circ\pi_i$, which ensures that $\pi_i=\pi_{i,0}\circ\pi_{i,1}\circ\pi_{i,2}$.
	After generating the shares of the permutation $\pi_i$, we then need to consider how to appropriately distribute them to the three servers $\mathcal{S}_{\{0,1,2\}}$.
	
	Directly distributing $(\pi_{i,j}, \pi_{i,j+1})$ to server $\mathcal{S}_j$ for $j \in \{0, 1, 2\}$ would be highly inefficient and incur much more client-server communication overhead than the strawman approach, as demonstrated in Table~\ref{tab:client-server-communication-cost}.
	Specifically, with directly distributing the shares of the non-zero values and permutation, each client would have a communication cost of $6d |\mathbb{Z}_p| + 6k|\mathbb{Z}_p|$ bits ($6k|\mathbb{Z}_p|$ for sharing non-zero elements and $6d|\mathbb{Z}_p|$ for sharing permutations), where $|\mathbb{Z}_p|$ is the bit length of representing an element in the field. 
	In contrast, in the ORAM-based strawman approach, each client only has a communication cost of $12k|\mathbb{Z}_p|$ bits ($6k|\mathbb{Z}_p|$ for sharing non-zero elements and $6k|\mathbb{Z}_p|$ for sharing indices). 
	This raises the problem regarding how to distribute the permutation shares to the servers while retaining communication efficiency.

	\begin{figure}[!t]
		\centering
		\includegraphics[scale=0.16]{./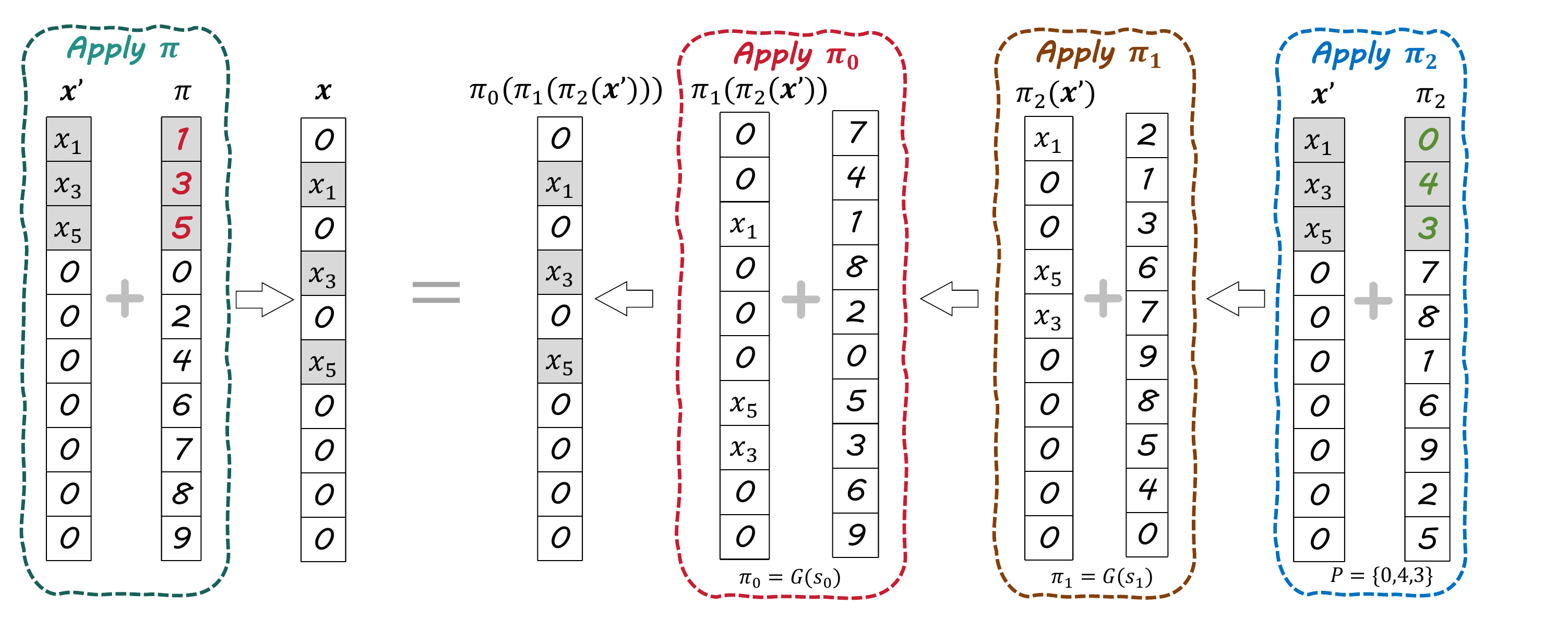}
		\caption{An intuitive example of decomposing the process of applying $\pi$ on $\boldsymbol{x}^{\prime}$ into applying $\pi_2,\pi_1,\pi_0$ sequentially on $\boldsymbol{x}^{\prime}$, where $\pi_1,\pi_0$ are random permutations and $\pi_2 = \pi_1^{-1}\circ\pi_0^{-1}\circ\pi$. }
		\label{fig:decomp_perm}
	\end{figure}
	
	\noindent \textbf{Compressing the permutation for communication efficiency.} 
	Our key idea is to compress the permutation shares $\pi_{i,0}, \pi_{i,1}$ and $\pi_{i,2}$ for $\mathcal{C}_i$.
	Through our compression mechanism, the communication cost for each client can be reduced to $8k |\mathbb{Z}_p| + 4|s|$ bits (where $|s|$ is the length of a small-sized random seed (e.g., 128 bits)), outperforming the strawman approach (about \textasciitilde33\% reduction).
	Our mechanism for the compression of a permutation $\pi_i$ works as follows:

	\begin{itemize}
		\item \textit{Permutation Compression by PRG}: Recall that the two permutation shares $\pi_{i,0}$ and $\pi_{i,1}$ are locally sampled random permutations. 
		So we can compress them into small-sized seeds $\mathsf{s}_{i,0},\mathsf{s}_{i,1}$ such that $\pi_{i,0} \leftarrow G(\mathsf{s}_{i,0}), \pi_{i,1} \leftarrow G(\mathsf{s}_{i,1})$ using a PRG $G$. 
		This allows the servers to reconstruct $\pi_{i,0}$ and $\pi_{i,1}$ using the same seeds and PRG. 
		To facilitate the subsequent presentation, we use $\pi'_{i,0} ,\pi'_{i,1}$ to denote the server-side reconstructed permutations for $\pi_{i,0}$ and $\pi_{i,1}$, respectively. 
		
		\item \textit{Permutation Compression by Distillation}: We make an observation that the third permutation share, i.e., $\pi_{i,2}$, actually only needs to encode information about the mapping of the first $k$ non-zero elements of $\boldsymbol{x}_i^{\prime}$, while the remaining $d-k$ elements of $\pi_{i,2}$ are indeed not useful. 
		Fig.~\ref{fig:decomp_perm} illustrates the decomposition for a generic permutation $\pi$, analogous to each client $\mathcal{C}_i$'s $\pi_i$. 
		The overall permutation $\pi$ is applied to $\boldsymbol{x}'$ sequentially as $\pi = \pi_0 \circ \pi_1 \circ \pi_2$.
		Indeed only the first $k$ elements of $\pi_2$ contribute to the re-arrangement of the first $k$ non-zero elements in $\boldsymbol{x}^{\prime}$, while the remaining $d-k$ elements merely shuffle zeros and thus carry no meaningful information. 
		Hence, our insight for compressing $\pi_{i,2}$ is to have $\mathcal{C}_i$ only upload the first $k$ elements in $\pi_{i,2}$ (denoted as a list $P_{i}$). 
		Note that since $\pi_i=\pi_{i,0}\circ\pi_{i,1}\circ\pi_{i,2}$, the randomly sampled $\pi_{i,0},\pi_{i,1}$ serve as the random masks of $\pi_i$, ensuring that $P_{i}$ does not reveal any private information about $\pi_i$. 
		Let $R_i$ be a list containing elements from the complement set $\{j \in [d] \mid j \notin P_{i}\}$ in ascending order. 
		Upon receiving $P_{i}$, the servers compute $\pi^{\prime}_{i,2}$ by 
			\begin{equation}
				\label{eq:comp_pi2_server}
				\pi^{\prime}_{i,2}(j) = 
				\begin{cases} 
					P_i(j), & 0\leq j \leq k-1, \\ 
					R_i(j-k), & k\leq j\leq d-1.
				\end{cases}
			\end{equation}
		\noindent Although the server-side reconstructed $\pi^{\prime}_{i,2}$ may not be exactly the same as the client-side original $\pi_{i,2}$, it guarantees $\pi^{\prime}_{i,2}(\boldsymbol{x}_i^{\prime}) = \pi_{i,2}(\boldsymbol{x}_i^{\prime})$.  
		This reduces the communication cost of transmitting $\pi_{i,2}$ from $d |\mathbb{Z}_p|$ bits to $k |\mathbb{Z}_p|$ bits.

	\end{itemize}
	
	Under the above compression mechanism, the correctness of reconstructing $\mathcal{C}_i$'s sparse vector at the server side is as follows: 
	{\small
		\begin{equation}
			\begin{aligned} 
				& \langle \pi'_{i,0}(\pi'_{i,1}(\pi'_{i,2}(\boldsymbol{x}'_i))) \rangle_0 + \langle \pi'_{i,0}(\pi'_{i,1}(\pi'_{i,2}(\boldsymbol{x}'_i))) \rangle_1 + \langle \pi'_{i,0}(\pi'_{i,1}(\pi'_{i,2}(\boldsymbol{x}'_i))) \rangle_2 \\ 
				& = \pi'_{i,0}\left(\pi'_{i,1}(\pi'_{i,2}(\boldsymbol{x}'_i))\right) \stackrel{(a)}{=} \pi_{i,0}\left(\pi_{i,1}(\pi'_{i,2}(\boldsymbol{x}'_i))\right) \stackrel{(b)}{=} \pi_{i,0}\left(\pi_{i,1}(\pi_{i,2}(\boldsymbol{x}'_i))\right) \\ 
				& = \pi_{i,0} \circ \pi_{i,1} \circ \pi_{i,2}(\boldsymbol{x}'_i) = \pi_i(\boldsymbol{x}'_i) = \boldsymbol{x}_i, 
			\end{aligned}
			\nonumber
			\vspace{-4pt}
		\end{equation}
	}
	where step (a) holds because $\pi'_{i,0} = \pi_{i,0}$ and $\pi'_{i,1} = \pi_{i,1}$ (both are generated from the same PRG seeds). 
	Step (b) follows because $P_{i}$ contains the destinations of the first $k$ elements of $\boldsymbol{x}'_i$ under $\pi_{i,2}$, ensuring $\pi'_{i,2}(\boldsymbol{x}'_i) = \pi_{i,2}(\boldsymbol{x}'_i)$.
	
	\noindent\textbf{Leveraging secret-shared shuffle.} 
	With server $\mathcal{S}_j$ ($j \in \{0, 1, 2\}$) holding the shares $(\langle \boldsymbol{r}_i \rangle_j, \langle \boldsymbol{r}_i \rangle_{j+1})$ of the non-zero values in $\boldsymbol{x}'_i$ and the reconstructed permutation shares $(\pi'_{i,j}, \pi'_{i,j+1})$, we then resort to the technique of secret-shared shuffle to achieve oblivious permutation of $\boldsymbol{x}'_i$ so as to produce $\llbracket \boldsymbol{x}_i \rrbracket = \llbracket \pi_i(\boldsymbol{x}'_i) \rrbracket$.
	This works as follows. 
	First, $\mathcal{S}_{\{0,1,2\}}$ pad $\llbracket \boldsymbol{r}_i\rrbracket$ with $d-k$ secret-shared zeros, yielding $\llbracket \boldsymbol{x}_i^{\prime}\rrbracket$. 
	This padding step can be accomplished by leveraging PRG-based techniques non-interactively in an offline phase \cite{RSS}. 
	Then, for $j = 1,0,2$, servers $\mathcal{S}_j$ and $\mathcal{S}_{j+1}$ locally apply $\pi^{\prime}_{i,j+1}$ to their respective shares, i.e., $(\langle\boldsymbol{x}_i^{\prime}\rangle_j, \langle\boldsymbol{x}_i^{\prime}\rangle_{j+1})$ and $(\langle\boldsymbol{x}_i^{\prime}\rangle_{j+1}, \langle\boldsymbol{x}_i^{\prime}\rangle_{j-1})$, and re-share the permuted results with the third party $\mathcal{S}_{j-1}$.
	The re-sharing is done in two steps: $\mathcal{S}_j$ and $\mathcal{S}_{j+1}$ first choose $\langle \alpha \rangle_0, \langle \alpha \rangle_1$ and $\langle \alpha \rangle_2$ such that $\langle \alpha \rangle_0+\langle \alpha \rangle_1+\langle \alpha \rangle_2=\mathbf{0}$ and set 
	\begin{equation}
		\notag
		\begin{aligned}
			&\pi^{\prime}_{i,j+1}(\langle\boldsymbol{x}_i^{\prime}\rangle_0) = \pi^{\prime}_{i,j+1}(\langle\boldsymbol{x}_i^{\prime}\rangle_0)+\langle \alpha \rangle_0,\\
			&\pi^{\prime}_{i,j+1}(\langle\boldsymbol{x}_i^{\prime}\rangle_1) = \pi^{\prime}_{i,j+1}(\langle\boldsymbol{x}_i^{\prime}\rangle_1)+\langle \alpha \rangle_1, \text{and}\\
			&\pi^{\prime}_{i,j+1}(\langle\boldsymbol{x}_i^{\prime}\rangle_2) = \pi^{\prime}_{i,j+1}(\langle\boldsymbol{x}_i^{\prime}\rangle_2)+\langle \alpha \rangle_2. 
		\end{aligned}
	\end{equation}
	Then, $\mathcal{S}_j$ sends $\pi^{\prime}_{i,j+1}(\langle\boldsymbol{x}_i^{\prime}\rangle_j)$ to $\mathcal{S}_{j-1}$ and $\mathcal{S}_{j+1}$ sends $\pi^{\prime}_{i,j+1}(\langle\boldsymbol{x}_i^{\prime}\rangle_{j-1})$ to $\mathcal{S}_{j-1}$. 
	The re-sharing procedure guarantees that the third party $\mathcal{S}_{j-1}$ receives random shares, ensuring that $\mathcal{S}_{j-1}$ learns no information about the permutation $\pi^{\prime}_{i,j+1}$. 
	In this way, no single server could view all three permutations $\pi^{\prime}_{i,0},\pi^{\prime}_{i,1},\pi^{\prime}_{i,2}$. 
	Therefore, the permutation $\pi_i$ remains unknown to the servers. 
	
	Once $\pi^{\prime}_{i,0},\pi^{\prime}_{i,1},\pi^{\prime}_{i,2}$ are applied, $\mathcal{S}_{\{0,1,2\}}$ obtain the secret-shared sparse vector $\llbracket \boldsymbol{x}_i \rrbracket=\llbracket \pi_i(\boldsymbol{x}_i^{\prime}) \rrbracket$. 
	For each client, this process is repeated in parallel, and the resulting secret-shared sparse vectors are securely aggregated on the server side.
	Algorithm \ref{alg:SparVecAgg} summarizes our communication-efficient secure sparse vector aggregation protocol.

	\begin{algorithm}[!t]
		\caption{Our Proposed Mechanism \textsf{SparVecAgg} for Communication-Efficient Secure Sparse Vector Aggregation} 
		\label{alg:SparVecAgg}
		\begin{algorithmic}[1]
			\Require Each client $\mathcal{C}_i$ holds a sparse vector $\boldsymbol{x}_i$. 
			\Ensure $\mathcal{S}_{\{0,1,2\}}$ obtain a secret-shared dense vector $\llbracket \boldsymbol{x} \rrbracket$, where $\boldsymbol{x} = \sum_{i \in [n]} \boldsymbol{x}_i$. 
			
			\State $\mathcal{S}_{\{0,1,2\}}$ initialize $\llbracket \boldsymbol{x} \rrbracket \leftarrow \llbracket 0,\cdots,0 \rrbracket$. 
			\For{each client $\mathcal{C}_i$ \textbf{in parallel}}
			\State \underline{\textit{// Client-side computation ($\mathcal{C}_i$):}}
			\State Initialize empty lists $L_i, E_i, P_i$. 
			\For{$j\in[d]$}
			\State $L_i.\mathsf{append}(j)$ if $\boldsymbol{x}_i[j] \neq 0$ else $E_i.\mathsf{append}(j)$.
			\EndFor
			\State $\boldsymbol{x}_i^{\prime} \stackrel{Eq. \eqref{eq:re-order}}{\longleftarrow} \boldsymbol{x}_i,L_i$; \, $\pi_i \stackrel{Eq. \eqref{eq:comp_pi}}{\longleftarrow} L_i,E_i$. 
			\State Locally sample seeds ${\mathsf{s}_{i,0},\mathsf{s}_{i,1}}$. 
			\State \multiline{$\pi_{i,0} \leftarrow G(\mathsf{s}_{i,0})$, $\pi_{i,1} \leftarrow G(\mathsf{s}_{i,1})$. \Comment\emph{{Perm. Compression.}}}
			\State $\pi_{i,2} \leftarrow \pi_{i,1}^{-1} \circ \pi_{i,0}^{-1} \circ \pi_i$.
			\State $P_{i}.\mathsf{append}\left(\pi_{i,2}(j)\right)$ for $j \in [k]$.\Comment\emph{{Perm. Distillation.}} 
			\State $\boldsymbol{r}_i \leftarrow (\boldsymbol{x}_i^\prime[0], \boldsymbol{x}_i^\prime[1], \dots, \boldsymbol{x}_i^\prime[k-1])$
			\State \multiline{Distribute $\mathsf{s}_{i,0}$ and $\mathsf{s}_{i,1}$ to $\mathcal{S}_0$, $\mathsf{s}_{i,1}$ and $P_{i}$ to $\mathcal{S}_1$, $\mathsf{s}_{i,0}$ and $P_{i}$ to $\mathcal{S}_2$. Secret-share $\boldsymbol{r}_i$ across $\mathcal{S}_{\{0,1,2\}}$.}
			\State \underline{\textit{// Server-side computation:}}
			\State $\mathcal{S}_{\{0,2\}}$ locally compute $\pi^{\prime}_{i,0} \leftarrow G(\mathsf{s}_{i,0})$.
			\State $\mathcal{S}_{\{0,1\}}$ locally compute $\pi^{\prime}_{i,1} \leftarrow G(\mathsf{s}_{i,1})$. 
			\State $\mathcal{S}_{\{1,2\}}$ locally initialize an empty list $R_i$. 
			\For{$j\in[d]$}
			\State $R_i.\mathsf{append}(j)$ if {$j \notin P_i$}. \Comment\emph{{Computation on $\mathcal{S}_{\{1,2\}}$.}} 
			\EndFor
			\State $\mathcal{S}_{\{1,2\}}$ locally compute $\pi_{i,2}^{\prime}\stackrel{Eq. \eqref{eq:comp_pi2_server}}{\longleftarrow}P_i,R_i$.
			\State \multiline{$\mathcal{S}_{\{0,1,2\}}$ locally compute $\llbracket \boldsymbol{x}_i^{\prime} \rrbracket \leftarrow \llbracket \boldsymbol{r}_i \rrbracket \| \llbracket 0,\cdots,0 \rrbracket$. \Comment \emph{Pad $d-k$ secret-shared zeros.}}
			\For {$j = 1, 0, 2$} \Comment \emph{Secret-shared shuffle.} \label{code:shuffle-start}
			\State $\mathcal{S}_j,\mathcal{S}_{j+1}$ locally apply $\pi^{\prime}_{i,j+1}$ over the shares of $\llbracket \boldsymbol{x}_i^{\prime} \rrbracket$. 
			\State \multiline{$\mathcal{S}_j, \mathcal{S}_{j+1}$ locally sample $\langle \alpha \rangle_0, \langle \alpha \rangle_1, \langle \alpha \rangle_2$ s.t. $\langle \alpha \rangle_0 + \langle \alpha \rangle_1 + \langle \alpha \rangle_2 = \mathbf{0}$.}
			\State \multiline{$\mathcal{S}_j,\mathcal{S}_{j+1}$ compute: \\$\pi^{\prime}_{i,j+1}(\langle\boldsymbol{x}_i^{\prime}\rangle_0)\leftarrow \pi^{\prime}_{i,j+1}(\langle\boldsymbol{x}_i^{\prime}\rangle_0)+\langle \alpha \rangle_0, $\\$\pi^{\prime}_{i,j+1}(\langle\boldsymbol{x}_i^{\prime}\rangle_1) \leftarrow \pi^{\prime}_{i,j+1}(\langle\boldsymbol{x}_i^{\prime}\rangle_1)+\langle \alpha \rangle_1, $\\$\pi^{\prime}_{i,j+1}(\langle\boldsymbol{x}_i^{\prime}\rangle_2) \leftarrow \pi^{\prime}_{i,j+1}(\langle\boldsymbol{x}_i^{\prime}\rangle_2)+\langle \alpha \rangle_2$. }
			
			\State \multiline{$\mathcal{S}_j$ sends $\pi^{\prime}_{i,j+1}(\langle\boldsymbol{x}_i^{\prime}\rangle_j)$ to $\mathcal{S}_{j-1}$; $\mathcal{S}_{j+1}$ sends $\pi^{\prime}_{i,j+1}(\langle\boldsymbol{x}_i^{\prime}\rangle_{j-1})$ to $\mathcal{S}_{j-1}$.} 
			\EndFor \label{code:shuffle-end}
			\State $\mathcal{S}_{\{0,1,2\}}$ compute $\llbracket \boldsymbol{x} \rrbracket \leftarrow \llbracket \boldsymbol{x} \rrbracket + \llbracket\boldsymbol{x}_i\rrbracket$. \Comment \emph{Aggregation.}
			
			\EndFor
		\end{algorithmic}
	\end{algorithm}

	\section{Basic Design of {\main} with Semi-Honest Security}
	\label{sec:semi-honest-construction}
	
	In this section, we introduce the basic design of {\main} with semi-honest security, which builds on our \textsf{SparVecAgg} protocol to enable efficient secure aggregation for FL harnessing top-$k$ sparsification, and provide DP guarantees for the aggregated gradient update in each round. 
	We will later show how to extend the basic design to achieve malicious security for {\main} in Section \ref{sec:malicious}.

	%

	\begin{algorithm}[!t]
		\caption{The Basic Design of {\main} with Semi-Honest Security} 
		\label{alg:semi-FL}
		\begin{algorithmic}[1]
			\Require Each client $\mathcal{C}_i$ holds a local dataset $\mathcal{D}_i$ for $i\in[n]$.
			\Ensure The servers $\mathcal{S}_{\{0,1,2\}}$ and clients $\mathcal{C}_i$ ($i \in [n]$) obtain a global model $\mathbf{w}$ that satisfies $(\varepsilon,\delta)$-DP. 
			
			\Procedure{Train}{$\{\mathcal{D}_i\}_{i\in[n]}$}
			\State Initialize $ {\mathbf{w}}^0$. 
			\For {$t\in [T]$}\
			\State \multiline{Sample a set $\mathcal{P}^t$ of clients.} 
			\State // \emph{\underline{Client-side computation:}}
			\For{$\mathcal{C}_i \in \mathcal{P}^t$ \textbf{in parallel}}
			\State $\Delta^{t+1}_i \leftarrow \mathsf{LocalUpdate}(\mathcal{D}, \mathbf{w}^t)$. 
			\State $\mathcal{I}_i \leftarrow \text{argtop}_k \big(\mathsf{abs}(\Delta^{t+1}_i)\big).$ \label{alg:semi-FL:argtop}
			
			\State $\boldsymbol{x}^{t+1}_i \leftarrow \mathbf{0}^d$. 
			\State ${\boldsymbol{x}}^{t+1}_i[j]\leftarrow\Delta^{t+1}_i[j]$ for $j\in\mathcal{I}_i$. \Comment\emph{{Sparsification.}} \label{alg:semi-FL:topk-end}
			\State ${\boldsymbol{x}}^{t+1}_i \leftarrow {\boldsymbol{x}}^{t+1}_i / \max \{1, \|{\boldsymbol{x}}^{t+1}_i\|_2/C\}$. \Comment\emph{{Clipping.}} \label{alg:semi-FL:clipping}
			\EndFor
			
			\State // \emph{\underline{Bridge the clients and the servers:}}
			\State $\llbracket \Delta^{t+1} \rrbracket \leftarrow \mathsf{SparVecAgg}(\{{\boldsymbol{x}}^{t+1}_i\}_{\mathcal{C}_i\in \mathcal{P}^t})$.
			\State // \emph{\underline{Server-side computation:}}
			\State $\llbracket {\Delta}^{t+1} \rrbracket \leftarrow \mathsf{SecNoiseAdd}(\llbracket \Delta^{t+1} \rrbracket, \sigma, C)$.
			\State \multiline{$ \Delta^{t+1} \leftarrow \mathsf{Rec}(\llbracket \Delta^{t+1} \rrbracket)$. } 
			\State $\mathbf{w}^{t+1} \leftarrow \mathbf{w}^{t} + \Delta^{t+1}/|\mathcal{P}^t|$. 
			\EndFor
			\EndProcedure
			
			\Procedure{\textsf{LocalUpdate}}{$\mathcal{D}_i, {\mathbf{w}}^t$}
			\State $\mathbf{w}^0_i \leftarrow {\mathbf{w}}^t$.
			\State $\mathcal{B}_i\leftarrow \mathcal{D}_i$. \Comment{\emph{Split the dataset $\mathcal{D}_i$ into batches.}}
			\For{each local epoch $e \in [E]$}
			\For{each batch $\beta_i \in \mathcal{B}_i$}
			\State $\mathbf{w}^{e+1}_i \leftarrow \mathbf{w}^e_i - \eta_l \nabla F_i \left(\mathbf{w}^e_i , \beta_i \right)$. 
			\EndFor
			\EndFor
			\State $\Delta^{t+1}_i \leftarrow \mathbf{w}^{E}_i-\mathbf{w}^0_i$.
			\State \textbf{Return} $\Delta^{t+1}_i$. 
			\EndProcedure
		\end{algorithmic}
		
	\end{algorithm}


	\noindent\textbf{Overview.} Recall that in each training round of FL, a selected client locally trains a model and produces a gradient update. 
	Instead of directly transmitting the original gradient update (treated as a dense vector), each selected client can first locally sparsify this gradient update with the top-$k$ sparsification mechanism. 
	Then the clients and the servers can proceed to collaboratively run our proposed \textsf{SparVecAgg} protocol to produce an aggregated gradient update in secret-shared form at the server side. 
	Then we consider how to securely add calibrated DP noise to the aggregated gradient update so that the servers can only obtain a noisy version of the aggregated gradient update to update the current global model.
	One possible way here is to have the servers collaboratively execute secure noise sampling protocols \cite{FuW24,WeiYFCW23,ARES24}. 
	These protocols can directly produce a noise secret-shared among the servers while keeping the noise value hidden from individual servers, but would incur prohibitively high performance overheads for applications like distributed model training as shown in \cite{PETS25}. 
	Aiming at secure and efficient noise addition, we instead follow an alternative strategy---distributed noise generation \cite{kairouz21,Canonne0S20,google2020secure,PETS25}, where a random noise is locally sampled by each server from the appropriate discrete Gaussian distribution and secret-shared among the servers. 
	Subsequently, the noises from all servers are efficiently added to the aggregated gradient update in the secret sharing domain.
	In this way, DP guarantees can be achieved for the aggregated gradient update (and thus the global model) with high efficiency. 
	\revise{
		Note that the strategy of distributed noise addition is also applied in prior works \cite{kairouz21,Canonne0S20,google2020secure,PETS25}.  Our novelty lies in newly designing a lightweight noise verification mechanism, as will be presented in Section \ref{sec:malicious}.
	}

	\noindent\textbf{Construction.} Algorithm \ref{alg:semi-FL} gives the construction of the basic design of {\main}.
	At the beginning, servers $\mathcal{S}_{\{0,1,2\}}$ agree upon an initialized global model $\mathbf{w}^0$. 
	The $t$-th FL round starts with $\mathcal{S}_{\{0,1,2\}}$ selecting a common subset of clients (denoted as $\mathcal{P}^t$) and broadcasting the current global model $\mathbf{w}^t$ to them. 
	Each selected client $\mathcal{C}_i$ then performs local training, which is denoted by \textsf{LocalUpdate}. 
	To sparsify the produced gradient update $\Delta^{t+1}_i$ for reducing communication costs, $\mathcal{C}_i$ begins by identifying the set $\mathcal{I}_i$ containing indices associated with the top-$k$ largest absolute values of $\Delta^{t+1}_i$. 
	Next, $\mathcal{C}_i$ initializes a vector $\boldsymbol{x}_i^{t+1}$ of size $d$, filled with zeros. 
	For each index $j \in \mathcal{I}_i$, the corresponding value from $\Delta_i^{t+1}$ is then assigned to $\boldsymbol{x}_i^{t+1}[j]$, producing a top-$k$ sparsified gradient update $\boldsymbol{x}_i^{t+1}$. 
	Finally, $\mathcal{C}_i$ clips $\boldsymbol{x}_i^{t+1}$ to bound its $\ell_2$-norm, facilitating the use of the Gaussian mechanism to achieve DP. 
	
	With the top-$k$ sparsified gradient updates $\{\boldsymbol{x}^{t+1}_i\}_{\mathcal{C}_i\in \mathcal{P}^t}$ computed for the $t$-th round of FL, our proposed \textsf{SparVecAgg} mechanism can be employed to efficiently and securely bridge the clients and the servers, with the input of each client $\mathcal{C}_i \in \mathcal{P}^t$ being its top-$k$ sparsified gradient update vector. 
	From the execution of \textsf{SparVecAgg}, the servers produce the secret-shared aggregated gradient update $\llbracket \Delta^{t+1} \rrbracket$. 
	Then the servers securely inject DP noise to $\llbracket \Delta^{t+1} \rrbracket$ through \textsf{SecNoiseAdd}, which works as follows.

	First, each server $\mathcal{S}_i$ ($i\in\{0,1,2\}$) samples discrete Gaussian noise $\eta_i$ from the discrete Gaussian distribution $\mathcal{N}_{\mathbb{Z}}(0, \frac{1}{2}\sigma^2C^2\mathbf{I}_d)$ . 
	Then $\mathcal{S}_i$ ($i\in\{0,1,2\}$) secret-shares its noise $\eta_i$ among all servers. 
	Subsequently, the servers securely compute the summation of the noise and the aggregated gradient update via: $\llbracket \Delta^{t+1} \rrbracket \leftarrow \llbracket \Delta^{t+1} \rrbracket + \sum_{i=0}^{2}\llbracket \eta_i \rrbracket $. 
	So the output of \textsf{SecNoiseAdd} is the secret-shared noisy aggregated gradient update $\llbracket \Delta^{t+1} \rrbracket$.
	Note that the discrete Gaussian noise sampled independently by each server ensures that even if a server attempts to subtract its own noise contribution, the resulting noise-subtracted output still retains noise distributed as $\mathcal{N}_{\mathbb{Z}}(0, \sigma^2 C^2 \mathbf{I}_d)$. 
	This meets the requirement of the Gaussian mechanism (Definition~\ref{thm:gaussian}) and ensures DP. 
	At the end of each training round, $\mathcal{S}_{\{0,1,2\}}$ reconstruct $\Delta^{t+1}$  and integrate it to update the global model via: $\mathbf{w}^{t+1} = \mathbf{w}^{t} + \Delta^{t+1}/|\mathcal{P}^t|$. 
	Algorithm \ref{alg:semi-FL} satisfies $(\varepsilon,\delta)$-DP, where the privacy analysis is provided in Appendix \ref{appendix:proof:privacy}.

	\section{Achieving Malicious Security for {\main}}
	\label{sec:malicious}
	In this section, we extend our basic design to achieve malicious security.
	With respect to the basic design with semi-honest security in Algorithm \ref{alg:semi-FL}, achieving security against a maliciously acting server \emph{essentially} needs to ensure that: (1) the permutation shares are correctly decompressed and applied for shuffling, (2) the secret-shared discrete Gaussian noise is correctly sampled by each server, and (3) the secret-shared noises are correctly aggregated with individual client's gradient updates. 
	%
	%
	In what follows, we elaborate on how to ensure each of these integrity guarantees. 
	And we give the complete maliciously secure protocol of {\main} in Algorithm~\ref{alg:mali-FL} in Appendix~\ref{sec:complete-protocol}, with security proofs provided in Appendix~\ref{sec:security-analysis}.

	\subsection{Maliciously Secure Decompression and Shuffle}
	\label{sec:malicious:shuffle}
	
	Recall that for each selected client, the servers first decompress the permutation shares, and then obliviously apply the permutation to the (re-ordered) gradient update based on secret-shared shuffle. 
	For integrity verification, it is necessary to verify that: (1) the permutation shares are correctly decompressed, and (2) the secret-shared shuffle is correctly performed. 
	We observe that verifying the integrity of the \emph{final output} of the shuffle process, $\llbracket \boldsymbol{x}_i \rrbracket = \llbracket \pi'_i(\boldsymbol{x}'_i) \rrbracket$, is sufficient to detect errors in either phase. 
	This is because the application of each permutation $\pi'_{i,j}$ in Algorithm \ref{alg:SparVecAgg} requires computations involving a specific pair of servers. 
	If one of these servers is malicious and uses an incorrect permutation (e.g., due to faulty decompression), while the other server is honest and applies the correct one, the permutation is applied inconsistently across their respective shares.
	This inconsistency causes the shares to be misaligned, essentially leading to an incorrect final secret-shared output that would cause the integrity check to fail. 
	%

	A na\"ive approach to verifying the integrity of the shuffle output is to attach an information-theoretic MAC for each value of the gradient update (treated as a vector), which can then be used to detect any malicious behavior \cite{AsharovHIKNPTT22}. 
	However, this incurs the cost of two secure dot product computations per shuffle for verification \cite{AsharovHIKNPTT22}. 
	To improve verification efficiency, we propose a tailored mechanism built on \textit{blind MAC verification}, which requires only one secure dot product computation per shuffle. 

	At a high level, each client locally generates a MAC for its re-ordered sparse gradient update, then secret-shares the MAC value and the MAC key; the MAC key is securely shuffled alongside the gradient update. 
	The servers perform verification by computing the expected MAC through a secure dot product on the shuffled key and gradient update shares. 
	Finally, the computed MACs of all gradient updates are aggregated and verified together in the secret-sharing domain.
	It is noted that while recent works \cite{NDSS22, camel} also adopt a similar strategy of blind MAC verification for secure shuffle, they focus on additive secret sharing and two-party protocols based on \cite{ChaseGP20}. 
	In contrast, our approach is tailored for the RSS-based three-party secure shuffle protocol in \cite{AsharovHIKNPTT22} and provides a tailored mechanism for achieving malicious security.

	\noindent\textbf{Construction.} Our blind MAC verification scheme leverages the Carter-Wegman one-time MAC \cite{WegmanC81}. 
	For a re-ordered sparse gradient update $\boldsymbol{x}^{\prime}_i$ at client $\mathcal{C}_i$, the MAC is computed as
	%
		\begin{equation}
			\label{eq:MAC}
			t_i = \sum_{j=0}^{d-1} \boldsymbol{k}_i[j] \cdot \boldsymbol{x}^{\prime}_i[j], 
		\end{equation}
	\noindent where $\boldsymbol{k}_i\in\mathbb{Z}^d_p$ is a MAC key and $\boldsymbol{k}_i[j]$ denotes the $j$-th element of $\boldsymbol{k}_i$. 
	The security of the MAC follows directly from the DeMillo-Lipton-Schwartz-Zippel lemma \cite{Zippel79,Schwartz80,DemilloL78}. 
	Instead of transmitting the full MAC key, $\mathcal{C}_i$ generates $\boldsymbol{k}_i$ from three key seeds $\mathsf{ks}_{i,0}, \mathsf{ks}_{i,1}, \mathsf{ks}_{i,2}$ using a PRG $G$: 
	\begin{equation}
		\notag
		\begin{aligned}
			\boldsymbol{k}_i  = (\boldsymbol{k}_i[0], \cdots, \boldsymbol{k}_i[d-1]) 
			= {G}(\mathsf{ks}_{i,0}) + {G}(\mathsf{ks}_{i,1}) + {G}(\mathsf{ks}_{i,2}). 
		\end{aligned}
	\end{equation}
	This enables the client to transmit only the small key seeds for communication efficiency. 
	%
	%
	Then, $\mathcal{C}_i$ distributes $(\langle t_i \rangle_j, \langle \boldsymbol{r}_i \rangle_j, \mathsf{ks}_{i,j})$ and $(\langle t_{i} \rangle_{j+1}, \langle \boldsymbol{r}_{i} \rangle_{j+1}, \mathsf{ks}_{{i},j+1})$ to server $\mathcal{S}_j$ ($j\in\{0,1,2\}$). 

	Upon receiving shares, $\mathcal{S}_j$ ($j\in\{0,1,2\}$) locally runs $\langle \boldsymbol{k}_i \rangle_{j} = G(\mathsf{ks}_{i,j})$ and $\langle \boldsymbol{k}_i \rangle_{j+1} = G(\mathsf{ks}_{i,j+1})$ to expand the key shares.
	$\mathcal{S}_{\{0,1,2\}}$ also pad $\llbracket \boldsymbol{r}_i\rrbracket$ with $d-k$ secret-shared zeros to get $\llbracket \boldsymbol{x^{\prime}}_i\rrbracket$. 
	As a result, $\mathcal{S}_{\{0,1,2\}}$ hold a secret-shared tuple $(\llbracket t_i\rrbracket, \llbracket \boldsymbol{x^{\prime}}_i \rrbracket, \llbracket \boldsymbol{k}_i\rrbracket)$. 
	Then the servers follow the secret-shared shuffle process in Algorithm \ref{alg:SparVecAgg} (lines~\ref{code:shuffle-start}-\ref{code:shuffle-end}) to apply the secret-shared permutation $\llangle \pi_i \rrangle$ to both the re-ordered gradient update $\llbracket \boldsymbol{x}^{\prime}_i \rrbracket$ and the MAC key $\llbracket \boldsymbol{k}_i \rrbracket$. 
	To verify integrity, the servers compute the expected MAC $\llbracket {t}^\prime_i \rrbracket$ as the secure dot product of the shuffled gradient update $\llbracket \pi_i(\boldsymbol{x}^{\prime}_i) \rrbracket$ and the shuffled key $\llbracket \pi_i(\boldsymbol{k}_i) \rrbracket$.
	%
	%
	Subsequently, the servers compute the difference $\llbracket f_i \rrbracket = \llbracket {t}_i \rrbracket - \llbracket {t}^\prime_i \rrbracket$.
	Any tampering by a malicious server during this process results in $f_i \neq 0$.

	Finally, the servers verify all MACs in a batch by computing and reconstructing $ f = \mathsf{Rec}\left( \llbracket r \rrbracket\cdot(\sum_{\mathcal{C}_i\in\mathcal{P}^t} \llbracket f_i \rrbracket) \right)$, where $\llbracket r \rrbracket$ is a secret-shared random value sampled by the servers to prevent a malicious server from learning anything from a non-zero sum. 
	If $f=0$, the servers accept the MAC, confirming no malicious behavior in the process. 
	Otherwise, the honest servers output abort and stop. 
	In this way, all the MACs are blindly verified by the servers as one batch in the secret sharing domain, without revealing any particular shuffled value or MAC. 
	To prevent a malicious server from manipulating its share to force $f$ to be zero, the servers could compare the reconstructed results and output abort if inconsistency is detected. 
	Details regarding the above process of integrity check on the permutation decompression and secret-shared shuffle are summarized in Algorithm~\ref{alg:mali-FL} (lines~\ref{code:malicious-shuffle-start}-\ref{code:malicious-shuffle-end}) in Appendix~\ref{sec:complete-protocol}.

	
	\vspace{-5pt}

	\revise{\subsection{Efficient and Secure Noise Sampling with Verifiability}}
	\label{sec:malicious:noise-sampling}
	
	
	\noindent Recall that our secure noise addition protocol \textsf{SecNoiseAdd} requires each server to locally sample and secret-share noise. 
	However, under a malicious adversary, a compromised server may sample incorrect noise (e.g., with excessively large magnitude \cite{FuW24}) which can severely degrade model utility. 
	Ensuring the sampled noise follows the correct distribution is thus critical. 
	While existing works \cite{WeiYFCW23,FuW24} support secure sampling of discrete Gaussian noise against both semi-honest and malicious adversaries, as previously discussed, their performance overhead is prohibitively high for practical applications like FL---even in the semi-honest adversary setting. 
	
	To address this challenge, we propose a tailored lightweight verifiable noise sampling mechanism. 
	Our mechanism leverages observation on a fundamental property of discrete Gaussian distributions: the sum of two independent discrete Gaussian samples $\eta, \xi \sim \mathcal{N}_\mathbb{Z}(0, \frac{1}{2}\sigma^2C^2\mathbf{I}_d)$ is proven to be extremely close to another discrete Gaussian sample $\kappa \sim \mathcal{N}_\mathbb{Z}(0, \sigma^2C^2\mathbf{I}_d)$ \cite{kairouz21,FuW24}.
	We leverage this property to transform the traditional costly MPC-based noise verification process into a lightweight local verification process using the Kolmogorov-Smirnov (KS) test \cite{ks-test}. 
	At a high level, to verify whether a server's sampled noise conforms to the distribution $\mathcal{N}_\mathbb{Z}(0, \frac{1}{2}\sigma^2 C^2 \mathbf{I}_d)$, another server samples noise from the same distribution and then secret-shares it. 
	The two secret-shared noises are added up, and the third server reconstructs the sum. 
	The KS test is then applied locally to efficiently verify if the resulting noise conforms to the distribution $\mathcal{N}_\mathbb{Z}(0, \sigma^2 C^2 \mathbf{I}_d)$, while ensuring the privacy of the raw noise values.

	\noindent\textbf{Construction.} 
	We start with verifying the discrete Gaussian noise $\eta_0 $ sampled by server $\mathcal{S}_0$, where $\eta_0 \sim \mathcal{N}_{\mathbb{Z}}(0, \frac{1}{2}\sigma^2 C^2 \mathbf{I}_d)$. 
	The same procedure applies to noise verification for other servers ($\mathcal{S}_1$ and $\mathcal{S}_2$). 
	To validate whether $\eta_0$ conforms to $\mathcal{N}_{\mathbb{Z}}(0, \frac{1}{2}\sigma^2 C^2 \mathbf{I}_d)$, another server (either $\mathcal{S}_1$ or $\mathcal{S}_2$) first independently samples masking noise $\xi \sim \mathcal{N}_{\mathbb{Z}}(0, \frac{1}{2}\sigma^2 C^2 \mathbf{I}_d)$, and then secret-shares it across $\mathcal{S}_{\{0,1,2\}}$. 
	This masking noise preserves the distributional properties while ensuring $\eta_0 + \xi$ achieves statistical indistinguishability---disclosing $\kappa = \eta_0 + \xi$ reveals no information about $\eta_0$.
	Without loss of generality, we designate $\mathcal{S}_1$ as the masking noise generator when verifying $\mathcal{S}_0$'s samples. 
	The servers then securely compute $\llbracket \kappa \rrbracket = \llbracket \eta_0 \rrbracket + \llbracket \xi \rrbracket$ via secure addition. 
	Finally, $\mathcal{S}_0$ and $\mathcal{S}_1$ send $\langle \kappa \rangle_1$ to $\mathcal{S}_2$ to reconstruct $\kappa$ on $\mathcal{S}_2$. 
	$\mathcal{S}_2$ first compares the $\langle \kappa \rangle_1$ sent from $\mathcal{S}_0$ and $\mathcal{S}_1$. 
	If inconsistency is detected, $\mathcal{S}_2$ outputs abort.
	Next, $\mathcal{S}_2$ performs distributional verification through a two-sample KS test between the masked noise $\kappa$ and reference sample $\kappa' \sim \mathcal{N}_{\mathbb{Z}}(0, \sigma^2C^2\mathbf{I}_d)$. The empirical CDFs are computed as:
	$F_{\kappa}(y) = \frac{1}{d}\sum_{i=1}^d \mathbb{I}(\kappa[i] \leq y), 
	F_{\kappa'}(y) = \frac{1}{d}\sum_{j=1}^d \mathbb{I}(\kappa'[j] \leq y),$ 
	where $\mathbb{I}(\cdot)$ denotes the indicator function. The KS statistic calculates the maximum divergence: $D_{\text{KS}} = \sup_{y} |F_{\kappa}(y) - F_{\kappa'}(y)|$. 
	Under the null hypothesis that \(\kappa\) and \(\kappa'\) are drawn from the same distribution, the critical value at significance level $\alpha$ is:
	$D_{\text{crit}} = \sqrt{-\frac{1}{2} \ln\left(\frac{\alpha}{2}\right)} \cdot \sqrt{\frac{2}{d}}$. 
	In this paper, we adopt a widely used significance level of \(\alpha = 0.05\) \cite{FuW24,Xiang0L023}. 

	\revise{
		If the KS statistic exceeds the critical value, \(\mathcal{S}_2\) rejects the null hypothesis, indicating one of the following malicious behaviors:  
		(1) $\mathcal{S}_0$ sampled incorrect $\eta_0$,  
		(2) $\mathcal{S}_1$ sampled incorrect $\xi$, or  
		(3) $\mathcal{S}_2$ manipulated the test.  
		Note that there exists a possibility that $\eta_0$ or $\xi$ is manipulated a bit yet still passes the KS test. 
		However, this does not matter in practice because such manipulation is constrained to be insignificant by the KS test. 
		That is, a significant manipulation would cause the KS test to fail, implying that $\eta_0$ or $\xi$ is heavily deviating/not drawn from the target distribution, resulting in abort. 
		Full protocol details are given in Algorithm~\ref{alg:malicious-noise-sampling} in Appendix \ref{sec:complete-protocol}. 
	}
	
	\noindent \revise{\textbf{Remark. }
		There exist some DP mechanisms that adopt noise distributions with bounded support (e.g., the truncated Laplacian mechanism~\cite{GengDGK20}), which could potentially limit the influence of a malicious server during noise sampling. 
		However, we note that using such distributions does not directly enable a simpler version of verifiable noise sampling.
		This is because it just mitigates malicious influence and does not address the verification of correct noise sampling. 
		Indeed, how to securely and efficiently verify that the noise conforms to the target distribution is a key challenge tackled in this paper. 
	}
	
	\subsection{Maliciously Secure Aggregation of Noises and Gradient Updates}
	\label{sec:malicious:aggregation}
	%
	
	So far we have introduced how to achieve efficient integrity check on permutation decompression and secret-shared shuffle (Section \ref{sec:malicious:shuffle}) and on DP noise sampling (Section \ref{sec:malicious:noise-sampling}).
	%
	%
	%
	Building on these guarantees, the final step towards achieving security against a malicious server is to ensure the correct computation of the final aggregate from the verified gradient updates and noise shares.

	The aggregation phase only admits a malicious server to inject an additive error \cite{ChidaGHIKLN18} into the secret-shared values during local addition or reconstruction \cite{ELSA}. 
	Such an attack can compromise the correctness of the final aggregated model, thereby affecting utility. 
	%
	%
	Specifically, consider that a malicious server $\mathcal{S}_j$ ($j\in\{0,1,2\}$) introduces an additive error $\mathbf{e}$ during the local addition phase: $
	\langle \Delta^{t+1} \rangle_j = \sum_{\mathcal{C}_i\in\mathcal{P}^t} \langle \boldsymbol{x}_i^{t+1} \rangle_j + \sum_{i=0}^{2}\langle \eta_i \rangle_j + \boldsymbol{e}$. 
	This error would propagate to the subsequent reconstruction phase. 
	To detect the additive error attack, we let servers $\mathcal{S}_{\{0,1,2\}}$ reconstruct $\Delta^{t+1}$, update the global model via: $\mathbf{w}^{t+1} = \mathbf{w}^{t} + {\Delta^{t+1}}/{|\mathcal{P}^t|}$, and then compute hash of $\mathbf{w}^{t+1}$ locally.
	Next, each server $\mathcal{S}_j$ exchanges its hash with the others and checks consistency across all three hashes. 
	If any inconsistency is detected, the protocol aborts.
	The security derives from the replication of RSS: 
	A malicious server $\mathcal{S}_j$ can only tamper with its own shares $\langle \Delta^{t+1} \rangle_j$ and $\langle \Delta^{t+1} \rangle_{j+1}$. 
	Since $\mathcal{S}_{j+1}$ reconstructs $\Delta^{t+1}$ using $\langle \Delta^{t+1} \rangle_{j}$ from $\mathcal{S}_{j-1}$ (not $\mathcal{S}_j$), any additive error introduced by $\mathcal{S}_j$ will not affect $\mathcal{S}_{j+1}$'s hash. 
	This mismatch guarantees the detection of malicious deviations.
	
	\vspace{-10pt}
	
	\revise{
		\subsection{Remark}
		\label{sec:malicious_remark} 
	}
	\noindent \revise{
		For clarity, we now summarize the malicious deviations for {\main} in the presence of a malicious server.  
		Specifically, the abort is triggered upon:
		(1) MAC verification failure, which occurs if a server incorrectly decompresses a permutation or performs the secret-shared shuffle protocol; 
		(2) failure of the KS test, which happens if the computed KS statistic exceeds its critical value, indicating the distribution of the sampled noise significantly deviates from the target discrete Gaussian distribution; 
		and (3) mismatch of hashes, which occurs if a server introduces an additive error in the aggregation phase, causing the hashes of the final updated model, computed independently by each server, to differ.
	}
	
	\revise{
		Regarding the sum of sparsified gradient updates in {\main}, it is noted that a malicious server cannot manipulate it due to enforcement of blind MAC verification. 
		For the noisy aggregated gradient update, a malicious server could potentially introduce a bias in the noise part, yet it is constrained to be insignificant by the KS test. 
	}

	\section{Theoretical Analysis of Privacy, Communication, and Convergence}
	\label{sec:theoretical-results}
	%
	Theorem \ref{thm:analysis_overview} gives the key analytical results regarding the DP guarantees, communication complexity, and convergence of {\main}. 
	We provide the proof for Theorem \ref{thm:analysis_overview} in Appendix \ref{appendix:analysis_overview}. 
	
	\begin{thm}
		\label{thm:analysis_overview} For sampling rate $q=|\mathcal{P}^t|/n$, if we run {\main} over $T$ rounds, then we have:
		\begin{itemize}
			\item \textbf{Privacy}: For any $\delta \in (0, 1)$ and $\varepsilon < 2 \log(1/\delta)$, {\main} guarantees ($\varepsilon, \delta$)-DP after $T$ training rounds if
			$$
			\sigma^2 \geq \frac{14q^2T\log(1/\delta)}{\varepsilon^2} + \frac{7q^2T}{\varepsilon}.
			$$
			
			\item \textbf{Communication}: 
			With malicious security, the communication cost per client per round is $(8k+6) |\mathbb{Z}_p| + 10|s|$ bits under local top-$k$ sparsification, where $|\mathbb{Z}_p|$ is the bit-length of a field element and $|s|$ is the bit-length of a PRG seed.
			The total inter-server communication complexity per round is $\mathcal{O}(|\mathcal{P}^t| \cdot d)$.

			\item \textbf{Convergence}: 
			%
			Under standard assumptions (see Appendix~\ref{appendix:proof:convergence}), the average expected squared gradient update norm of the global loss function over \( T \) training rounds is bounded by \scalebox{0.9}{$\frac{1}{T} \sum_{t=0}^{T-1} \mathbb{E} \|\nabla F(\mathbf{w}^t)\|^2 \leq \mathcal{O} ( \frac{1}{\eta_l E T} + \frac{L \eta_l E (1 + \phi) (G^2 + \sigma_l^2)}{qn}) +  \mathcal{O} \left( \frac{d \sigma^2 C^2}{q^2 n^2} \right)$}.

			%

		\end{itemize} 
	\end{thm}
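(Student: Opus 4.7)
My plan is to reduce the analysis to a standard subsampled Gaussian RDP accountant. The key observation is that after the clipping step at line~\ref{alg:semi-FL:clipping} of Algorithm~\ref{alg:semi-FL}, the client-level $\ell_2$-sensitivity of the per-round aggregated sparsified update equals $C$. Since the three servers independently sample from $\mathcal{N}_{\mathbb{Z}}(0, \tfrac{1}{2}\sigma^2 C^2 \mathbf{I}_d)$, the joint injected noise is (up to negligible statistical distance, by the closeness lemma for sums of discrete Gaussians cited in Section~\ref{sec:malicious:noise-sampling}) equivalent to $\mathcal{N}_{\mathbb{Z}}(0, \sigma^2 C^2 \mathbf{I}_d)$ even if an adversarial server subtracts its own share. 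By Lemma~\ref{thm:gaussian}, each round thus satisfies $(\alpha, \alpha/(2\sigma^2))$-RDP, and Poisson subsampling with rate $q$ amplifies this to $\mathcal{O}(\alpha q^2/\sigma^2)$ for appropriate $\alpha$, after which Lemma~\ref{lem:sequantial} composes over $T$ rounds. Converting back through Lemma~\ref{lem:rdp_to_dp} and optimizing over $\alpha$ yields a condition of the form $\sigma^2 \geq c_1 q^2 T \log(1/\delta)/\varepsilon^2 + c_2 q^2 T/\varepsilon$ for absolute constants $c_1, c_2$; I would instantiate the constants carefully ($c_1 = 14$, $c_2 = 7$) to match the stated bound.

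\textbf{Communication.} This part reduces to accounting for each item the client sends. From Table~\ref{tab:client-server-communication-cost}, the semi-honest \textsf{SparVecAgg} already contributes $8k|\mathbb{Z}_p| + 4|s|$ bits per client. The malicious extension in Section~\ref{sec:malicious:shuffle} augments the transcript with (i) an RSS-shared MAC value $t_i$, contributing $6|\mathbb{Z}_p|$ bits (three shares, each sent to two servers), and (ii) three MAC key seeds $\mathsf{ks}_{i,0},\mathsf{ks}_{i,1},\mathsf{ks}_{i,2}$ distributed in RSS fashion, contributing $6|s|$ bits; these sum to the stated $(8k+6)|\mathbb{Z}_p| + 10|s|$. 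For the inter-server bound, I would argue that the dominant cost per client is one invocation of the RSS-based secret-shared shuffle of \cite{AsharovHIKNPTT22} on a vector of length $d$, whose communication is $\mathcal{O}(d)$; summing over $|\mathcal{P}^t|$ clients (and absorbing the $\mathcal{O}(d)$ MAC-verification dot product, noise addition, and reconstruction) gives the stated $\mathcal{O}(|\mathcal{P}^t|\cdot d)$.

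\textbf{Convergence.} This will be the main obstacle. My plan is to follow the standard nonconvex FL convergence template (as used in works like \textsf{DP-FedAvg}) but carefully propagate two extra error sources: the top-$k$ sparsification bias and the distributed discrete Gaussian noise. Starting from the $L$-smoothness inequality applied to $F(\mathbf{w}^{t+1}) - F(\mathbf{w}^t)$, I would decompose the server-side update as $\Delta^{t+1}/|\mathcal{P}^t| = \bar{g}^t + \mathcal{E}_{\mathrm{sp}}^t + \mathcal{E}_{\mathrm{DP}}^t$, where $\bar{g}^t$ is the averaged dense gradient update, $\mathcal{E}_{\mathrm{sp}}^t$ is the aggregated top-$k$ sparsification residual, and $\mathcal{E}_{\mathrm{DP}}^t$ is the scaled sum of the three server noises. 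Taking expectations, the inner-product cross-term with $\nabla F$ contributes the $\mathcal{O}(1/(\eta_l E T))$ descent term after telescoping, while the $\|\cdot\|^2$ term splits into three pieces: (a) the local-SGD variance and drift over $E$ epochs, bounded via the standard $L\eta_l E(G^2 + \sigma_l^2)/(qn)$ term; (b) the sparsification error, which I would bound using the $k$-contraction property of top-$k$ with a constant $\phi$ capturing the $(1-k/d)$-type contraction under bounded gradient norm $G$, giving the extra factor $(1+\phi)$; and (c) the noise error, whose expected squared norm is $3 \cdot d \cdot \tfrac{1}{2}\sigma^2 C^2 / |\mathcal{P}^t|^2 = \mathcal{O}(d\sigma^2 C^2/(qn)^2)$. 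Summing and averaging over $T$ rounds yields the claimed bound. The delicate step will be controlling the sparsification term in expectation across heterogeneous clients, where I would invoke the bounded gradient assumption together with the observation that each client independently contracts its update by top-$k$ and that clipping only further reduces the norm, so the aggregate contraction constant still collapses into the single $(1+\phi)$ factor.
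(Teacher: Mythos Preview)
Your proposal is correct and mirrors the paper's own argument for all three parts: the same subsampled-Gaussian RDP accountant with composition and conversion for privacy, the same item-by-item bit count for client communication plus the $\mathcal{O}(d)$ shuffle cost per client for the inter-server bound, and the same $L$-smoothness descent lemma with the update decomposed into dense gradient, sparsification residual, and DP noise for convergence. One small note: the ``delicate step'' you anticipate regarding heterogeneous clients is sidestepped in the paper by an explicit IID assumption (so $\nabla F_i(\mathbf{w}) = \nabla F(\mathbf{w})$), and clipping is handled only as a stated assumption rather than being threaded through the compression-error bound---so the analysis is actually simpler than you expect.
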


	\begin{figure}[!t]
		
		\centering
		
		\begin{minipage}[t]{0.45\linewidth}
			\centering
			\includegraphics[width=\linewidth]{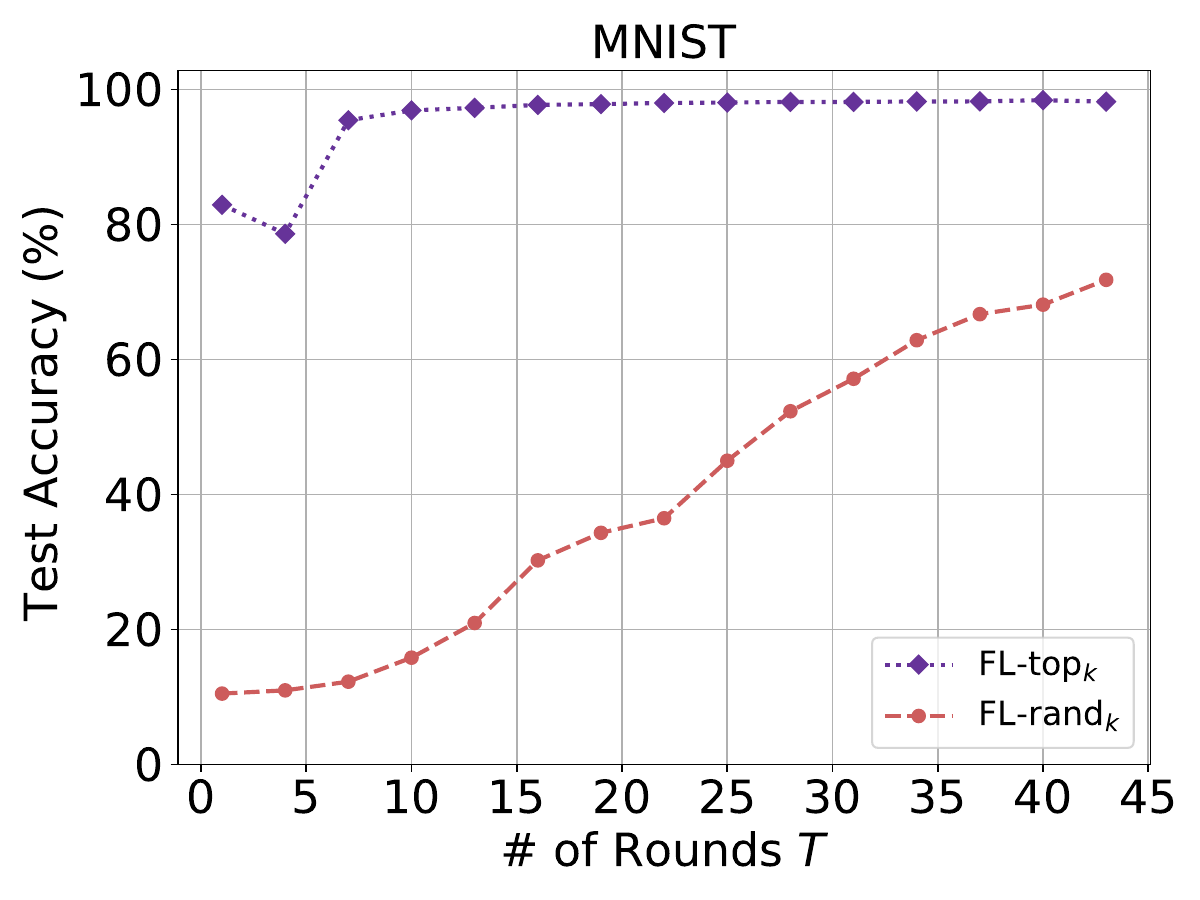}\\(a)
		\end{minipage}
		\begin{minipage}[t]{0.45\linewidth}
			\centering
			\includegraphics[width=\linewidth]{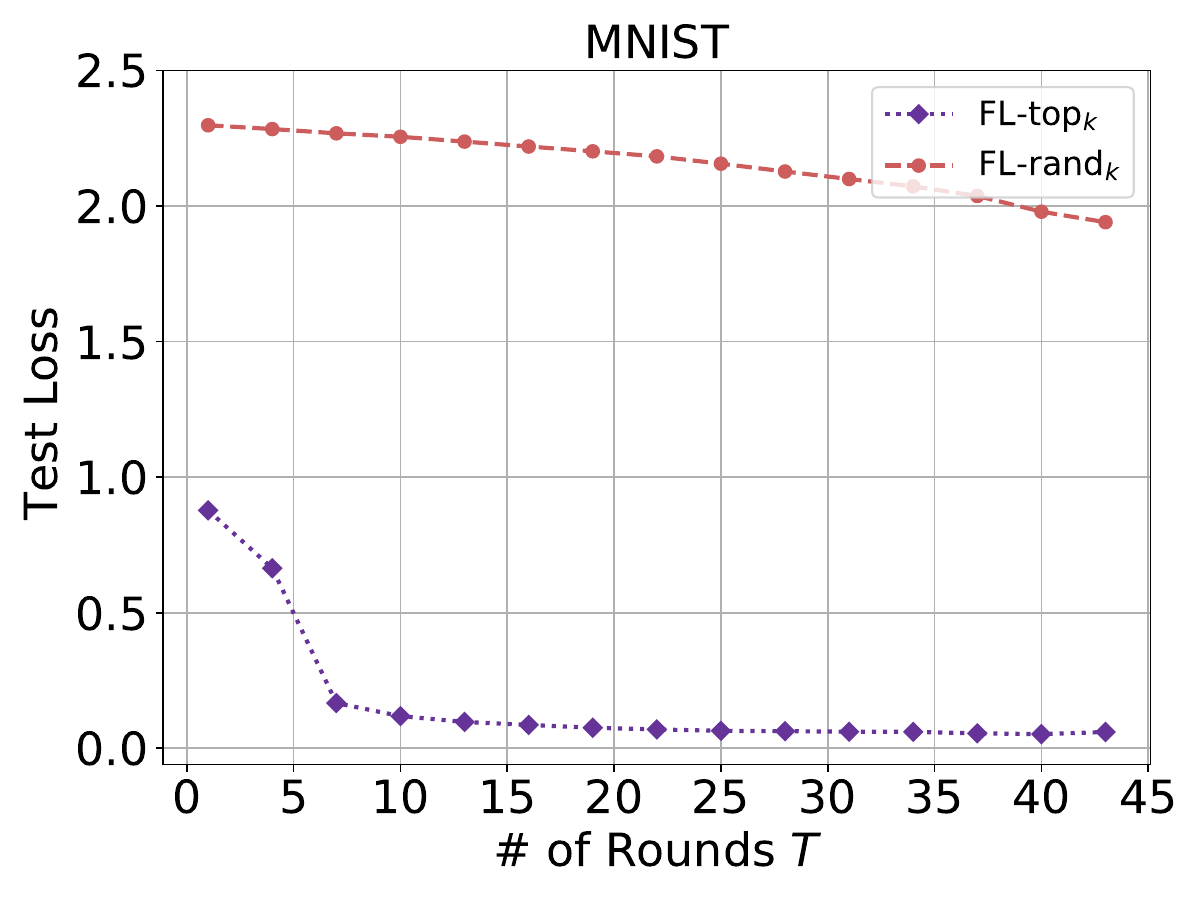}\\(b)
		\end{minipage}
		\\
		\begin{minipage}[t]{0.45\linewidth}
			\centering
			\includegraphics[width=\linewidth]{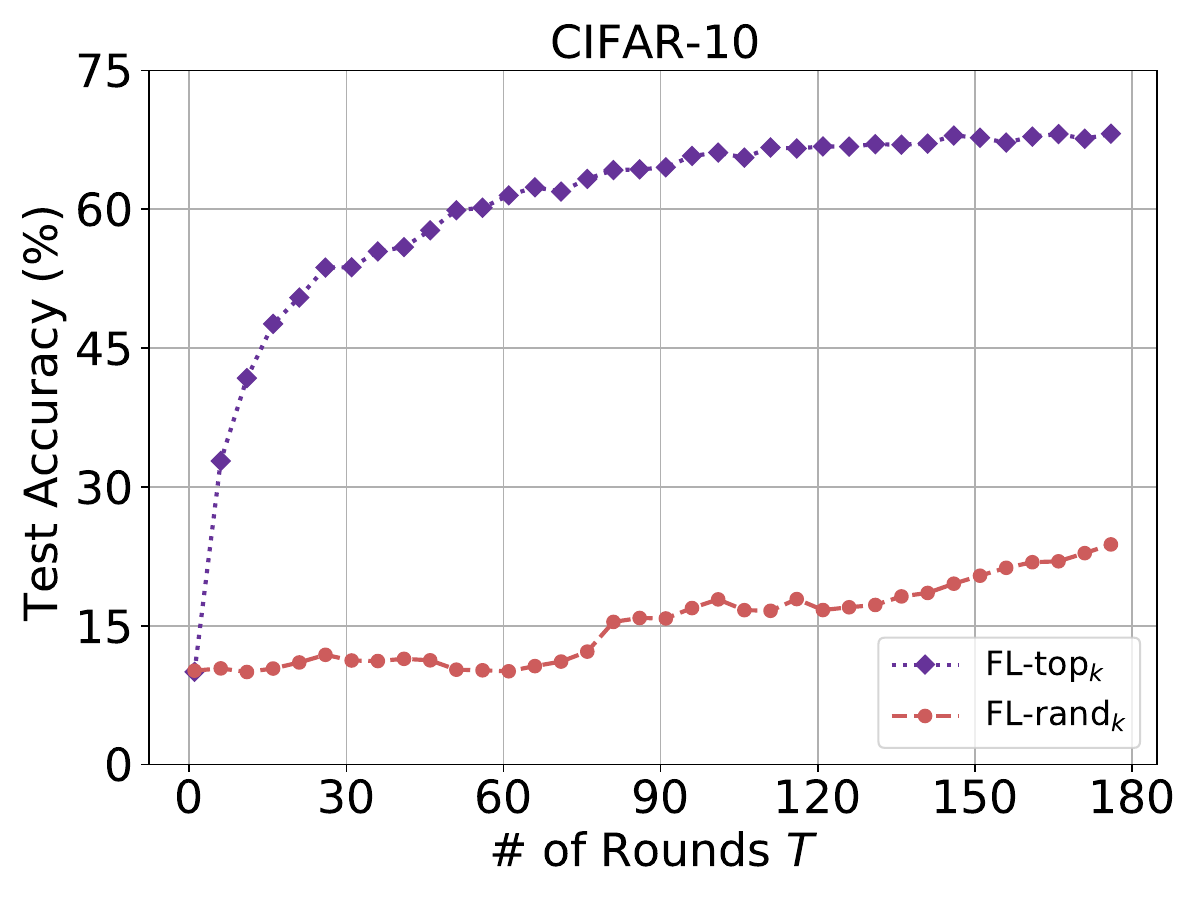}\\(c)
		\end{minipage}
		\begin{minipage}[t]{0.45\linewidth}
			\centering
			\includegraphics[width=\linewidth]{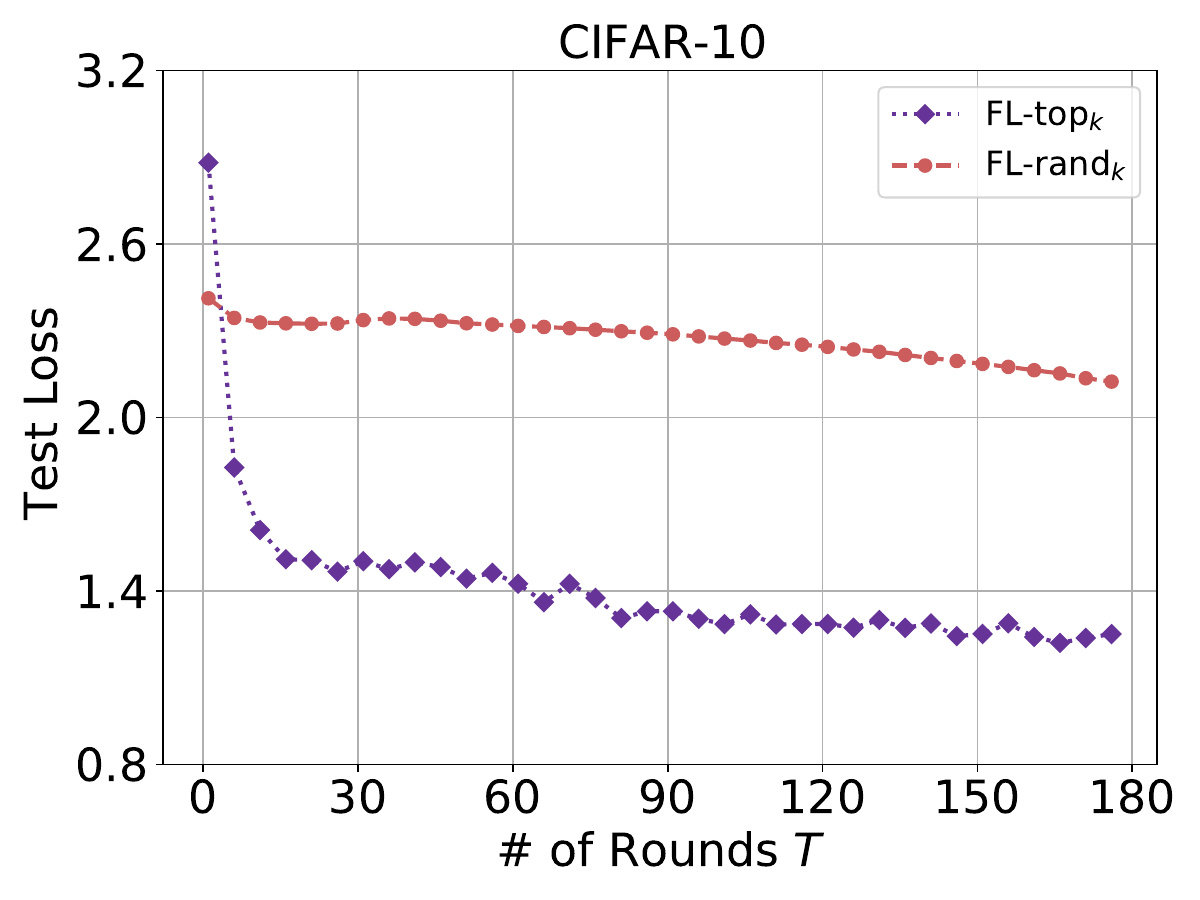}\\(d)
		\end{minipage}
		\\
		\begin{minipage}[t]{0.45\linewidth}
			\centering
			\includegraphics[width=\linewidth]{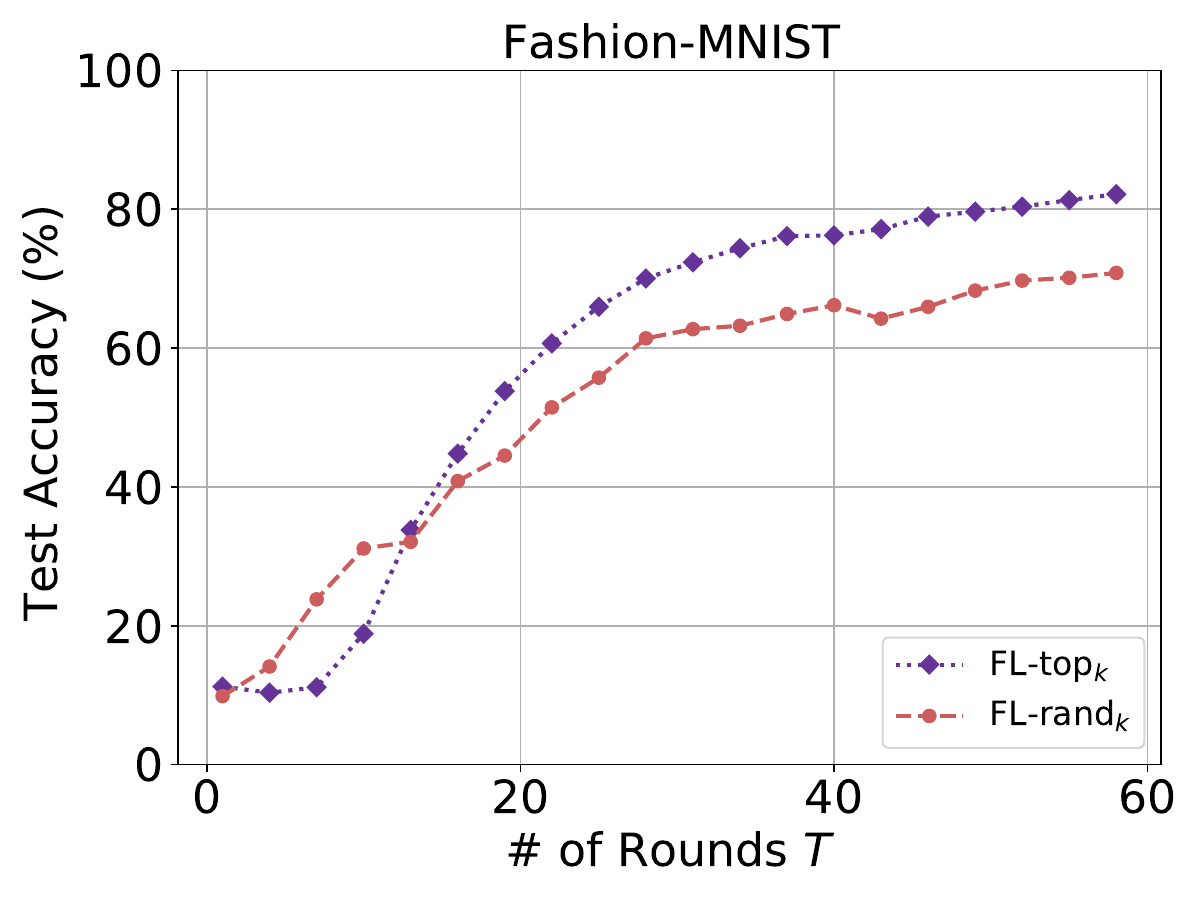}\\(e)
		\end{minipage}
		\begin{minipage}[t]{0.45\linewidth}
			\centering
			\includegraphics[width=\linewidth]{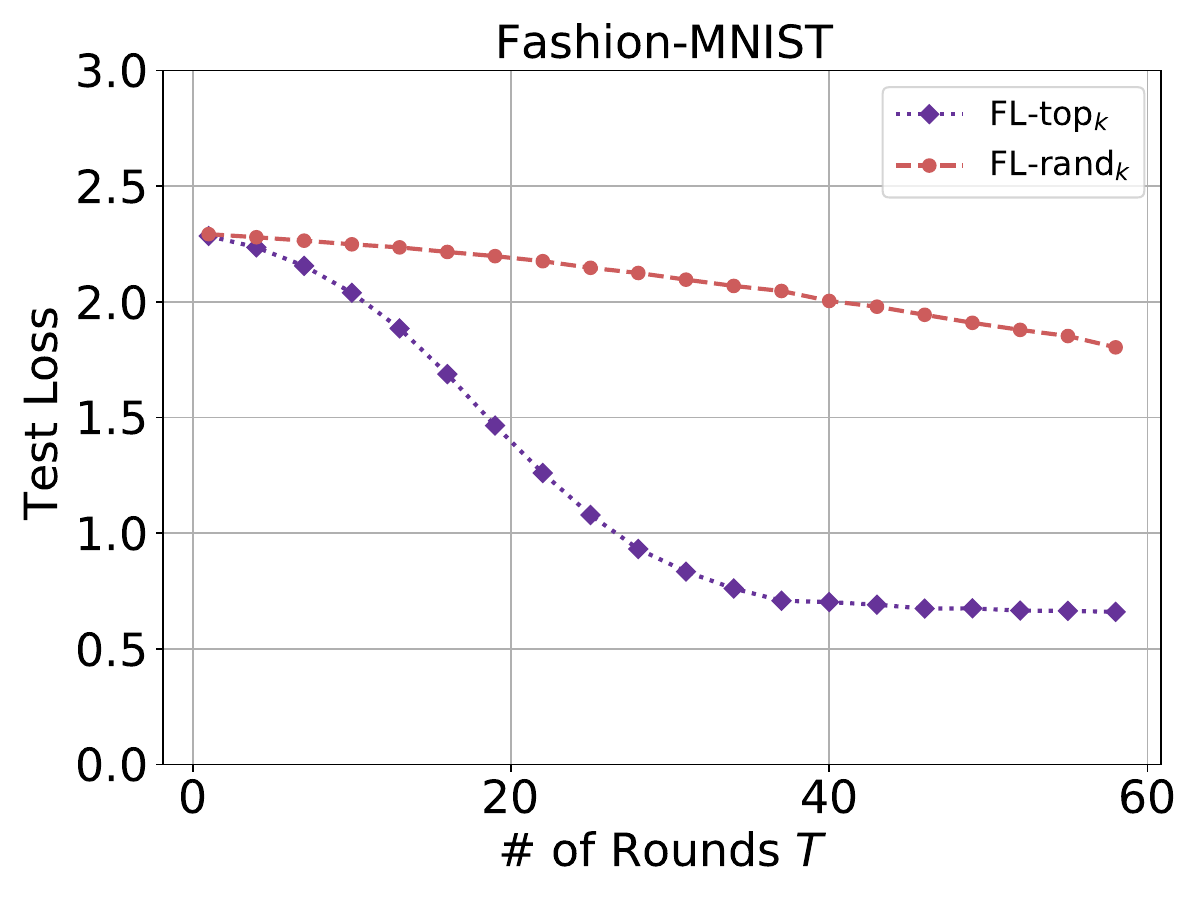}\\(f)
		\end{minipage}
		\caption{\revise{Comparison of test accuracy and test loss versus the number of training rounds $T$ for \textsf{FL-top}$_k$ and \textsf{FL-rand}$_k$ on different datasets, with the density of the sparsified gradient update $\lambda=0.5\%$.} }
		\label{fig:utility_nodp_vary_T}
		\vspace{-18pt}
		
	\end{figure}

	\section{Experiments}
	\label{sec:experiments}
	\noindent \textbf{Implementation.} We implement a prototype system of {\main} in Python. 
	Our code is open-source\footnote{https://github.com/Shuangqing-Xu/Clover}. 
	For federated model training, PyTorch is utilized. 
	We employ AES in CTR mode as PRG. 
	All experiments are conducted on a workstation equipped with an Intel Xeon CPU boasting 64 cores running at 2.20GHz, 3 NVIDIA RTX A6000 GPUs, 256G RAM, and running Ubuntu 20.04.6 LTS.
	To emulate client-server and inter-server communication, we use the loopback filesystem, with both communication delays set to 40 ms and the bandwidth limited to 100 Mbps to simulate a WAN setting.
	
	\noindent \textbf{Datasets and models.} We conduct experiments on three widely-used datasets with the following model architectures: 
	(1) \emph{MNIST} \cite{LeCunBBH98}: MNIST contains 60,000 training and 10,000 test images of handwritten digits (0-9). We use a variant of the LeNet-5~\cite{LeCunBBH98} architecture with 431,080 trainable parameters, commonly adopted in distributed learning~\cite{CaPC,ACORN}. 
	(2) \emph{Fashion-MNIST} \cite{FMNIST}: Fashion-MNIST contains 60,000 training and 10,000 testing examples of Zalando's article images. Each is a grayscale image labeled among 10 classes.
	We utilize the same LeNet-5 architecture as above.  
	(3) \emph{CIFAR-10} \cite{krizhevsky2009learning}: CIFAR-10 contains 50000 training color images and 10000 testing color images, with 10 classes. 
	We adopt the ResNet-18 \cite{HeZRS16} backbone as in \cite{LiuCCS24,ELSA}, which consists of 11,173,962 trainable parameters.
	For all the datasets, we randomly shuffle the training data records and evenly distribute them across $n$ clients. 

	\noindent \revise{\textbf{Baselines.} 
		We compare {\main}'s utility against following baselines: 
		\vspace{-12pt}
		\begin{itemize}
			\item  \textsf{DP-FedAvg} \cite{McMahanRT018}: the vanilla differentially private FedAvg algorithm, which operates under central DP and does not apply gradient sparsification. 
			\item   \textsf{FedSMP-rand$_k$} \cite{HuGG24}: an FL algorithm that combines SecAgg with DP and uses random sparsification for communication reduction, where each client only adds DP noise to the gradient values associated with the common random indices selected by the server, and then use secure aggregation to produce a noisy gradient update at server. 
			\item   \textsf{FedSel} \cite{liu2020fedsel}:  a two-stage LDP-FL algorithm combining gradient sparsification with LDP, consisting of a dimension selection (DS) stage---where each client privately selects one important dimension from its local top-$k$ index set---and a value perturbation (VP) stage, which perturbs the selected value using an LDP mechanism.
		\end{itemize}
	}
	\vspace{-3pt}
	
	To highlight the efficiency of our secure sparse vector aggregation mechanism, \textsf{SparVecAgg}, we construct a baseline that uses distributed ORAM to aggregate the sparse vectors, as described in Section \ref{sec:sparvecagg:design_rationale}. 
	This baseline is implemented using MP-SPDZ \cite{Keller20}, a state-of-the-art framework for general-purpose MPC that has been widely used for establishing baselines in recent works \cite{DautermanRPS22,WatsonWP22}. 
	In this baseline, we use the same three-server honest-majority system model as {\main} but consider a semi-honest threat model. 
	For each sparse vector, the non-zero elements and their corresponding indices are secret-shared among the servers in a 61-bit prime field with RSS. 
	Then the servers collaboratively execute the secure protocol \textsf{OptimalORAM} from MP-SPDZ for the secure aggregation of the secret-shared index-value pairs. 
	For simplicity, we denote this baseline as {\baseline}.

	\noindent \textbf{Parameters.} In all experiments, unless otherwise specified, the client sampling ratio is set to $0.1$, i.e., $|\mathcal{P}^t| / n=0.1$, where $n=100$. 
	The local learning rate $\eta_l$ is initialized to $0.1$ with a decay rate of $0.005$ and a momentum of $0.5$. 
	The number of local training epochs is set to $30$, and the local batch size is $50$. 
	For privacy parameters, the noise multiplier $\sigma$ is set to $0.8$, and the privacy failure probability $\delta$ is set to $\frac{1}{n}$. 
	The $\ell_2$-norm clipping threshold $C$ is set as $0.1$ for the MNIST and Fashion-MNIST datasets and $0.2$ for the CIFAR-10 dataset. 
	The optimal privacy parameter $\alpha$ is determined by minimizing the RDP parameter $\tau$. 
	For both {\main} and {\baseline}, we consider all elements are represented in fixed-point format over a 61-bit Mersenne prime field $p = 2^{61} - 1$ to enable efficient modular arithmetic, using 15 bits for fractional precision.

	\begin{figure}[!t]
		
		\centering
		
		\begin{minipage}[t]{0.49\linewidth}
			\centering
			\includegraphics[width=\linewidth]{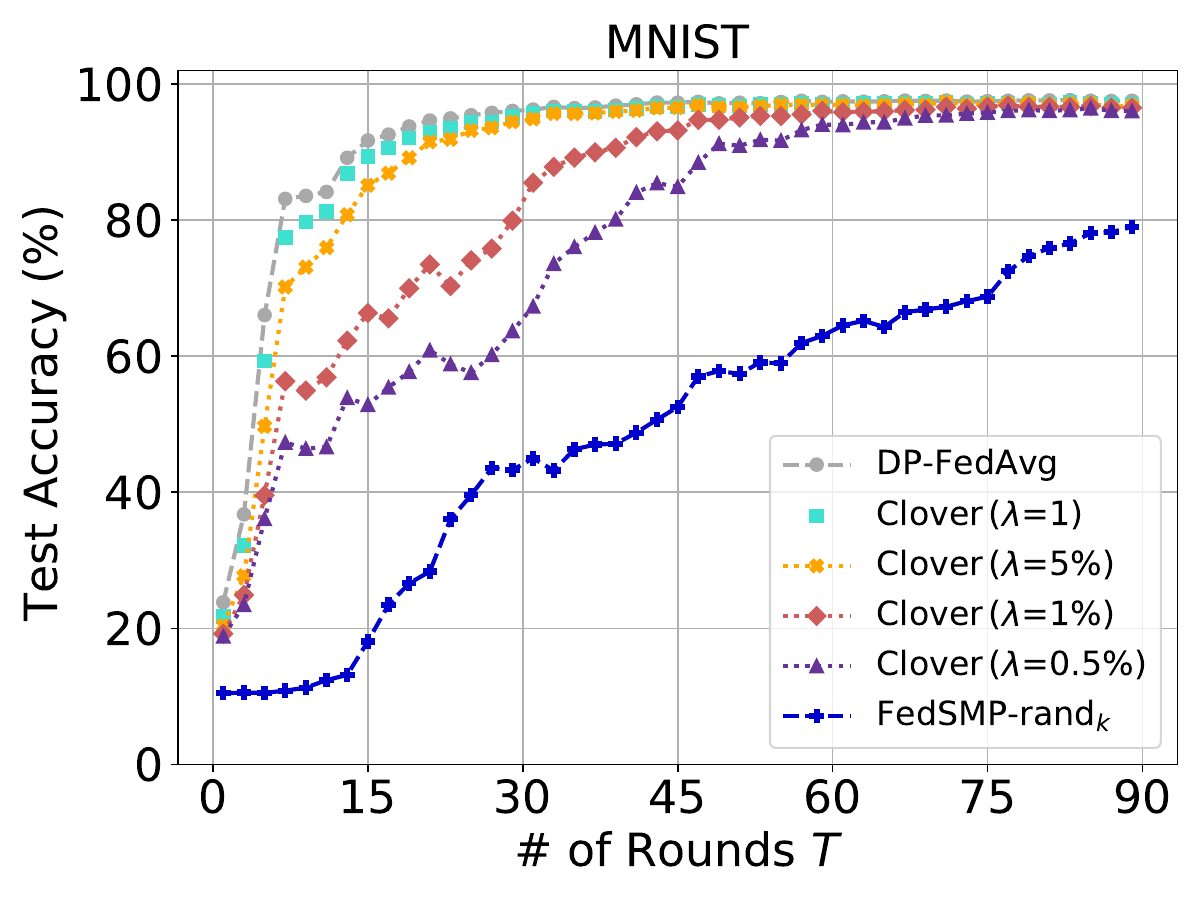}\\(a)
		\end{minipage}
		\begin{minipage}[t]{0.49\linewidth}
			\centering
			\includegraphics[width=\linewidth]{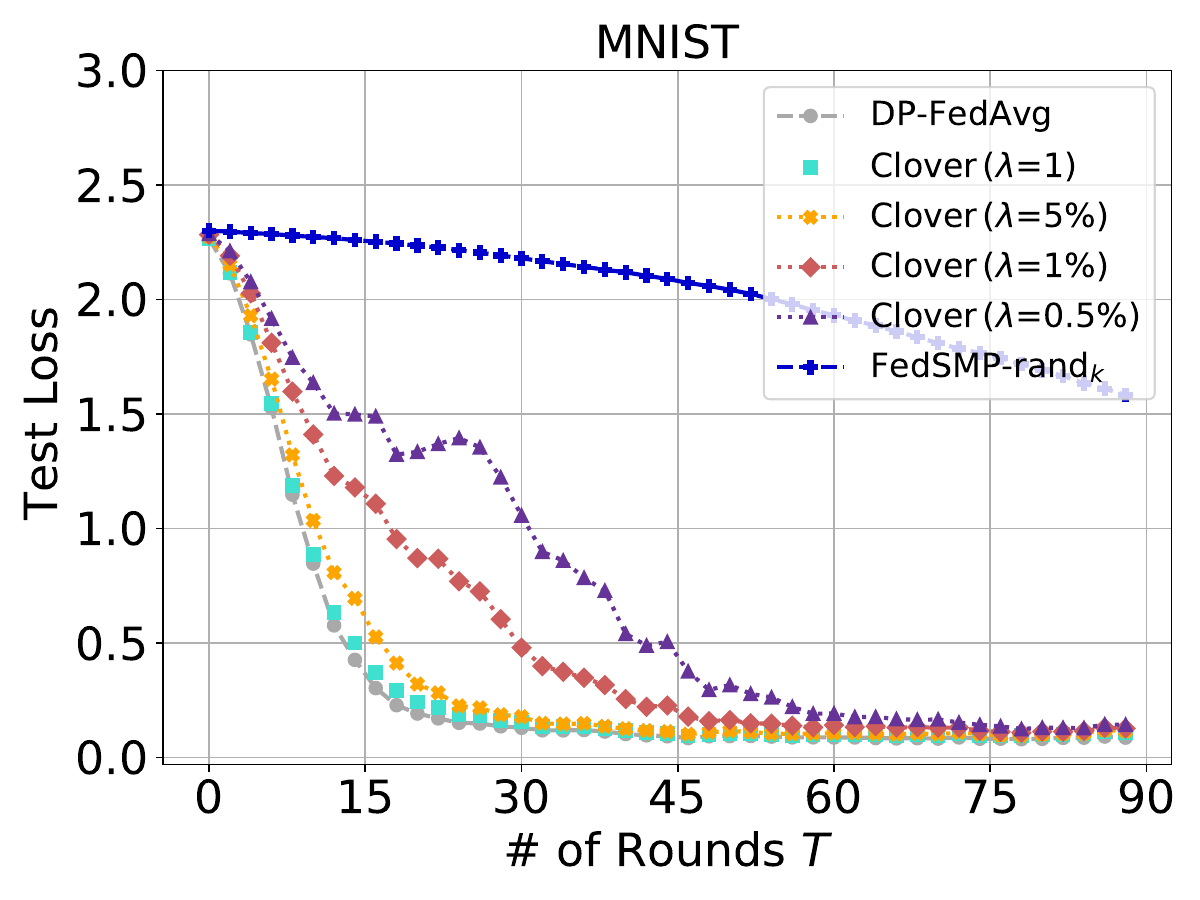}\\(b)
		\end{minipage}
		\\
		\begin{minipage}[t]{0.49\linewidth}
			\centering
			\includegraphics[width=\linewidth]{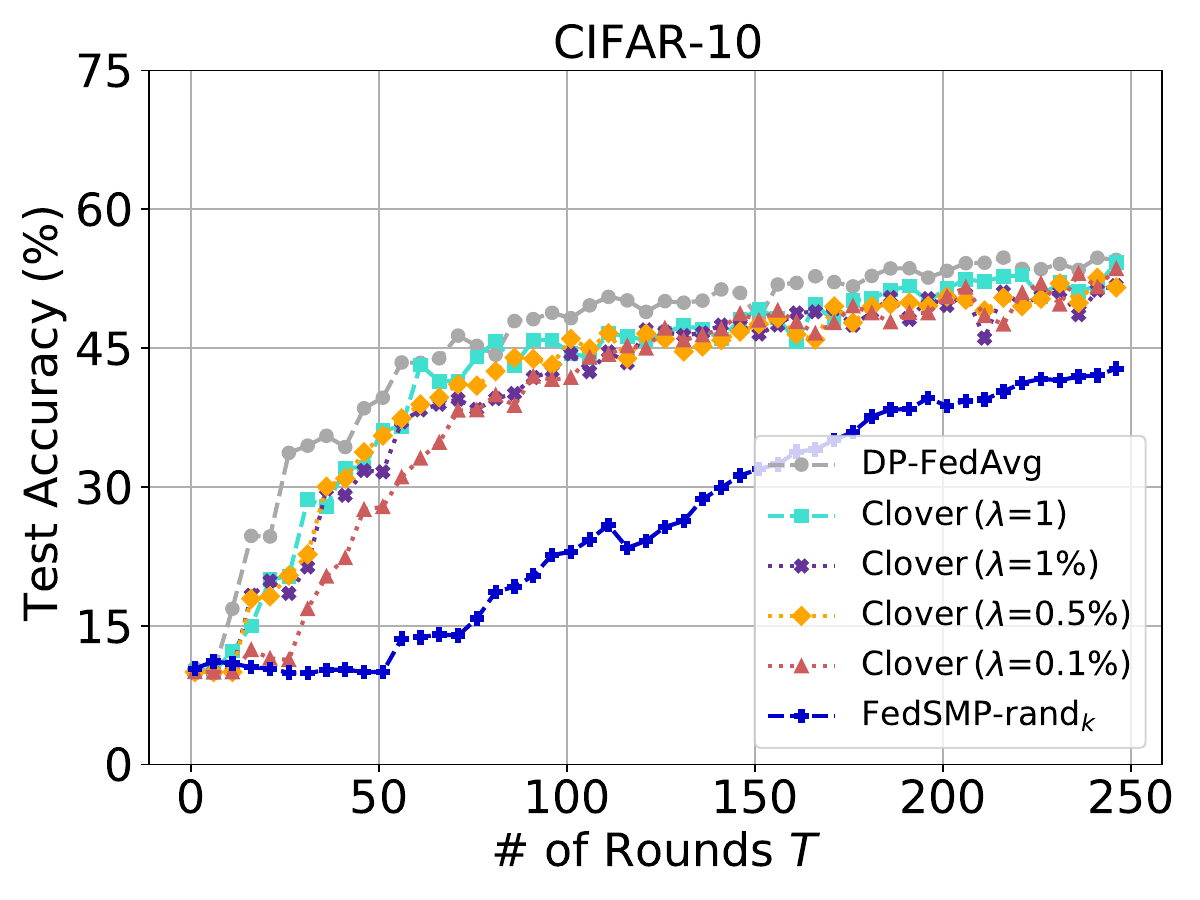}\\(c)
		\end{minipage}
		\begin{minipage}[t]{0.49\linewidth}
			\centering
			\includegraphics[width=\linewidth]{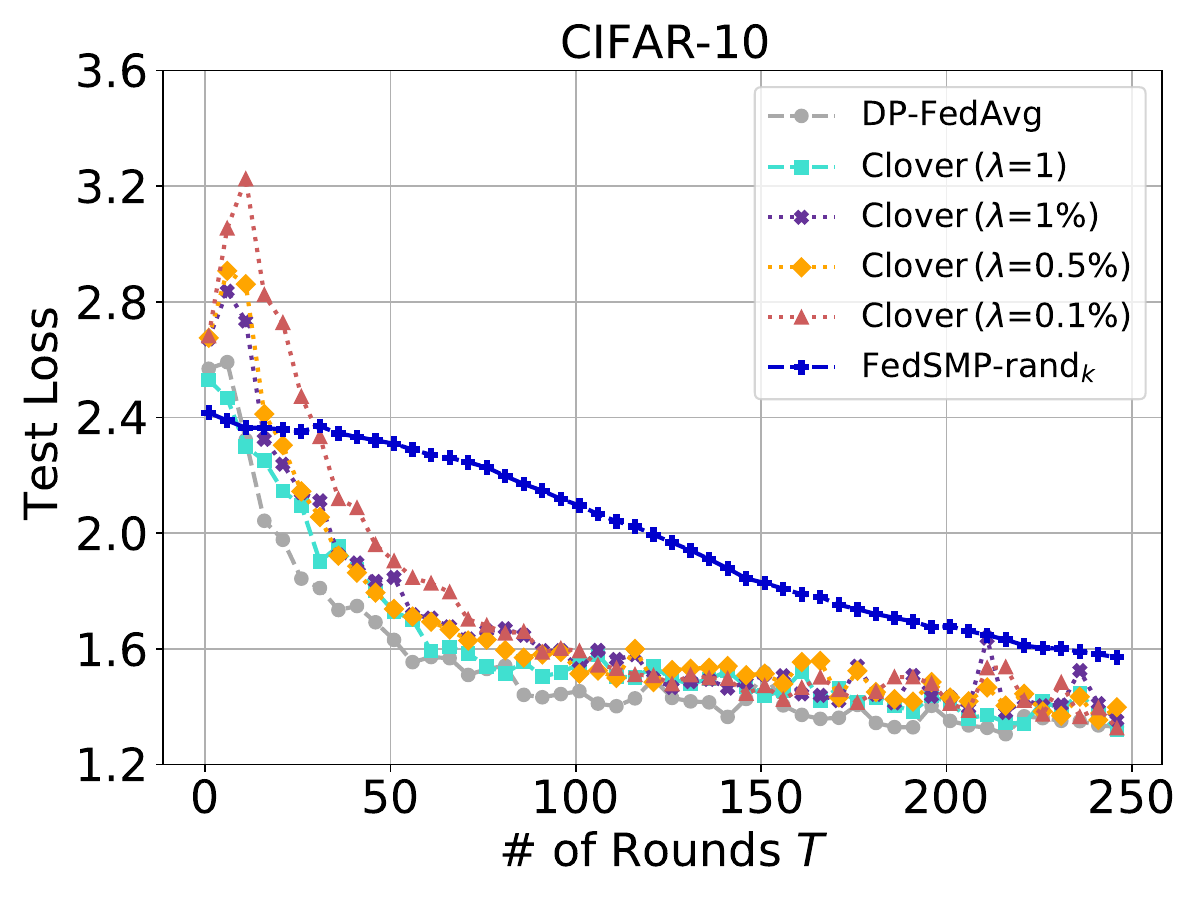}\\(d)
		\end{minipage}
		\\
		\begin{minipage}[t]{0.49\linewidth}
			\centering
			\includegraphics[width=\linewidth]{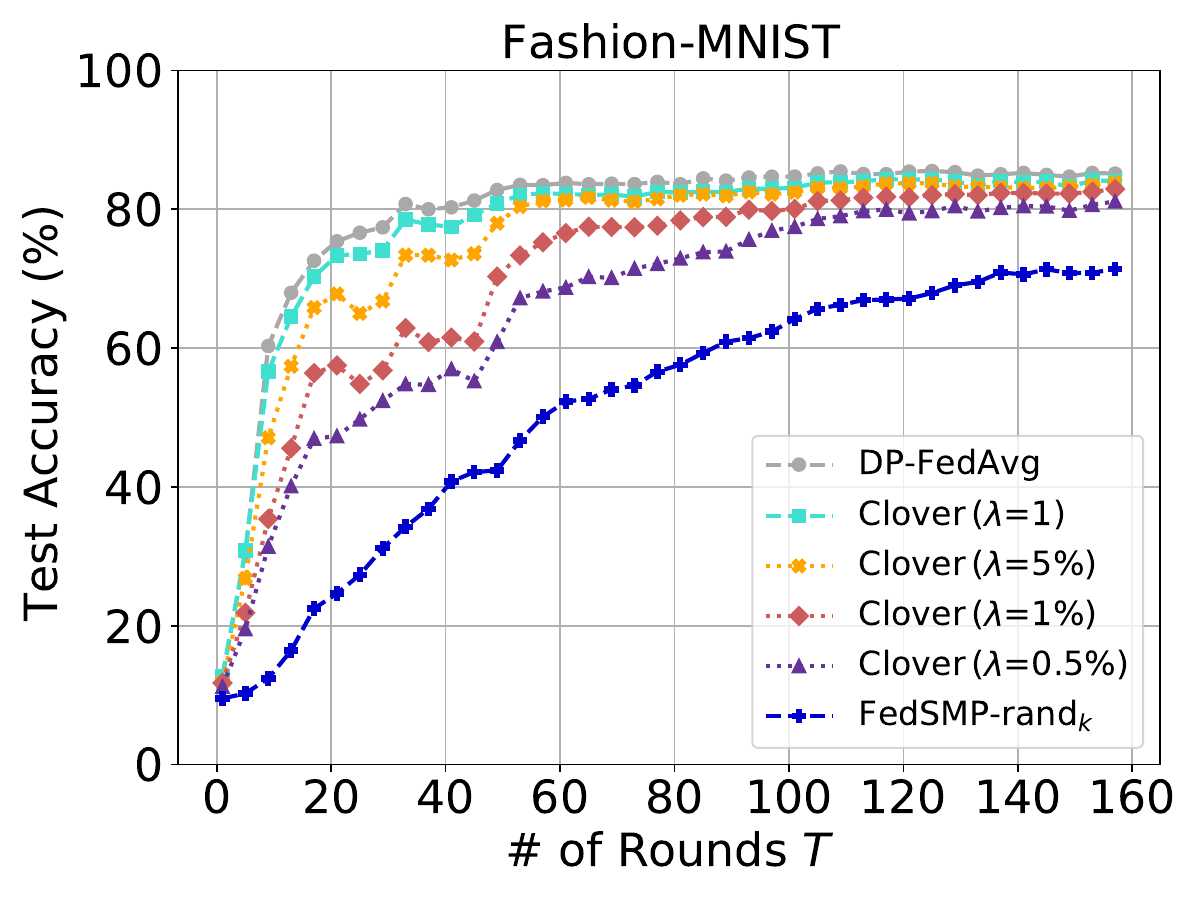}\\(e)
		\end{minipage}
		\begin{minipage}[t]{0.49\linewidth}
			\centering
			\includegraphics[width=\linewidth]{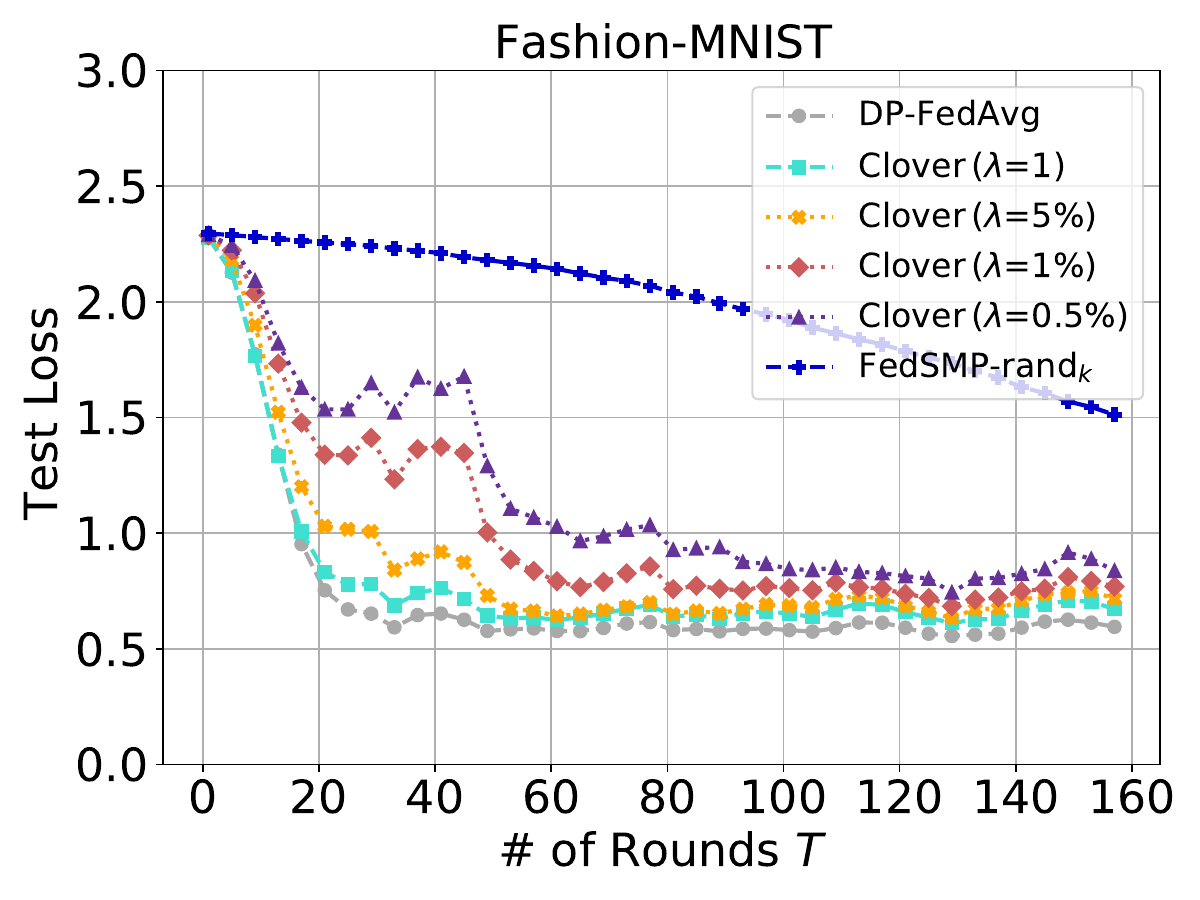}\\(f)
		\end{minipage}
		\caption{\revise{Test accuracy and test loss of {\main} and the baselines on different datasets versus the number of training rounds $T$. For all methods, the overall privacy budget is set to $\varepsilon=6.59$ for MNIST, $\varepsilon=11.96$ for CIFAR-10, and $\varepsilon=9.89$ for Fashion-MNIST.}} 
		\label{fig:utility_vary_T}
		\vspace{-15pt}
	\end{figure}

	\subsection{Utility}
	\label{sec:exp:privacy-utility-trade-off}

	\noindent\revise{\textbf{Validating the necessity of standard top-$k$ sparsification.} We start with validating the necessity of enforcing standard top-$k$ sparsification for communication reduction in FL while retaining good model utility, as opposed to random-$k$ sparsification (which is adopted by some existing secure aggregation protocols for FL to reconcile secure aggregation and gradient sparsification \cite{HuGG24,Ergun21}). 
		Specifically, we compare FL using the standard top-$k$ sparsification mechanism (denoted as \textsf{FL-top}$_k$) with FL using a na\"ive random-$k$ sparsification mechanism that randomly selects $k$ values (denoted as \textsf{FL-rand}$_k$). 
		For \textsf{FL-top}$_k$, we let each client locally sparsify its gradient update using standard top-$k$ sparsification (for different clients their local top-$k$ indices are different). 
		For \textsf{FL-rand}$_k$, we incorporate random sparsification by letting clients share a common random set of indices for sparsifying their local gradient updates, and this set is re-sampled at the beginning of each training round, as in \cite{KerkoucheEuroSP21,KerkoucheUAI21}. 
		For both approaches, we adopt the same default parameters as described at the beginning of Section \ref{sec:experiments}, with the number of training rounds $T$ set as follows: $T=45$ for MNIST, $T=180$ for CIFAR-10, $T=60$ for Fashion-MNIST, and the density of the sparsified gradient update $\lambda=0.5\%$. 
		Note that in this comparison, we do not apply $\ell_2$-norm clipping or noise addition to either \textsf{FL-top}$_k$ or \textsf{FL-rand}$_k$, as our focus is on examining the importance of enforcing standard top-$k$ sparsification for retaining good model utility.
		%
		As shown in Fig.~\ref{fig:utility_nodp_vary_T}, the results demonstrate the superior utility of \textsf{FL-top}$_k$ over \textsf{FL-rand}$_k$ across all three datasets. 
		For instance, on the MNIST dataset (Fig.~\ref{fig:utility_nodp_vary_T} (a) and (b)), \textsf{FL-top}$_k$ achieves over 90\% test accuracy within the first 10 training rounds, with the test loss decreasing correspondingly to near-zero. 
		In contrast, \textsf{FL-rand}$_k$ exhibits very slow learning progress, struggling to reach approximately 70\% accuracy after 45 rounds.
		Similar trends are observed on the CIFAR-10 and Fashion-MNIST datasets (Figs.~\ref{fig:utility_nodp_vary_T} (c)--(f)). 
		These results have validated the necessity of enforcing standard top-$k$ sparsification in FL for communication reduction while retaining good model utility.
		This in turn showcases the significance and value of our first-of-its-kind delicate design for securely applying standard top-$k$ sparsification in FL with strong privacy assurance.
	}

	\noindent\revise{\textbf{Comparing {\main} with prior works.} 
		Then we investigate the utility of {\main} by varying the number of training rounds $T$ with different density $\lambda$, comparing with \textsf{FedSMP-rand$_k$} and \textsf{DP-FedAvg}. 
		For \textsf{FedSMP-rand$_k$}, we set $\lambda=0.5\%$, and the other parameters identical to those used in {\main}. 
		From Fig.~\ref{fig:utility_vary_T}, we can observe that {\main} consistently outperforms \textsf{FedSMP-rand$_k$} across all three datasets. 
		For instance, on MNIST (Fig.~\ref{fig:utility_vary_T}(a)), {\main} ($\lambda=0.5\%$) achieves a higher final accuracy of approximately 97\%, whereas \textsf{FedSMP-rand$_k$} (with the same density $\lambda=0.5\%$) only reaches about 80\% accuracy after 90 rounds. 
		%
		%
		Moreover, from Fig.~\ref{fig:utility_vary_T}, we observe that {\main} with different $\lambda$ values achieves model utility comparable to \textsf{DP-FedAvg}: 
		although sparsification slightly delays the initial training progress, all configurations achieve similar accuracy after sufficient rounds. 
		This observation aligns with theoretical results in distributed learning \cite{WangniWLZ18}, which show that sparsification maintains the same convergence rate as dense gradient updates. 
	}
	
	\revise{
		We further investigate the utility of {\main} by varying overall privacy budget $\varepsilon$ of {\main} with different density $\lambda$. 
		We include \textsf{FedSMP-rand$_k$} (with $\lambda=0.5\%$), \textsf{FedSel}, and \textsf{DP-FedAvg} for this comparison. 
		For the LDP-FL baseline \textsf{FedSel}, we follow \cite{liu2020fedsel} and allocate the per-round privacy budget $\varepsilon_0$ in a 1:9 ratio between the DS stage (using the exponential mechanism \cite{DworkR14}) and the VP stage (using the piecewise mechanism \cite{WangXYZHSS019}). 
		We set $\varepsilon_0 = 2$ for MNIST and Fashion-MNIST, and $\varepsilon_0 = 10$ for CIFAR-10. The overall privacy budget is accounted as $T \varepsilon_0$, following \cite{liu2020fedsel}. 
		The other parameters of \textsf{FedSel} are set the same as {\main}. 
		Fig.~\ref{fig:utility-vary-eps} demonstrates the better utility of {\main} against \textsf{FedSMP-rand$_k$} and \textsf{FedSel} under different privacy budget. 
		First, when compared to \textsf{FedSMP-rand$_k$}, for any given privacy budget $\varepsilon$, {\main} consistently achieves higher test accuracy across all three datasets, indicating a better balance between privacy and utility. 
		Furthermore, when compared with the LDP-based \textsf{FedSel} method, {\main} shows a vast advantage. 
		For example, {\main} achieves over 96\% accuracy with a practical privacy budget of $\varepsilon < 10$ and fewer than 90 training rounds on MNIST. 
		In contrast, \textsf{FedSel} only achieves a final accuracy of  approximately 70\% with an impractically large privacy budget $\varepsilon = 10^4$ and a high number of training rounds $T=5000$ due to the heavy noise required by LDP.
	}

	\begin{figure}[!t]
		
		\centering
		\begin{minipage}[t]{0.45\linewidth}
			\centering
			\includegraphics[width=\linewidth]{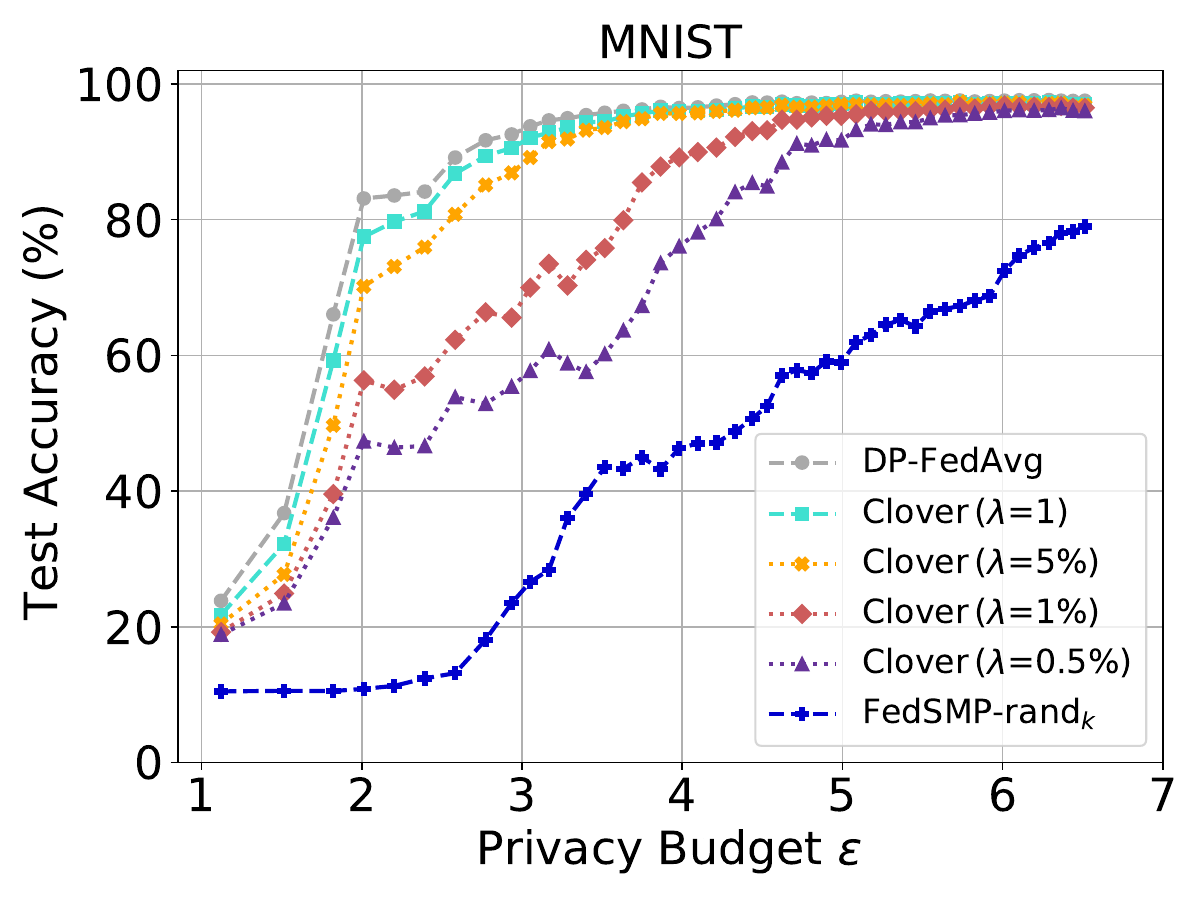}\\(a)
		\end{minipage}
		\begin{minipage}[t]{0.45\linewidth}
			\centering
			\includegraphics[width=\linewidth]{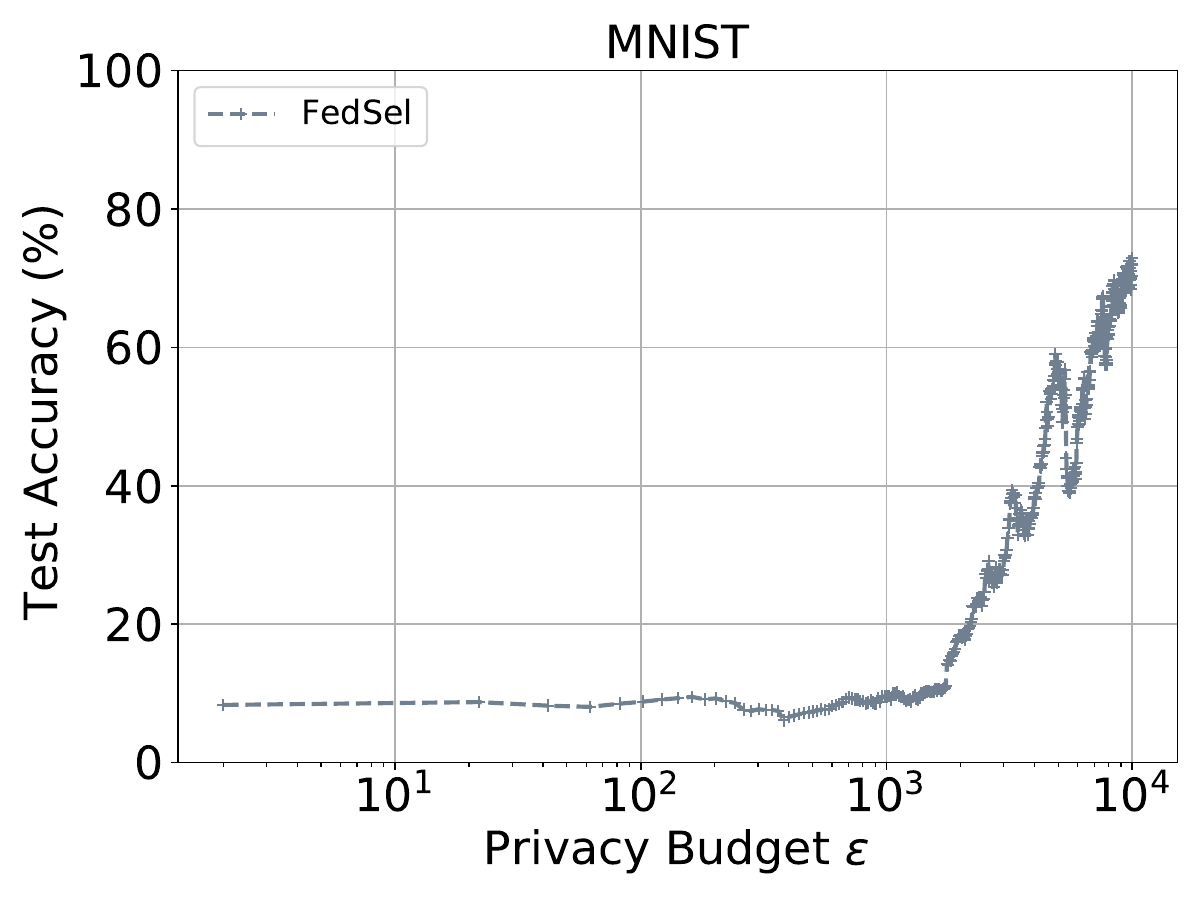}\\(b)
		\end{minipage}
		\\
		\begin{minipage}[t]{0.45\linewidth}
			\centering
			\includegraphics[width=\linewidth]{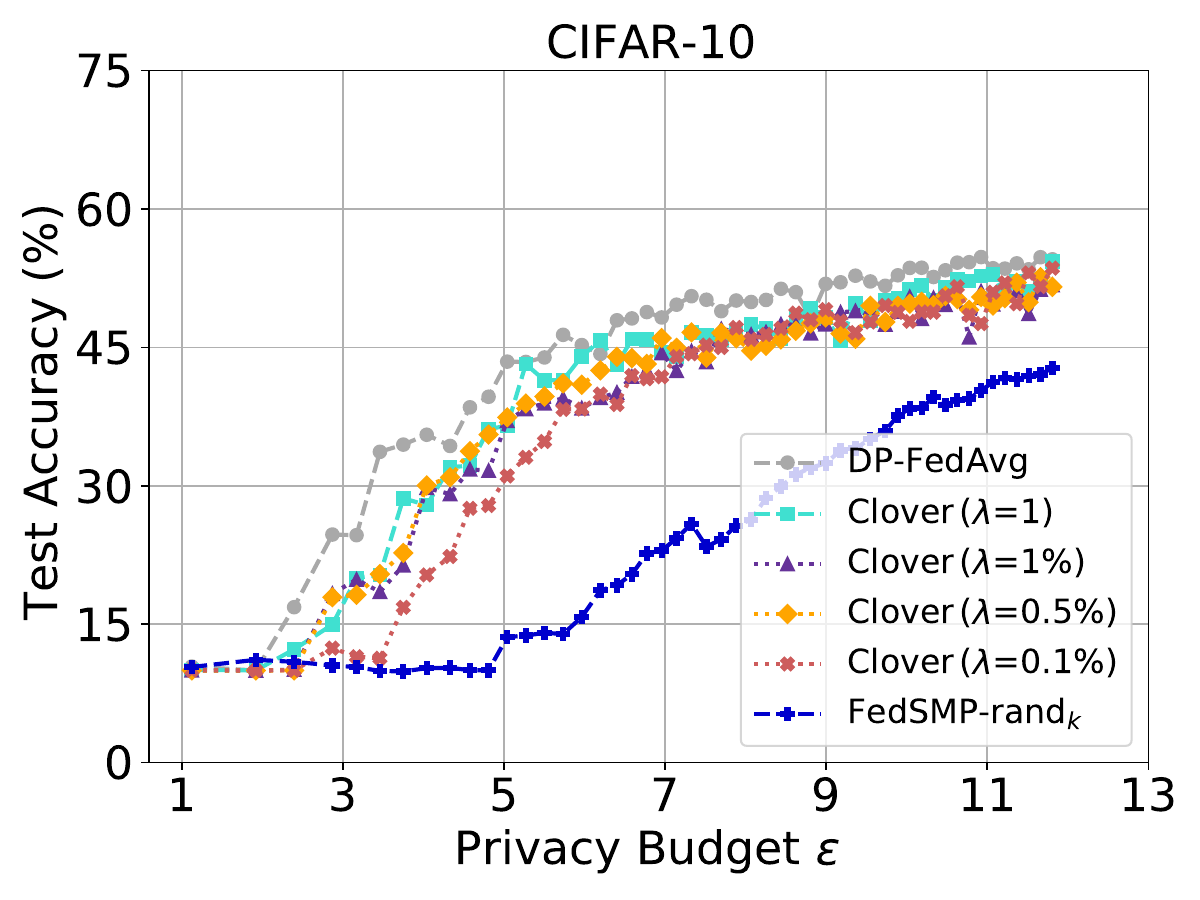}\\(c)
		\end{minipage}
		\begin{minipage}[t]{0.45\linewidth}
			\centering
			\includegraphics[width=\linewidth]{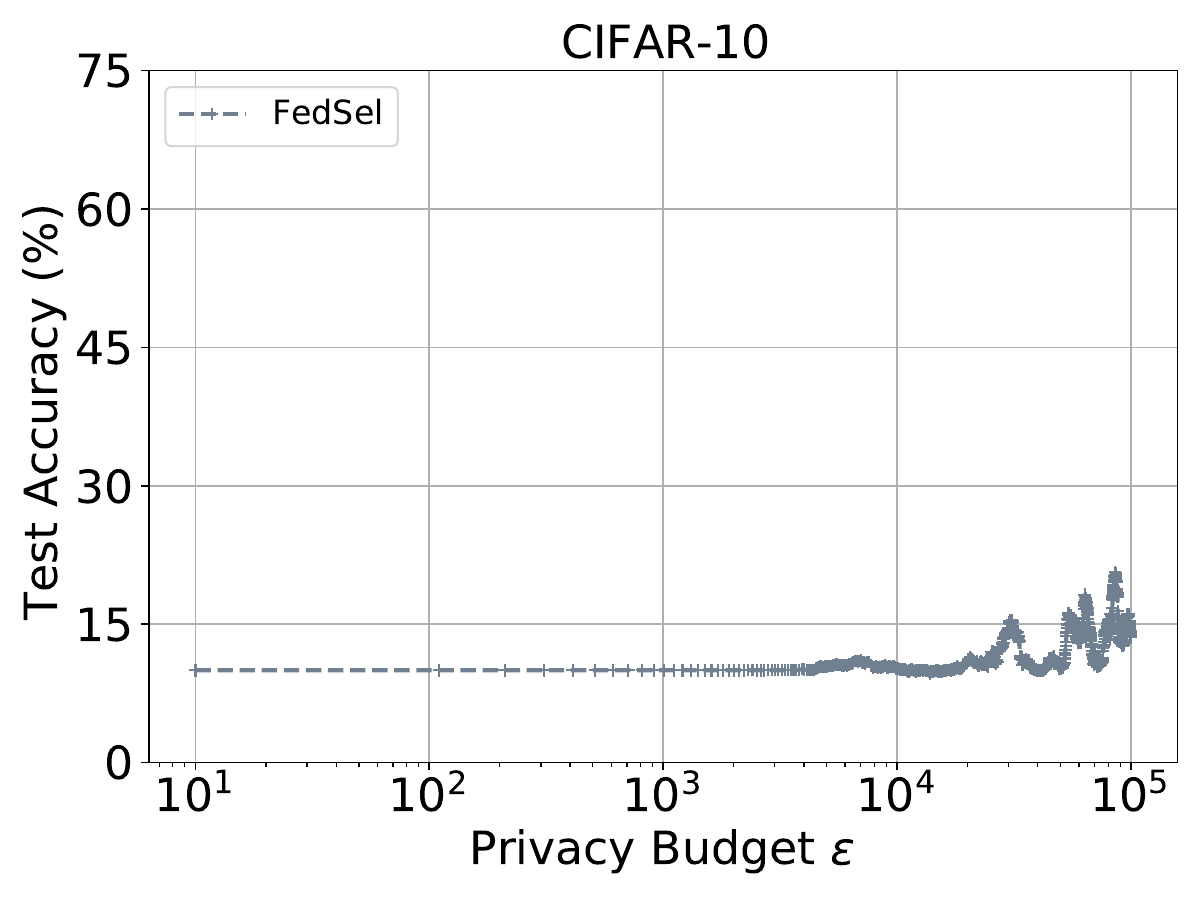}\\(d)
		\end{minipage}
		\\
		\begin{minipage}[t]{0.45\linewidth}
			\centering
			\includegraphics[width=\linewidth]{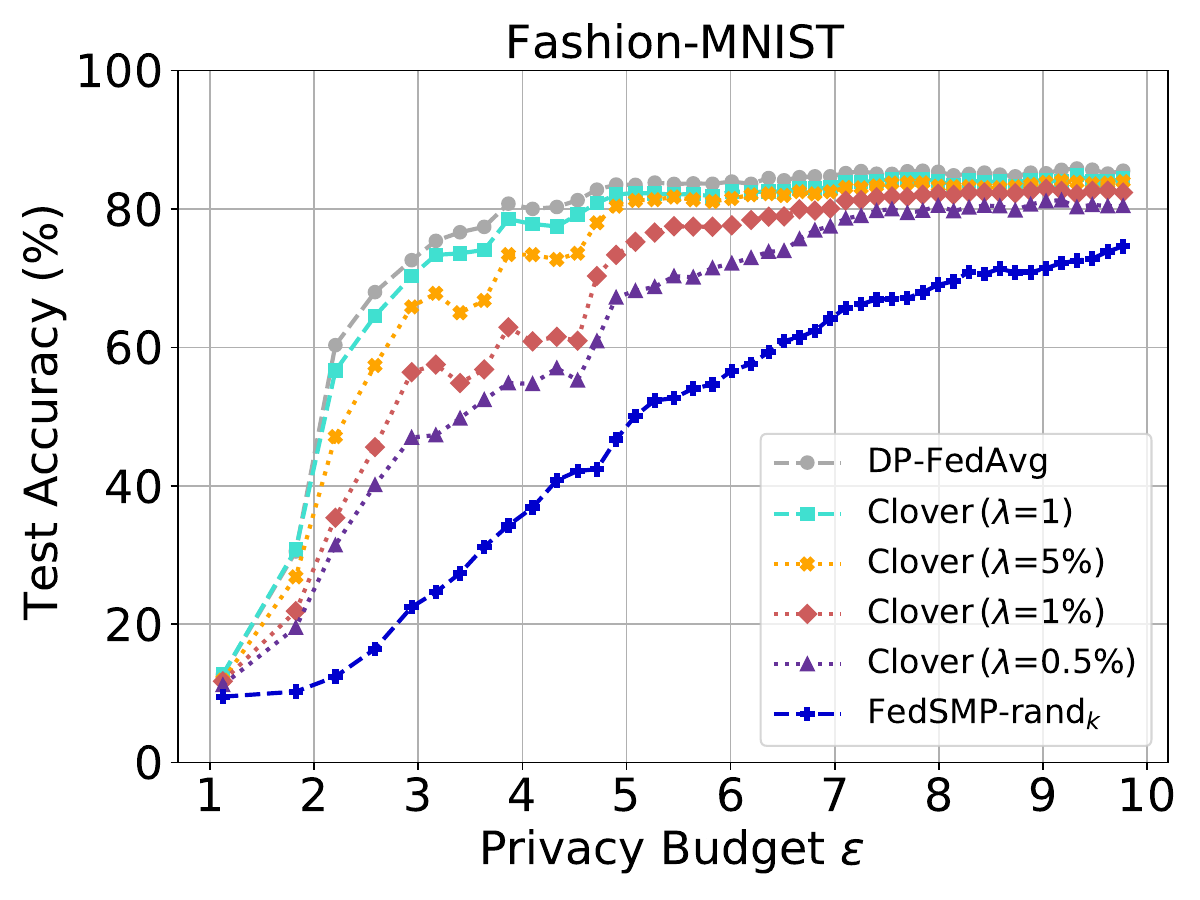}\\(e)
		\end{minipage}
		\begin{minipage}[t]{0.45\linewidth}
			\centering
			\includegraphics[width=\linewidth]{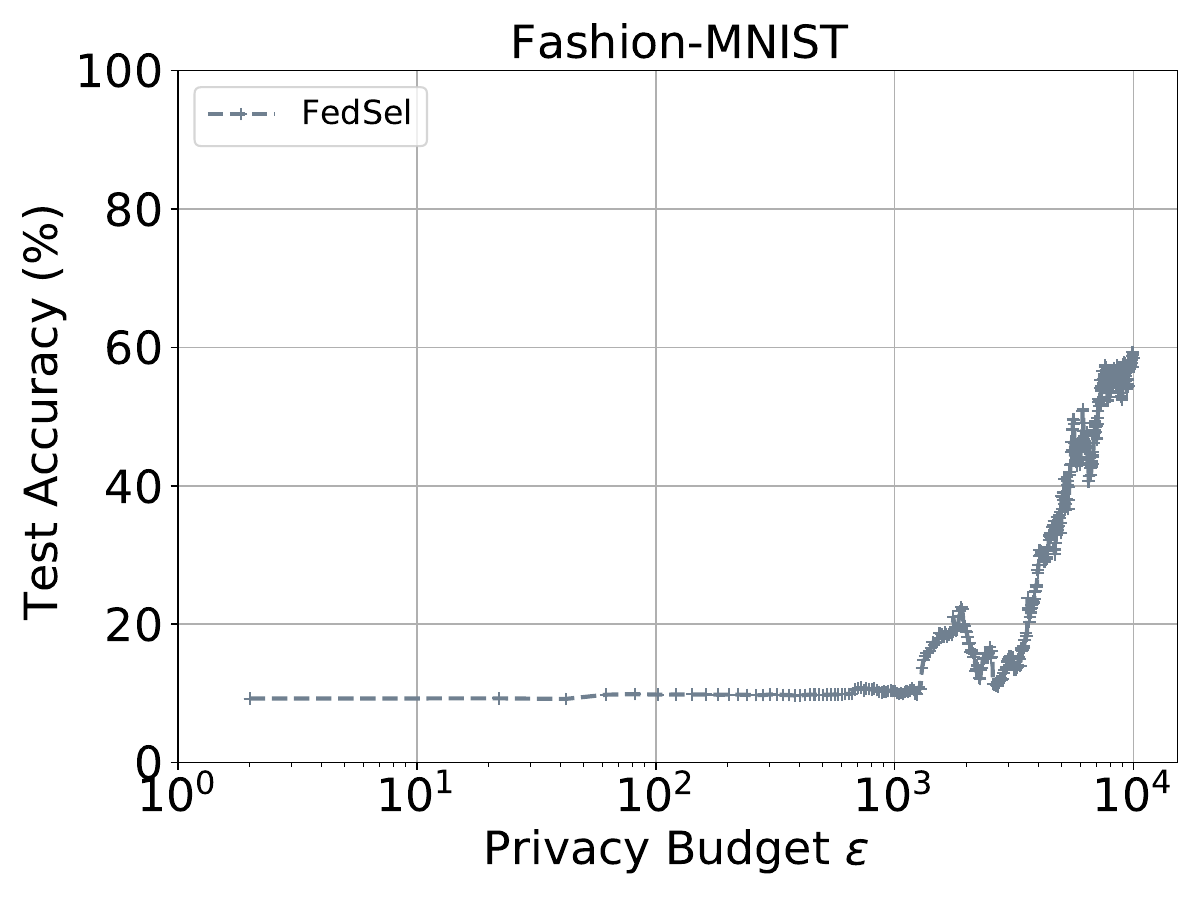}\\(f)
		\end{minipage}
		\caption{\revise{Test accuracy of {\main} and the baselines on different datasets versus the privacy budget $\varepsilon$. 
				The number of training rounds $T$ for {\main}, \textsf{DP-FedAvg}, and \textsf{FedSMP-rand$_k$} is limited at $T \le 90$ for MNIST, $T \le 250$ for CIFAR-10, and $T \le 180$ for Fashion-MNIST. 
				For \textsf{FedSel}, we limit $T \le 5000$ for MNIST and Fashion-MNIST, and $T \le 10^4$ for CIFAR-10.}} 
		\label{fig:utility-vary-eps}
		
	\end{figure}

	\begin{table*}[!t]
		\centering
		\caption{\revise{Training costs of {\main} with semi-honest security and malicious security respectively on the MNIST, Fashion-MNIST, and CIFAR-10 datasets evaluated per training round under varying density $\lambda$.}}
		\label{tab:effi_real_datasets}
		\setlength\tabcolsep{3pt}
		\scalebox{0.8}{\revise{
				\begin{tabular}{ccccccc}
					\hline 
					\multirow{2}{*}{Dataset} & \multirow{2}{*}{Threat Model} & \multirow{2}{*}{Density $\lambda$} & \multirow{2}{*}{\begin{tabular}{c}Per-Client \\ Comp. Cost (s)\end{tabular}} & \multirow{2}{*}{\begin{tabular}{c}Overall Client \\ Comm. Cost (MB)\end{tabular}} & \multirow{2}{*}{\begin{tabular}{c}Inter-Server \\ Comm. Cost (MB)\end{tabular}} & \multirow{2}{*}{\begin{tabular}{c}Server-Side \\ Overall Runtime (s)\end{tabular}} \\
					\\
					\hline
					\multirow{6}{*}{MNIST} 
					& \multirow{3}{*}{Semi-honest} & 0.5\% & 2.786 & 1.254 & 75.233 & 8.211 \\
					& & 1\%  & 2.824      & 2.508     & 75.233      & 8.168     \\
					& & 5\%  & 3.072      & 12.539     & 75.233      & 8.184     \\
					& \multirow{3}{*}{Malicious}    & 0.5\% & 3.708 & 1.255 & 200.621 & 28.082 \\
					&    & 1\%  & 3.955       & 2.509      & 200.621       & 29.362      \\
					&    & 5\%  & 3.402      & 12.540     & 200.621       & 28.307      \\
					\cline{1-7} 
					\multirow{6}{*}{CIFAR-10} 
					& \multirow{3}{*}{Semi-honest} & 0.1\% & 14.467 & 6.501 & 1950.106 & 184.581 \\
					&  & 0.5\%  & 14.464     & 32.502    & 1950.106        & 183.534       \\
					&  & 1\%  & 14.336      & 65.004     & 1950.106         &  183.572     \\
					& \multirow{3}{*}{Malicious}   & 0.1\% & 31.711 & 6.502 & 5200.284 & 675.069 \\
					&   & 0.5\%  & 31.434      & 32.503     & 5200.284        & 670.772       \\
					&   & 1\%  & 31.517      & 65.005     & 5200.284        & 673.342       \\
					\cline{1-7} 
					\multirow{6}{*}{Fashion-MNIST} 
					& \multirow{3}{*}{Semi-honest} & 0.5\% & 2.876 & 1.254 & 75.233 & 8.803 \\
					& & 1\%  & 2.478     & 2.508     & 75.233      & 8.362     \\
					& & 5\%  & 2.737     & 12.539     & 75.233      & 8.181     \\
					& \multirow{3}{*}{Malicious}    & 0.5\% & 3.250 & 1.255 & 200.621 & 28.295 \\
					&   & 1\%  & 3.743     & 2.509     & 200.621        & 28.095     \\
					&   & 5\%  & 3.198     & 12.540     & 200.621        & 28.108     \\
					\hline
			\end{tabular}}
		}
		\vspace{-10pt}
	\end{table*}
	%
	%

	\begin{figure}[t!]
		
		\centering
		
		\begin{minipage}[t]{0.45\linewidth}
			\centering
			\includegraphics[width=\linewidth]{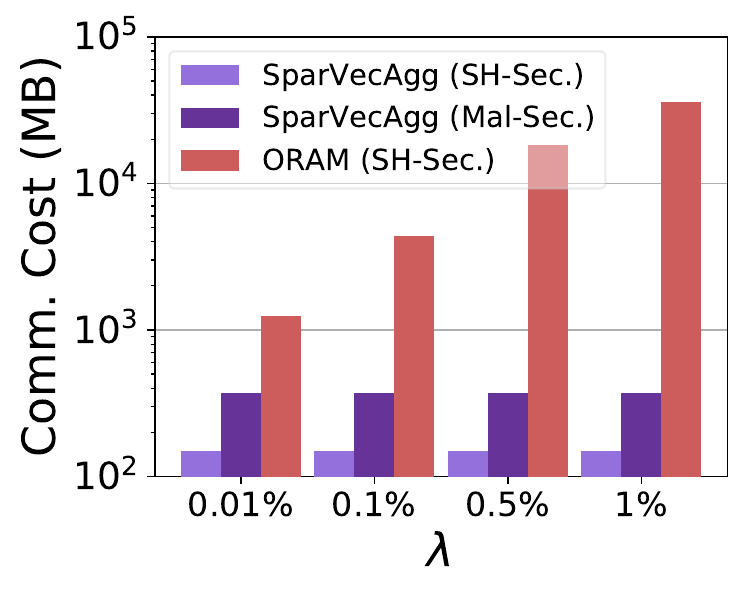}\\(a)
		\end{minipage}
		\begin{minipage}[t]{0.45\linewidth}
			\centering
			\includegraphics[width=\linewidth]{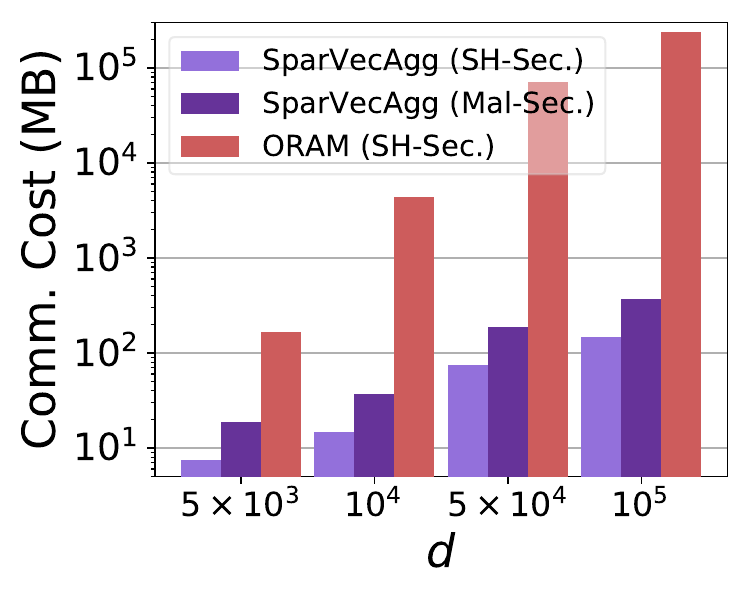}\\(b)
		\end{minipage}
		
		\caption{Comparison of inter-server communication costs for securely aggregating 100 sparse vectors under different approaches. (a) Varying the density $\lambda\in\{0.01\%,0.1\%,0.5\%,1\%\}$ and fixing $d=10^5$. (b) Varying the vector dimension $d\in\{5\times10^3,10^4,5\times10^4,10^5\}$ and fixing $\lambda=1\%$.}
	\label{fig:effi_comm_cost}
	\vspace{-10pt}
	
\end{figure}

\begin{figure}[t!]
	
	\centering
	
	\begin{minipage}[t]{0.45\linewidth}
		\centering
		\includegraphics[width=\linewidth]{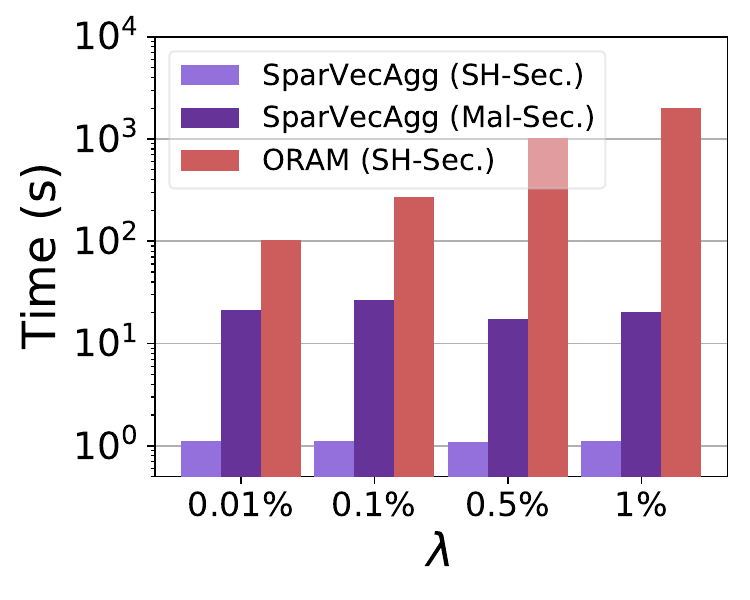}\\(a)
	\end{minipage}
	\begin{minipage}[t]{0.45\linewidth}
		\centering
		\includegraphics[width=\linewidth]{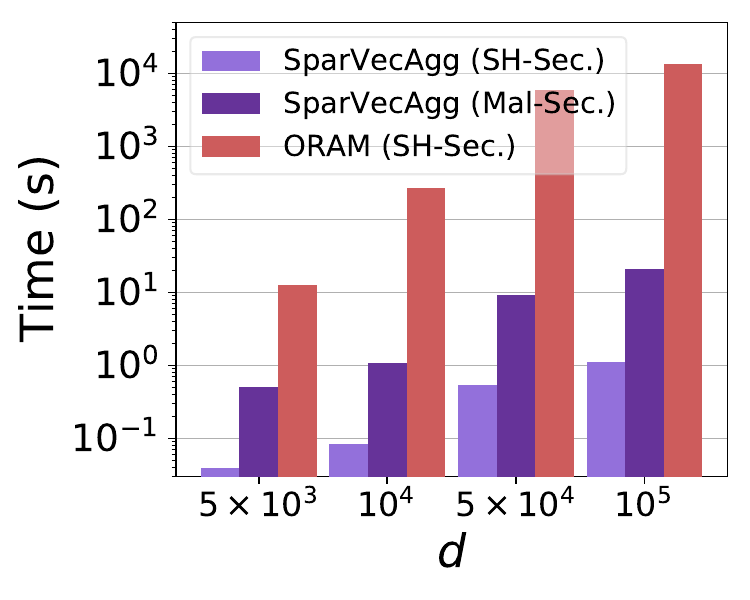}\\(b)
	\end{minipage}
	\caption{Comparison of server-side computation costs for securely aggregating 100 sparse vectors under different approaches. (a) Varying the density $\lambda\in\{0.01\%,0.1\%,0.5\%,1\%\}$ and fixing $d=10^5$. (b) Varying the vector dimension $d\in\{5\times10^3,10^4,5\times10^4,10^5\}$ and fixing $\lambda=1\%$.}
	
\label{fig:effi_run_time}
\vspace{-10pt}

\end{figure}

\vspace{-10pt}

\subsection{Efficiency}
\label{sec:exp:efficiency}

\revise{
We first evaluate the per-round training costs of {\main} under both semi-honest and malicious security on MNIST, Fashion-MNIST, and CIFAR-10, using the default parameters from Section~\ref{sec:experiments} and varying $\lambda$.
Table~\ref{tab:effi_real_datasets} summarizes the results. The per-client computation cost includes local execution time, while the client communication cost refers to the total data volume transmitted by all selected clients per round. The inter-server communication cost measures per-server data exchange during protocol execution. The server-side overall runtime captures server computation, latency, and data transfer time. 
As shown in Table~\ref{tab:effi_real_datasets}, the client communication cost scales approximately linearly with $\lambda$. For instance, on MNIST with semi-honest security, it increases from $1.25$ MB to $12.54$ MB as $\lambda$ grows from $0.5\%$ to $5\%$. Similar trends hold under malicious security and across other datasets, confirming that {\main} achieves significant bandwidth savings for clients at low $\lambda$ values. 
Moreover, this reduction is robust across different security levels, with malicious security only introducing a slight additional overhead in client communication cost compared with semi-honest security. 
Furthermore, the inter-server communication cost of the maliciously secure {\main} is only about 2.67$\times$ higher, and the server-side overall runtime is at most 3.67$\times$ that of {\main} in the semi-honest setting, which results from our proposed lightweight integrity check mechanisms. 
}


\revise{
Furthermore, we also conduct experiments on synthetic data to compare the efficiency of our sparse vector aggregation mechanism for the secure aggregation of 100 sparse vectors with both semi-honest security (denoted as \textsf{SparVecAgg (SH-Sec.)}) and malicious security (denoted as \textsf{SparVecAgg (Mal-Sec.)}) against the semi-honest baseline {\baseline}. 
The sparse vectors in each experiment are randomly generated with a specified density $\lambda$.
First, we vary $\lambda \in \{0.01\%,0.1\%,0.5\%,1\%\}$ while keeping $d = 10^5$ fixed to evaluate the inter-server communication costs and server-side computation costs of \textsf{SparVecAgg (SH-Sec.)}, \textsf{SparVecAgg (Mal-Sec.)}, and {\baseline}. 
Fig. \ref{fig:effi_comm_cost}(a) and Fig. \ref{fig:effi_run_time}(a) reveal two critical insights: (1) both the communication costs and computation costs of {\baseline} exhibit a strong positive correlation with $\lambda$, whereas \textsf{SparVecAgg (SH-Sec.)} and \textsf{SparVecAgg (Mal-Sec.)} maintain stable performance across all tested $\lambda$ values, and (2) both \textsf{SparVecAgg (SH-Sec.)} and \textsf{SparVecAgg (Mal-Sec.)} achieve significantly lower costs than {\baseline} due to our novel permutation-based computation paradigm. 
For example, with $\lambda = 1\%$, \textsf{SparVecAgg (SH-Sec.)} incurs $\sim243\times$ less communication cost and $\sim1805\times$ less computation cost than {\baseline}. 
Similarly, \textsf{SparVecAgg (Mal-Sec.)} incurs $\sim96\times$ less communication cost and $\sim99\times$ less computation cost than {\baseline}. 
}

Furthermore, we fix the density $\lambda = 1\%$ and vary the vector dimension $d \in \{5 \times 10^3, 10^4, 5 \times 10^4, 10^5\}$ to evaluate the efficiency of \textsf{SparVecAgg (SH-Sec.)}, \textsf{SparVecAgg (Mal-Sec.)}, and {\baseline} for secure aggregation of 100 sparse vectors. 
The evaluation results in Fig. \ref{fig:effi_comm_cost}(b) and Fig. \ref{fig:effi_run_time}(b) demonstrate that both \textsf{SparVecAgg (SH-Sec.)} and \textsf{SparVecAgg (Mal-Sec.)} achieve significantly lower inter-server communication costs and server-side computation costs than {\baseline}. 
For instance, when $d = 10^5$, \textsf{SparVecAgg (SH-Sec.)} incurs $\sim1602\times$ less communication cost and $\sim12041\times$ less communication cost than {\baseline}. 
Similarly, \textsf{SparVecAgg (Mal-Sec.)} incurs $\sim636\times$ less communication cost and $\sim653\times$ less communication cost than {\baseline}. 
These findings highlight {\main}'s superior suitability for high-dimensional vector aggregation in large-scale FL applications.

\section{Conclusion and Discussion}
\label{sec:conlcusion}

In this paper, we present {\main}, a novel sparsification-enabled secure and differentially private FL framework. 
Through a delicate synergy of gradient sparsification, lightweight cryptography, differential privacy, and customized data encoding, {\main} is the first system framework that can simultaneously (1) support gradient sparsification while protecting both the indices and values of top-$k$ gradients in a lightweight and secure manner, (2) offer DP guarantees on the trained models with utility comparable to central DP, and (3) ensure robustness against a malicious server through efficiently checking the integrity of server-side computation. 
%
%
Extensive evaluation results demonstrate that {\main} achieves utility comparable to \textsf{DP-FedAvg} (which works under central DP).
Compared to an ORAM-based strawman approach with only semi-honest security, our secure sparse vector aggregation mechanism, with either semi-honest security or malicious security, substantially reduces both inter-server communication costs and server-side computation costs by several orders of magnitude. 
Our experiments also demonstrate that due to our lightweight integrity check mechanisms, {\main} achieves malicious security with only modest performance overhead (limited to 2.67$\times$ increase in inter-server communication cost and 3.67$\times$ increase in server-side overall runtime), compared to the semi-honest security counterpart.

\revise{
{\main} is currently designed and built in the three-server distributed trust setting because we leverage RSS as the foundation for lightweight secure computation, which typically operates in the three-server model. 
Recall that we also use RSS-based secret-shared shuffle, of which the state-of-the-art protocol \cite{AsharovHIKNPTT22} is currently designed for the three-server setting. 
Also, RSS nicely facilitates the design for malicious security. 
In principle, it is possible to extend {\main} beyond three servers with RSS adapted to operate under $>$3 servers.  
However, this is not immediately clear. 
Doing so would need a careful protocol redesign and would inherently incur significant overhead, which requires a careful treatment.
}


\section*{Acknowledgments}

We sincerely thank the shepherd and the anonymous reviewers for their constructive and invaluable feedback. This work was supported in part by the National Natural Science Foundation of China under Grant 62572150, by the Guangdong Basic and Applied Basic Research Foundation under Grant 2024A1515012299, and by the Shenzhen Science and Technology Program under Grant JCYJ20230807094411024.

\bibliographystyle{ACM-Reference-Format}
\bibliography{reference}


\begin{thebibliography}{85}


\ifx \showCODEN    \undefined \def \showCODEN     #1{\unskip}     \fi
\ifx \showISBNx    \undefined \def \showISBNx     #1{\unskip}     \fi
\ifx \showISBNxiii \undefined \def \showISBNxiii  #1{\unskip}     \fi
\ifx \showISSN     \undefined \def \showISSN      #1{\unskip}     \fi
\ifx \showLCCN     \undefined \def \showLCCN      #1{\unskip}     \fi
\ifx \shownote     \undefined \def \shownote      #1{#1}          \fi
\ifx \showarticletitle \undefined \def \showarticletitle #1{#1}   \fi
\ifx \showURL      \undefined \def \showURL       {\relax}        \fi
\providecommand\bibfield[2]{#2}
\providecommand\bibinfo[2]{#2}
\providecommand\natexlab[1]{#1}
\providecommand\showeprint[2][]{arXiv:#2}

\bibitem[Abadi et~al\mbox{.}(2016)]%
        {DPSGD}
\bibfield{author}{\bibinfo{person}{Martin Abadi}, \bibinfo{person}{Andy Chu},
  \bibinfo{person}{Ian Goodfellow}, \bibinfo{person}{H~Brendan McMahan},
  \bibinfo{person}{Ilya Mironov}, \bibinfo{person}{Kunal Talwar}, {and}
  \bibinfo{person}{Li Zhang}.} \bibinfo{year}{2016}\natexlab{}.
\newblock \showarticletitle{Deep learning with differential privacy}. In
  \bibinfo{booktitle}{\emph{Proc. of ACM CCS}}.
\newblock


\bibitem[Agarwal et~al\mbox{.}(2021)]%
        {AgarwalKL21}
\bibfield{author}{\bibinfo{person}{Naman Agarwal}, \bibinfo{person}{Peter
  Kairouz}, {and} \bibinfo{person}{Ziyu Liu}.} \bibinfo{year}{2021}\natexlab{}.
\newblock \showarticletitle{The Skellam Mechanism for Differentially Private
  Federated Learning}. In \bibinfo{booktitle}{\emph{Proc. of NeurIPS}}.
\newblock


\bibitem[Andrew et~al\mbox{.}(2021)]%
        {AndrewTMR21}
\bibfield{author}{\bibinfo{person}{Galen Andrew}, \bibinfo{person}{Om Thakkar},
  \bibinfo{person}{Brendan McMahan}, {and} \bibinfo{person}{Swaroop
  Ramaswamy}.} \bibinfo{year}{2021}\natexlab{}.
\newblock \showarticletitle{Differentially Private Learning with Adaptive
  Clipping}. In \bibinfo{booktitle}{\emph{Proc. of {NeurIPS}}}.
\newblock


\bibitem[Araki et~al\mbox{.}(2016)]%
        {RSS}
\bibfield{author}{\bibinfo{person}{Toshinori Araki}, \bibinfo{person}{Jun
  Furukawa}, \bibinfo{person}{Yehuda Lindell}, \bibinfo{person}{Ariel Nof},
  {and} \bibinfo{person}{Kazuma Ohara}.} \bibinfo{year}{2016}\natexlab{}.
\newblock \showarticletitle{High-Throughput Semi-Honest Secure Three-Party
  Computation with an Honest Majority}. In \bibinfo{booktitle}{\emph{Proc. of
  {ACM CCS}}}.
\newblock


\bibitem[Araki et~al\mbox{.}(2021)]%
        {Araki0OPRT21}
\bibfield{author}{\bibinfo{person}{Toshinori Araki}, \bibinfo{person}{Jun
  Furukawa}, \bibinfo{person}{Kazuma Ohara}, \bibinfo{person}{Benny Pinkas},
  \bibinfo{person}{Hanan Rosemarin}, {and} \bibinfo{person}{Hikaru Tsuchida}.}
  \bibinfo{year}{2021}\natexlab{}.
\newblock \showarticletitle{Secure Graph Analysis at Scale}. In
  \bibinfo{booktitle}{\emph{Proc. of ACM CCS}}.
\newblock


\bibitem[Asharov et~al\mbox{.}(2022)]%
        {AsharovHIKNPTT22}
\bibfield{author}{\bibinfo{person}{Gilad Asharov}, \bibinfo{person}{Koki
  Hamada}, \bibinfo{person}{Dai Ikarashi}, \bibinfo{person}{Ryo Kikuchi},
  \bibinfo{person}{Ariel Nof}, \bibinfo{person}{Benny Pinkas},
  \bibinfo{person}{Katsumi Takahashi}, {and} \bibinfo{person}{Junichi Tomida}.}
  \bibinfo{year}{2022}\natexlab{}.
\newblock \showarticletitle{Efficient Secure Three-Party Sorting with
  Applications to Data Analysis and Heavy Hitters}. In
  \bibinfo{booktitle}{\emph{Proc. of ACM CCS}}.
\newblock


\bibitem[Basu et~al\mbox{.}(2019)]%
        {Qsparse}
\bibfield{author}{\bibinfo{person}{Debraj Basu}, \bibinfo{person}{Deepesh
  Data}, \bibinfo{person}{Can Karakus}, {and} \bibinfo{person}{Suhas~N.
  Diggavi}.} \bibinfo{year}{2019}\natexlab{}.
\newblock \showarticletitle{Qsparse-local-SGD: Distributed {SGD} with
  Quantization, Sparsification and Local Computations}. In
  \bibinfo{booktitle}{\emph{Proc. of NeurIPS}}.
\newblock


\bibitem[Bell et~al\mbox{.}(2023)]%
        {ACORN}
\bibfield{author}{\bibinfo{person}{James Bell}, \bibinfo{person}{Adri{\`{a}}
  Gasc{\'{o}}n}, \bibinfo{person}{Tancr{\`{e}}de Lepoint},
  \bibinfo{person}{Baiyu Li}, \bibinfo{person}{Sarah Meiklejohn},
  \bibinfo{person}{Mariana Raykova}, {and} \bibinfo{person}{Cathie Yun}.}
  \bibinfo{year}{2023}\natexlab{}.
\newblock \showarticletitle{{ACORN:} Input Validation for Secure Aggregation}.
  In \bibinfo{booktitle}{\emph{Proc. of {USENIX Security}}}.
\newblock


\bibitem[Bell et~al\mbox{.}(2020)]%
        {BellBGL020}
\bibfield{author}{\bibinfo{person}{James~Henry Bell},
  \bibinfo{person}{Kallista~A. Bonawitz}, \bibinfo{person}{Adri{\`{a}}
  Gasc{\'{o}}n}, \bibinfo{person}{Tancr{\`{e}}de Lepoint}, {and}
  \bibinfo{person}{Mariana Raykova}.} \bibinfo{year}{2020}\natexlab{}.
\newblock \showarticletitle{Secure Single-Server Aggregation with
  (Poly)Logarithmic Overhead}. In \bibinfo{booktitle}{\emph{Proc. of ACM
  {CCS}}}.
\newblock


\bibitem[Berger and Zhou(2014)]%
        {ks-test}
\bibfield{author}{\bibinfo{person}{Vance~W Berger} {and}
  \bibinfo{person}{YanYan Zhou}.} \bibinfo{year}{2014}\natexlab{}.
\newblock \showarticletitle{Kolmogorov--smirnov test: Overview}.
\newblock \bibinfo{journal}{\emph{Wiley statsref: Statistics reference online}}
  (\bibinfo{year}{2014}).
\newblock


\bibitem[Bonawitz et~al\mbox{.}(2017)]%
        {BonawitzIKMMPRS17}
\bibfield{author}{\bibinfo{person}{Kallista~A. Bonawitz},
  \bibinfo{person}{Vladimir Ivanov}, \bibinfo{person}{Ben Kreuter},
  \bibinfo{person}{Antonio Marcedone}, \bibinfo{person}{H.~Brendan McMahan},
  \bibinfo{person}{Sarvar Patel}, \bibinfo{person}{Daniel Ramage},
  \bibinfo{person}{Aaron Segal}, {and} \bibinfo{person}{Karn Seth}.}
  \bibinfo{year}{2017}\natexlab{}.
\newblock \showarticletitle{Practical Secure Aggregation for Privacy-Preserving
  Machine Learning}. In \bibinfo{booktitle}{\emph{Proc. of ACM {CCS}}}.
\newblock


\bibitem[Br{\"u}ggemann et~al\mbox{.}(2024)]%
        {bruggemann2024don}
\bibfield{author}{\bibinfo{person}{Andreas Br{\"u}ggemann},
  \bibinfo{person}{Oliver Schick}, \bibinfo{person}{Thomas Schneider},
  \bibinfo{person}{Ajith Suresh}, {and} \bibinfo{person}{Hossein Yalame}.}
  \bibinfo{year}{2024}\natexlab{}.
\newblock \showarticletitle{Don't eject the impostor: fast three-party
  computation with a known cheater}. In \bibinfo{booktitle}{\emph{Proc. of
  {IEEE} {S\&P}}}.
\newblock


\bibitem[Bunn et~al\mbox{.}(2020)]%
        {ORAM1}
\bibfield{author}{\bibinfo{person}{Paul Bunn}, \bibinfo{person}{Jonathan Katz},
  \bibinfo{person}{Eyal Kushilevitz}, {and} \bibinfo{person}{Rafail
  Ostrovsky}.} \bibinfo{year}{2020}\natexlab{}.
\newblock \showarticletitle{Efficient 3-Party Distributed {ORAM}}. In
  \bibinfo{booktitle}{\emph{Proc. of {SCN}}}.
\newblock


\bibitem[Canonne et~al\mbox{.}(2020)]%
        {Canonne0S20}
\bibfield{author}{\bibinfo{person}{Cl{\'{e}}ment~L. Canonne},
  \bibinfo{person}{Gautam Kamath}, {and} \bibinfo{person}{Thomas Steinke}.}
  \bibinfo{year}{2020}\natexlab{}.
\newblock \showarticletitle{The Discrete Gaussian for Differential Privacy}. In
  \bibinfo{booktitle}{\emph{Proc. of NeurIPS}}.
\newblock


\bibitem[Chamikara et~al\mbox{.}(2022)]%
        {ChamikaraLCNGBK22}
\bibfield{author}{\bibinfo{person}{Mahawaga Arachchige~Pathum Chamikara},
  \bibinfo{person}{Dongxi Liu}, \bibinfo{person}{Seyit Camtepe},
  \bibinfo{person}{Surya Nepal}, \bibinfo{person}{Marthie Grobler},
  \bibinfo{person}{Peter Bert{\'{o}}k}, {and} \bibinfo{person}{Ibrahim
  Khalil}.} \bibinfo{year}{2022}\natexlab{}.
\newblock \showarticletitle{Local Differential Privacy for Federated Learning}.
  In \bibinfo{booktitle}{\emph{Proc. of {ESORICS}}}.
\newblock


\bibitem[Chase et~al\mbox{.}(2020)]%
        {ChaseGP20}
\bibfield{author}{\bibinfo{person}{Melissa Chase}, \bibinfo{person}{Esha
  Ghosh}, {and} \bibinfo{person}{Oxana Poburinnaya}.}
  \bibinfo{year}{2020}\natexlab{}.
\newblock \showarticletitle{Secret-Shared Shuffle}. In
  \bibinfo{booktitle}{\emph{Proc. of {ASIACRYPT}}}.
\newblock


\bibitem[Chen et~al\mbox{.}(2022)]%
        {ChenCKS22}
\bibfield{author}{\bibinfo{person}{Wei{-}Ning Chen},
  \bibinfo{person}{Christopher~A. Choquette{-}Choo}, \bibinfo{person}{Peter
  Kairouz}, {and} \bibinfo{person}{Ananda~Theertha Suresh}.}
  \bibinfo{year}{2022}\natexlab{}.
\newblock \showarticletitle{The Fundamental Price of Secure Aggregation in
  Differentially Private Federated Learning}. In
  \bibinfo{booktitle}{\emph{Proc. of {ICML}}}.
\newblock


\bibitem[Cheng et~al\mbox{.}(2022)]%
        {CVPR22}
\bibfield{author}{\bibinfo{person}{Anda Cheng}, \bibinfo{person}{Peisong Wang},
  \bibinfo{person}{Xi~Sheryl Zhang}, {and} \bibinfo{person}{Jian Cheng}.}
  \bibinfo{year}{2022}\natexlab{}.
\newblock \showarticletitle{Differentially Private Federated Learning with
  Local Regularization and Sparsification}. In \bibinfo{booktitle}{\emph{Proc.
  of {IEEE/CVF} {CVPR}}}.
\newblock


\bibitem[Chida et~al\mbox{.}(2018)]%
        {ChidaGHIKLN18}
\bibfield{author}{\bibinfo{person}{Koji Chida}, \bibinfo{person}{Daniel
  Genkin}, \bibinfo{person}{Koki Hamada}, \bibinfo{person}{Dai Ikarashi},
  \bibinfo{person}{Ryo Kikuchi}, \bibinfo{person}{Yehuda Lindell}, {and}
  \bibinfo{person}{Ariel Nof}.} \bibinfo{year}{2018}\natexlab{}.
\newblock \showarticletitle{Fast Large-Scale Honest-Majority {MPC} for
  Malicious Adversaries}. In \bibinfo{booktitle}{\emph{Proc. of {CRYPTO}}}.
\newblock


\bibitem[Choquette{-}Choo et~al\mbox{.}(2021)]%
        {CaPC}
\bibfield{author}{\bibinfo{person}{Christopher~A. Choquette{-}Choo},
  \bibinfo{person}{Natalie Dullerud}, \bibinfo{person}{Adam Dziedzic},
  \bibinfo{person}{Yunxiang Zhang}, \bibinfo{person}{Somesh Jha},
  \bibinfo{person}{Nicolas Papernot}, {and} \bibinfo{person}{Xiao Wang}.}
  \bibinfo{year}{2021}\natexlab{}.
\newblock \showarticletitle{CaPC Learning: Confidential and Private
  Collaborative Learning}. In \bibinfo{booktitle}{\emph{Proc. of {ICLR}}}.
\newblock


\bibitem[Corrigan{-}Gibbs and Boneh(2017)]%
        {prio}
\bibfield{author}{\bibinfo{person}{Henry Corrigan{-}Gibbs} {and}
  \bibinfo{person}{Dan Boneh}.} \bibinfo{year}{2017}\natexlab{}.
\newblock \showarticletitle{Prio: Private, Robust, and Scalable Computation of
  Aggregate Statistics}. In \bibinfo{booktitle}{\emph{Proc. of {USENIX}
  {NSDI}}}.
\newblock


\bibitem[Das et~al\mbox{.}(2025)]%
        {PETS25}
\bibfield{author}{\bibinfo{person}{Sankha Das}, \bibinfo{person}{Sayak~Ray
  Chowdhury}, \bibinfo{person}{Nishanth Chandran}, \bibinfo{person}{Divya
  Gupta}, \bibinfo{person}{Satya Lokam}, {and} \bibinfo{person}{Rahul Sharma}.}
  \bibinfo{year}{2025}\natexlab{}.
\newblock \showarticletitle{Communication Efficient Secure and Private
  Multi-Party Deep Learning}.
\newblock \bibinfo{journal}{\emph{Proc. Priv. Enhancing Technol.}}
  \bibinfo{volume}{2025}, \bibinfo{number}{1} (\bibinfo{year}{2025}),
  \bibinfo{pages}{169--183}.
\newblock


\bibitem[Dauterman et~al\mbox{.}(2022)]%
        {DautermanRPS22}
\bibfield{author}{\bibinfo{person}{Emma Dauterman}, \bibinfo{person}{Mayank
  Rathee}, \bibinfo{person}{Raluca~Ada Popa}, {and} \bibinfo{person}{Ion
  Stoica}.} \bibinfo{year}{2022}\natexlab{}.
\newblock \showarticletitle{Waldo: {A} Private Time-Series Database from
  Function Secret Sharing}. In \bibinfo{booktitle}{\emph{Proc. of {IEEE}
  {S\&P}}}.
\newblock


\bibitem[DeMillo and Lipton(1978)]%
        {DemilloL78}
\bibfield{author}{\bibinfo{person}{Richard~A. DeMillo} {and}
  \bibinfo{person}{Richard~J. Lipton}.} \bibinfo{year}{1978}\natexlab{}.
\newblock \showarticletitle{A Probabilistic Remark on Algebraic Program
  Testing}.
\newblock \bibinfo{journal}{\emph{Inf. Process. Lett.}} \bibinfo{volume}{7},
  \bibinfo{number}{4} (\bibinfo{year}{1978}), \bibinfo{pages}{193--195}.
\newblock


\bibitem[Dwork and Roth(2014)]%
        {DworkR14}
\bibfield{author}{\bibinfo{person}{Cynthia Dwork} {and} \bibinfo{person}{Aaron
  Roth}.} \bibinfo{year}{2014}\natexlab{}.
\newblock \showarticletitle{The Algorithmic Foundations of Differential
  Privacy}.
\newblock \bibinfo{journal}{\emph{Found. Trends Theor. Comput. Sci.}}
  \bibinfo{volume}{9}, \bibinfo{number}{3-4} (\bibinfo{year}{2014}),
  \bibinfo{pages}{211--407}.
\newblock


\bibitem[Erg{\"{u}}n et~al\mbox{.}(2021)]%
        {Ergun21}
\bibfield{author}{\bibinfo{person}{Irem Erg{\"{u}}n}, \bibinfo{person}{Hasin~Us
  Sami}, {and} \bibinfo{person}{Basak Guler}.} \bibinfo{year}{2021}\natexlab{}.
\newblock \showarticletitle{Sparsified Secure Aggregation for
  Privacy-Preserving Federated Learning}.
\newblock \bibinfo{journal}{\emph{CoRR}}  \bibinfo{volume}{abs/2112.12872}
  (\bibinfo{year}{2021}).
\newblock


\bibitem[Eskandarian and Boneh(2022)]%
        {NDSS22}
\bibfield{author}{\bibinfo{person}{Saba Eskandarian} {and} \bibinfo{person}{Dan
  Boneh}.} \bibinfo{year}{2022}\natexlab{}.
\newblock \showarticletitle{Clarion: Anonymous Communication from Multiparty
  Shuffling Protocols}. In \bibinfo{booktitle}{\emph{Proc. of {NDSS}}}.
\newblock


\bibitem[Fu and Wang(2024)]%
        {FuW24}
\bibfield{author}{\bibinfo{person}{Yucheng Fu} {and} \bibinfo{person}{Tianhao
  Wang}.} \bibinfo{year}{2024}\natexlab{}.
\newblock \showarticletitle{Benchmarking Secure Sampling Protocols for
  Differential Privacy}. In \bibinfo{booktitle}{\emph{Proc. of {ACM CCS}}}.
\newblock


\bibitem[Gehlhar et~al\mbox{.}(2023)]%
        {GehlharM0SWY23}
\bibfield{author}{\bibinfo{person}{Till Gehlhar}, \bibinfo{person}{Felix Marx},
  \bibinfo{person}{Thomas Schneider}, \bibinfo{person}{Ajith Suresh},
  \bibinfo{person}{Tobias Wehrle}, {and} \bibinfo{person}{Hossein Yalame}.}
  \bibinfo{year}{2023}\natexlab{}.
\newblock \showarticletitle{SafeFL: MPC-friendly Framework for Private and
  Robust Federated Learning}. In \bibinfo{booktitle}{\emph{Proc. of {IEEE}
  {S\&P} Workshops (SPW)}}.
\newblock


\bibitem[Geng et~al\mbox{.}(2020)]%
        {GengDGK20}
\bibfield{author}{\bibinfo{person}{Quan Geng}, \bibinfo{person}{Wei Ding},
  \bibinfo{person}{Ruiqi Guo}, {and} \bibinfo{person}{Sanjiv Kumar}.}
  \bibinfo{year}{2020}\natexlab{}.
\newblock In \bibinfo{booktitle}{\emph{Proc. of {AISTATS}}}.
\newblock


\bibitem[Geyer et~al\mbox{.}(2017)]%
        {Geyer17}
\bibfield{author}{\bibinfo{person}{Robin~C. Geyer}, \bibinfo{person}{Tassilo
  Klein}, {and} \bibinfo{person}{Moin Nabi}.} \bibinfo{year}{2017}\natexlab{}.
\newblock \showarticletitle{Differentially Private Federated Learning: {A}
  Client Level Perspective}. In \bibinfo{booktitle}{\emph{Proc. of NeurIPS
  Workshop: Machine Learning on the Phone and other Consumer Devices}}.
\newblock


\bibitem[Ghadimi and Lan(2013)]%
        {GhadimiL13a}
\bibfield{author}{\bibinfo{person}{Saeed Ghadimi} {and}
  \bibinfo{person}{Guanghui Lan}.} \bibinfo{year}{2013}\natexlab{}.
\newblock \showarticletitle{Stochastic First- and Zeroth-Order Methods for
  Nonconvex Stochastic Programming}.
\newblock \bibinfo{journal}{\emph{{SIAM} J. Optim.}} \bibinfo{volume}{23},
  \bibinfo{number}{4} (\bibinfo{year}{2013}), \bibinfo{pages}{2341--2368}.
\newblock


\bibitem[{Google Privacy Team}(2020)]%
        {google2020secure}
\bibfield{author}{\bibinfo{person}{{Google Privacy Team}}.}
  \bibinfo{year}{2020}\natexlab{}.
\newblock \bibinfo{title}{Secure Noise Generation}.
\newblock \bibinfo{howpublished}{Google}.
\newblock


\bibitem[He et~al\mbox{.}(2024)]%
        {he2024rhombus}
\bibfield{author}{\bibinfo{person}{Jiaxing He}, \bibinfo{person}{Kang Yang},
  \bibinfo{person}{Guofeng Tang}, \bibinfo{person}{Zhangjie Huang},
  \bibinfo{person}{Li Lin}, \bibinfo{person}{Changzheng Wei},
  \bibinfo{person}{Ying Yan}, {and} \bibinfo{person}{Wei Wang}.}
  \bibinfo{year}{2024}\natexlab{}.
\newblock \showarticletitle{Rhombus: Fast Homomorphic Matrix-Vector
  Multiplication for Secure Two-Party Inference}. In
  \bibinfo{booktitle}{\emph{Proc. of ACM CCS}}.
\newblock


\bibitem[He et~al\mbox{.}(2016)]%
        {HeZRS16}
\bibfield{author}{\bibinfo{person}{Kaiming He}, \bibinfo{person}{Xiangyu
  Zhang}, \bibinfo{person}{Shaoqing Ren}, {and} \bibinfo{person}{Jian Sun}.}
  \bibinfo{year}{2016}\natexlab{}.
\newblock \showarticletitle{Deep Residual Learning for Image Recognition}. In
  \bibinfo{booktitle}{\emph{Proc. of {IEEE/CVF} {CVPR}}}.
\newblock


\bibitem[Hu et~al\mbox{.}(2024)]%
        {HuGG24}
\bibfield{author}{\bibinfo{person}{Rui Hu}, \bibinfo{person}{Yuanxiong Guo},
  {and} \bibinfo{person}{Yanmin Gong}.} \bibinfo{year}{2024}\natexlab{}.
\newblock \showarticletitle{Federated Learning With Sparsified Model
  Perturbation: Improving Accuracy Under Client-Level Differential Privacy}.
\newblock \bibinfo{journal}{\emph{{IEEE} Trans. Mob. Comput.}}
  \bibinfo{volume}{23}, \bibinfo{number}{8} (\bibinfo{year}{2024}),
  \bibinfo{pages}{8242--8255}.
\newblock


\bibitem[Jiang et~al\mbox{.}(2022)]%
        {JiangZG22}
\bibfield{author}{\bibinfo{person}{Xue Jiang}, \bibinfo{person}{Xuebing Zhou},
  {and} \bibinfo{person}{Jens Grossklags}.} \bibinfo{year}{2022}\natexlab{}.
\newblock \showarticletitle{SignDS-FL: Local Differentially Private Federated
  Learning with Sign-based Dimension Selection}.
\newblock \bibinfo{journal}{\emph{{ACM} Trans. Intell. Syst. Technol.}}
  \bibinfo{volume}{13}, \bibinfo{number}{5} (\bibinfo{year}{2022}),
  \bibinfo{pages}{74:1--74:22}.
\newblock


\bibitem[Kairouz et~al\mbox{.}(2021a)]%
        {kairouz21}
\bibfield{author}{\bibinfo{person}{Peter Kairouz}, \bibinfo{person}{Ziyu Liu},
  {and} \bibinfo{person}{Thomas Steinke}.} \bibinfo{year}{2021}\natexlab{a}.
\newblock \showarticletitle{The Distributed Discrete Gaussian Mechanism for
  Federated Learning with Secure Aggregation}. In
  \bibinfo{booktitle}{\emph{Proc. of {ICML}}}.
\newblock


\bibitem[Kairouz et~al\mbox{.}(2021b)]%
        {Kairouz21OpenChallenge}
\bibfield{author}{\bibinfo{person}{Peter Kairouz}, \bibinfo{person}{H.~Brendan
  McMahan}, \bibinfo{person}{Brendan Avent}, \bibinfo{person}{Aur{\'{e}}lien
  Bellet}, \bibinfo{person}{Mehdi Bennis}, {et~al\mbox{.}}}
  \bibinfo{year}{2021}\natexlab{b}.
\newblock \showarticletitle{Advances and Open Problems in Federated Learning}.
\newblock \bibinfo{journal}{\emph{Found. Trends Mach. Learn.}}
  \bibinfo{volume}{14}, \bibinfo{number}{1-2} (\bibinfo{year}{2021}),
  \bibinfo{pages}{1--210}.
\newblock


\bibitem[Kato et~al\mbox{.}(2023)]%
        {olive}
\bibfield{author}{\bibinfo{person}{Fumiyuki Kato}, \bibinfo{person}{Yang Cao},
  {and} \bibinfo{person}{Masatoshi Yoshikawa}.}
  \bibinfo{year}{2023}\natexlab{}.
\newblock \showarticletitle{Olive: Oblivious Federated Learning on Trusted
  Execution Environment Against the Risk of Sparsification}.
\newblock \bibinfo{journal}{\emph{Proc. {VLDB} Endow.}} \bibinfo{volume}{16},
  \bibinfo{number}{10} (\bibinfo{year}{2023}), \bibinfo{pages}{2404--2417}.
\newblock


\bibitem[Keller et~al\mbox{.}(2024)]%
        {ARES24}
\bibfield{author}{\bibinfo{person}{Hannah Keller}, \bibinfo{person}{Helen
  M{\"{o}}llering}, \bibinfo{person}{Thomas Schneider},
  \bibinfo{person}{Oleksandr Tkachenko}, {and} \bibinfo{person}{Liang Zhao}.}
  \bibinfo{year}{2024}\natexlab{}.
\newblock \showarticletitle{Secure Noise Sampling for {DP} in {MPC} with Finite
  Precision}. In \bibinfo{booktitle}{\emph{Proc. of {ARES}}}.
\newblock


\bibitem[Keller(2020)]%
        {Keller20}
\bibfield{author}{\bibinfo{person}{Marcel Keller}.}
  \bibinfo{year}{2020}\natexlab{}.
\newblock \showarticletitle{{MP-SPDZ:} {A} Versatile Framework for Multi-Party
  Computation}. In \bibinfo{booktitle}{\emph{Proc. of ACM {CCS}}}.
\newblock


\bibitem[Kerkouche et~al\mbox{.}(2021a)]%
        {KerkoucheEuroSP21}
\bibfield{author}{\bibinfo{person}{Raouf Kerkouche}, \bibinfo{person}{Gergely
  {\'{A}}cs}, \bibinfo{person}{Claude Castelluccia}, {and}
  \bibinfo{person}{Pierre Genev{\`{e}}s}.} \bibinfo{year}{2021}\natexlab{a}.
\newblock \showarticletitle{Compression Boosts Differentially Private Federated
  Learning}. In \bibinfo{booktitle}{\emph{Proc. of {IEEE} EuroS{\&}P}}.
\newblock


\bibitem[Kerkouche et~al\mbox{.}(2021b)]%
        {KerkoucheUAI21}
\bibfield{author}{\bibinfo{person}{Raouf Kerkouche}, \bibinfo{person}{Gergely
  {\'{A}}cs}, \bibinfo{person}{Claude Castelluccia}, {and}
  \bibinfo{person}{Pierre Genev{\`{e}}s}.} \bibinfo{year}{2021}\natexlab{b}.
\newblock \showarticletitle{Constrained differentially private federated
  learning for low-bandwidth devices}. In \bibinfo{booktitle}{\emph{Proc. of
  {UAI}}}.
\newblock


\bibitem[Krizhevsky et~al\mbox{.}(2009)]%
        {krizhevsky2009learning}
\bibfield{author}{\bibinfo{person}{Alex Krizhevsky}, \bibinfo{person}{Geoffrey
  Hinton}, {et~al\mbox{.}}} \bibinfo{year}{2009}\natexlab{}.
\newblock \showarticletitle{Learning multiple layers of features from tiny
  images}.
\newblock  (\bibinfo{year}{2009}).
\newblock


\bibitem[LeCun et~al\mbox{.}(1998)]%
        {LeCunBBH98}
\bibfield{author}{\bibinfo{person}{Yann LeCun}, \bibinfo{person}{L{\'{e}}on
  Bottou}, \bibinfo{person}{Yoshua Bengio}, {and} \bibinfo{person}{Patrick
  Haffner}.} \bibinfo{year}{1998}\natexlab{}.
\newblock \showarticletitle{Gradient-based learning applied to document
  recognition}.
\newblock \bibinfo{journal}{\emph{Proc. {IEEE}}} \bibinfo{volume}{86},
  \bibinfo{number}{11} (\bibinfo{year}{1998}), \bibinfo{pages}{2278--2324}.
\newblock


\bibitem[Lindell(2017)]%
        {Lindell17}
\bibfield{author}{\bibinfo{person}{Yehuda Lindell}.}
  \bibinfo{year}{2017}\natexlab{}.
\newblock \showarticletitle{How to Simulate It - {A} Tutorial on the Simulation
  Proof Technique}.
\newblock  (\bibinfo{year}{2017}), \bibinfo{pages}{277--346}.
\newblock


\bibitem[Liu et~al\mbox{.}(2024)]%
        {LiuCCS24}
\bibfield{author}{\bibinfo{person}{Junxu Liu}, \bibinfo{person}{Jian Lou},
  \bibinfo{person}{Li Xiong}, \bibinfo{person}{Jinfei Liu}, {and}
  \bibinfo{person}{Xiaofeng Meng}.} \bibinfo{year}{2024}\natexlab{}.
\newblock \showarticletitle{Cross-silo Federated Learning with Record-level
  Personalized Differential Privacy}. In \bibinfo{booktitle}{\emph{Proc. of ACM
  CCS}}.
\newblock


\bibitem[Liu et~al\mbox{.}(2020)]%
        {liu2020fedsel}
\bibfield{author}{\bibinfo{person}{Ruixuan Liu}, \bibinfo{person}{Yang Cao},
  \bibinfo{person}{Masatoshi Yoshikawa}, {and} \bibinfo{person}{Hong Chen}.}
  \bibinfo{year}{2020}\natexlab{}.
\newblock \showarticletitle{Fedsel: Federated sgd under local differential
  privacy with top-k dimension selection}. In \bibinfo{booktitle}{\emph{Proc.
  of DASFAA}}.
\newblock


\bibitem[Lu et~al\mbox{.}(2023)]%
        {LuLLGY23}
\bibfield{author}{\bibinfo{person}{Shiwei Lu}, \bibinfo{person}{Ruihu Li},
  \bibinfo{person}{Wenbin Liu}, \bibinfo{person}{Chaofeng Guan}, {and}
  \bibinfo{person}{Xiaopeng Yang}.} \bibinfo{year}{2023}\natexlab{}.
\newblock \showarticletitle{Top-\emph{k} sparsification with secure aggregation
  for privacy-preserving federated learning}.
\newblock \bibinfo{journal}{\emph{Comput. Secur.}}  \bibinfo{volume}{124}
  (\bibinfo{year}{2023}), \bibinfo{pages}{102993}.
\newblock


\bibitem[McMahan et~al\mbox{.}(2017)]%
        {fedavg}
\bibfield{author}{\bibinfo{person}{Brendan McMahan}, \bibinfo{person}{Eider
  Moore}, \bibinfo{person}{Daniel Ramage}, \bibinfo{person}{Seth Hampson},
  {and} \bibinfo{person}{Blaise~Ag{\"{u}}era y Arcas}.}
  \bibinfo{year}{2017}\natexlab{}.
\newblock \showarticletitle{Communication-Efficient Learning of Deep Networks
  from Decentralized Data}. In \bibinfo{booktitle}{\emph{Proc. of {AISTATS}}}.
\newblock


\bibitem[McMahan et~al\mbox{.}(2018)]%
        {McMahanRT018}
\bibfield{author}{\bibinfo{person}{H.~Brendan McMahan}, \bibinfo{person}{Daniel
  Ramage}, \bibinfo{person}{Kunal Talwar}, {and} \bibinfo{person}{Li Zhang}.}
  \bibinfo{year}{2018}\natexlab{}.
\newblock \showarticletitle{Learning Differentially Private Recurrent Language
  Models}. In \bibinfo{booktitle}{\emph{Proc. of {ICLR}}}.
\newblock


\bibitem[Melis et~al\mbox{.}(2019)]%
        {MelisSCS19}
\bibfield{author}{\bibinfo{person}{Luca Melis}, \bibinfo{person}{Congzheng
  Song}, \bibinfo{person}{Emiliano~De Cristofaro}, {and}
  \bibinfo{person}{Vitaly Shmatikov}.} \bibinfo{year}{2019}\natexlab{}.
\newblock \showarticletitle{Exploiting Unintended Feature Leakage in
  Collaborative Learning}. In \bibinfo{booktitle}{\emph{Proc. of {IEEE}
  {S\&P}}}.
\newblock


\bibitem[Miao et~al\mbox{.}(2022)]%
        {miao2022compressed}
\bibfield{author}{\bibinfo{person}{Yinbin Miao}, \bibinfo{person}{Rongpeng
  Xie}, \bibinfo{person}{Xinghua Li}, \bibinfo{person}{Ximeng Liu},
  \bibinfo{person}{Zhuo Ma}, {and} \bibinfo{person}{Robert~H Deng}.}
  \bibinfo{year}{2022}\natexlab{}.
\newblock \showarticletitle{Compressed federated learning based on adaptive
  local differential privacy}. In \bibinfo{booktitle}{\emph{Proc. of ACM
  ACSAC}}.
\newblock


\bibitem[Mironov(2017)]%
        {RDP}
\bibfield{author}{\bibinfo{person}{Ilya Mironov}.}
  \bibinfo{year}{2017}\natexlab{}.
\newblock \showarticletitle{R{\'{e}}nyi Differential Privacy}. In
  \bibinfo{booktitle}{\emph{Proc. of {IEEE} CSF}}.
\newblock


\bibitem[Mouris et~al\mbox{.}(2024)]%
        {PLASMA}
\bibfield{author}{\bibinfo{person}{Dimitris Mouris}, \bibinfo{person}{Pratik
  Sarkar}, {and} \bibinfo{person}{Nektarios~Georgios Tsoutsos}.}
  \bibinfo{year}{2024}\natexlab{}.
\newblock \showarticletitle{{PLASMA: Private, Lightweight Aggregated Statistics
  against Malicious Adversaries}}.
\newblock \bibinfo{journal}{\emph{{Proc. Priv. Enhancing Technol.}}}
  \bibinfo{volume}{2024}, \bibinfo{number}{3} (\bibinfo{year}{2024}),
  \bibinfo{pages}{1--19}.
\newblock


\bibitem[Nilsson et~al\mbox{.}(2020)]%
        {sgx-attacks}
\bibfield{author}{\bibinfo{person}{Alexander Nilsson},
  \bibinfo{person}{Pegah~Nikbakht Bideh}, {and} \bibinfo{person}{Joakim
  Brorsson}.} \bibinfo{year}{2020}\natexlab{}.
\newblock \showarticletitle{A Survey of Published Attacks on Intel {SGX}}.
\newblock \bibinfo{journal}{\emph{CoRR}}  \bibinfo{volume}{abs/2006.13598}
  (\bibinfo{year}{2020}).
\newblock


\bibitem[Rathee et~al\mbox{.}(2023)]%
        {ELSA}
\bibfield{author}{\bibinfo{person}{Mayank Rathee}, \bibinfo{person}{Conghao
  Shen}, \bibinfo{person}{Sameer Wagh}, {and} \bibinfo{person}{Raluca~Ada
  Popa}.} \bibinfo{year}{2023}\natexlab{}.
\newblock \showarticletitle{{ELSA:} Secure Aggregation for Federated Learning
  with Malicious Actors}. In \bibinfo{booktitle}{\emph{Proc. of {IEEE}
  {S\&P}}}.
\newblock


\bibitem[Rathee et~al\mbox{.}(2024)]%
        {RatheeZCP24}
\bibfield{author}{\bibinfo{person}{Mayank Rathee}, \bibinfo{person}{Yuwen
  Zhang}, \bibinfo{person}{Henry Corrigan{-}Gibbs}, {and}
  \bibinfo{person}{Raluca~Ada Popa}.} \bibinfo{year}{2024}\natexlab{}.
\newblock \showarticletitle{Private Analytics via Streaming, Sketching, and
  Silently Verifiable Proofs}. In \bibinfo{booktitle}{\emph{Proc. of {IEEE}
  {S\&P}}}.
\newblock


\bibitem[Reddi et~al\mbox{.}(2021)]%
        {ReddiCZGRKKM21}
\bibfield{author}{\bibinfo{person}{Sashank~J. Reddi}, \bibinfo{person}{Zachary
  Charles}, \bibinfo{person}{Manzil Zaheer}, \bibinfo{person}{Zachary Garrett},
  \bibinfo{person}{Keith Rush}, \bibinfo{person}{Jakub Kone{\v{c}}n{\'y}},
  \bibinfo{person}{Sanjiv Kumar}, {and} \bibinfo{person}{Hugh~Brendan
  McMahan}.} \bibinfo{year}{2021}\natexlab{}.
\newblock \showarticletitle{Adaptive Federated Optimization}. In
  \bibinfo{booktitle}{\emph{Proc. of {ICLR}}}.
\newblock


\bibitem[Roy~Chowdhury et~al\mbox{.}(2020)]%
        {roy2020crypt}
\bibfield{author}{\bibinfo{person}{Amrita Roy~Chowdhury},
  \bibinfo{person}{Chenghong Wang}, \bibinfo{person}{Xi He},
  \bibinfo{person}{Ashwin Machanavajjhala}, {and} \bibinfo{person}{Somesh
  Jha}.} \bibinfo{year}{2020}\natexlab{}.
\newblock \showarticletitle{Crypt$\epsilon$: Crypto-assisted differential
  privacy on untrusted servers}. In \bibinfo{booktitle}{\emph{Proc. of
  SIGMOD}}.
\newblock


\bibitem[Ruan et~al\mbox{.}(2023)]%
        {RuanXFWWH23}
\bibfield{author}{\bibinfo{person}{Wenqiang Ruan}, \bibinfo{person}{Mingxin
  Xu}, \bibinfo{person}{Wenjing Fang}, \bibinfo{person}{Li Wang},
  \bibinfo{person}{Lei Wang}, {and} \bibinfo{person}{Weili Han}.}
  \bibinfo{year}{2023}\natexlab{}.
\newblock \showarticletitle{Private, Efficient, and Accurate: Protecting Models
  Trained by Multi-party Learning with Differential Privacy}. In
  \bibinfo{booktitle}{\emph{Proc. of {IEEE} {S\&P}}}.
\newblock


\bibitem[Schwartz(1980)]%
        {Schwartz80}
\bibfield{author}{\bibinfo{person}{Jacob~T. Schwartz}.}
  \bibinfo{year}{1980}\natexlab{}.
\newblock \showarticletitle{Fast Probabilistic Algorithms for Verification of
  Polynomial Identities}.
\newblock \bibinfo{journal}{\emph{J. {ACM}}} \bibinfo{volume}{27},
  \bibinfo{number}{4} (\bibinfo{year}{1980}), \bibinfo{pages}{701--717}.
\newblock


\bibitem[Shi et~al\mbox{.}(2023)]%
        {CVPR23}
\bibfield{author}{\bibinfo{person}{Yifan Shi}, \bibinfo{person}{Yingqi Liu},
  \bibinfo{person}{Kang Wei}, \bibinfo{person}{Li Shen},
  \bibinfo{person}{Xueqian Wang}, {and} \bibinfo{person}{Dacheng Tao}.}
  \bibinfo{year}{2023}\natexlab{}.
\newblock \showarticletitle{Make Landscape Flatter in Differentially Private
  Federated Learning}. In \bibinfo{booktitle}{\emph{Proc. of {IEEE/CVF}
  {CVPR}}}.
\newblock


\bibitem[Shokri et~al\mbox{.}(2017)]%
        {shokri2017membership}
\bibfield{author}{\bibinfo{person}{Reza Shokri}, \bibinfo{person}{Marco
  Stronati}, \bibinfo{person}{Congzheng Song}, {and} \bibinfo{person}{Vitaly
  Shmatikov}.} \bibinfo{year}{2017}\natexlab{}.
\newblock \showarticletitle{Membership inference attacks against machine
  learning models}. In \bibinfo{booktitle}{\emph{Proc. of IEEE S\&P}}.
\newblock


\bibitem[Stevens et~al\mbox{.}(2022)]%
        {StevensSVRCN22}
\bibfield{author}{\bibinfo{person}{Timothy Stevens}, \bibinfo{person}{Christian
  Skalka}, \bibinfo{person}{Christelle Vincent}, \bibinfo{person}{John~H.
  Ring}, \bibinfo{person}{Samuel Clark}, {and} \bibinfo{person}{Joseph~P.
  Near}.} \bibinfo{year}{2022}\natexlab{}.
\newblock \showarticletitle{Efficient Differentially Private Secure Aggregation
  for Federated Learning via Hardness of Learning with Errors}. In
  \bibinfo{booktitle}{\emph{Proc. of {USENIX} Security}}.
\newblock


\bibitem[Sun et~al\mbox{.}(2021)]%
        {sun2021ldp}
\bibfield{author}{\bibinfo{person}{Lichao Sun}, \bibinfo{person}{Jianwei Qian},
  {and} \bibinfo{person}{Xun Chen}.} \bibinfo{year}{2021}\natexlab{}.
\newblock \showarticletitle{LDP-FL: Practical Private Aggregation in Federated
  Learning with Local Differential Privacy}. In \bibinfo{booktitle}{\emph{Proc.
  of IJCAI}}.
\newblock


\bibitem[Tang et~al\mbox{.}(2024)]%
        {tang2024flexible}
\bibfield{author}{\bibinfo{person}{Jinling Tang}, \bibinfo{person}{Haixia Xu},
  \bibinfo{person}{Mingsheng Wang}, \bibinfo{person}{Tao Tang},
  \bibinfo{person}{Chunying Peng}, {and} \bibinfo{person}{Huimei Liao}.}
  \bibinfo{year}{2024}\natexlab{}.
\newblock \showarticletitle{A Flexible and Scalable Malicious Secure
  Aggregation Protocol for Federated Learning}.
\newblock \bibinfo{journal}{\emph{IEEE Trans. Inf. Forensics Secur.}}
  (\bibinfo{year}{2024}).
\newblock


\bibitem[Truex et~al\mbox{.}(2020)]%
        {Truex0CGW20}
\bibfield{author}{\bibinfo{person}{Stacey Truex}, \bibinfo{person}{Ling Liu},
  \bibinfo{person}{Ka~Ho Chow}, \bibinfo{person}{Mehmet~Emre Gursoy}, {and}
  \bibinfo{person}{Wenqi Wei}.} \bibinfo{year}{2020}\natexlab{}.
\newblock \showarticletitle{LDP-Fed: federated learning with local differential
  privacy}. In \bibinfo{booktitle}{\emph{Proc. of ACM EdgeSys}}.
\newblock


\bibitem[Vadapalli et~al\mbox{.}(2023)]%
        {ORAM2}
\bibfield{author}{\bibinfo{person}{Adithya Vadapalli}, \bibinfo{person}{Ryan
  Henry}, {and} \bibinfo{person}{Ian Goldberg}.}
  \bibinfo{year}{2023}\natexlab{}.
\newblock \showarticletitle{DUORAM: A Bandwidth-Efficient Distributed ORAM for
  2- and 3-Party Computation}. In \bibinfo{booktitle}{\emph{Proc. of {USENIX
  Security}}}.
\newblock


\bibitem[Wang et~al\mbox{.}(2023)]%
        {UAI23}
\bibfield{author}{\bibinfo{person}{Lingxiao Wang}, \bibinfo{person}{Bargav
  Jayaraman}, \bibinfo{person}{David Evans}, {and} \bibinfo{person}{Quanquan
  Gu}.} \bibinfo{year}{2023}\natexlab{}.
\newblock \showarticletitle{Efficient Privacy-Preserving Stochastic Nonconvex
  Optimization}. In \bibinfo{booktitle}{\emph{Proc. of UAI}}.
\newblock


\bibitem[Wang et~al\mbox{.}(2019c)]%
        {WangXYZHSS019}
\bibfield{author}{\bibinfo{person}{Ning Wang}, \bibinfo{person}{Xiaokui Xiao},
  \bibinfo{person}{Yin Yang}, \bibinfo{person}{Jun Zhao},
  \bibinfo{person}{Siu~Cheung Hui}, \bibinfo{person}{Hyejin Shin},
  \bibinfo{person}{Junbum Shin}, {and} \bibinfo{person}{Ge Yu}.}
  \bibinfo{year}{2019}\natexlab{c}.
\newblock \showarticletitle{Collecting and Analyzing Multidimensional Data with
  Local Differential Privacy}. In \bibinfo{booktitle}{\emph{Proc. of {ICDE}}}.
\newblock


\bibitem[Wang et~al\mbox{.}(2019a)]%
        {WangBK19}
\bibfield{author}{\bibinfo{person}{Yu{-}Xiang Wang}, \bibinfo{person}{Borja
  Balle}, {and} \bibinfo{person}{Shiva~Prasad Kasiviswanathan}.}
  \bibinfo{year}{2019}\natexlab{a}.
\newblock \showarticletitle{Subsampled Renyi Differential Privacy and
  Analytical Moments Accountant}. In \bibinfo{booktitle}{\emph{Proc. of
  {AISTATS}}}.
\newblock


\bibitem[Wang et~al\mbox{.}(2019b)]%
        {wang2019subsampled}
\bibfield{author}{\bibinfo{person}{Yu-Xiang Wang}, \bibinfo{person}{Borja
  Balle}, {and} \bibinfo{person}{Shiva~Prasad Kasiviswanathan}.}
  \bibinfo{year}{2019}\natexlab{b}.
\newblock \showarticletitle{Subsampled r{\'e}nyi differential privacy and
  analytical moments accountant}. In \bibinfo{booktitle}{\emph{Proc. of
  {AISTATS}}}.
\newblock


\bibitem[Wangni et~al\mbox{.}(2018)]%
        {WangniWLZ18}
\bibfield{author}{\bibinfo{person}{Jianqiao Wangni}, \bibinfo{person}{Jialei
  Wang}, \bibinfo{person}{Ji Liu}, {and} \bibinfo{person}{Tong Zhang}.}
  \bibinfo{year}{2018}\natexlab{}.
\newblock \showarticletitle{Gradient Sparsification for Communication-Efficient
  Distributed Optimization}. In \bibinfo{booktitle}{\emph{Proc. of NeurIPS}}.
\newblock


\bibitem[Watson et~al\mbox{.}(2022)]%
        {WatsonWP22}
\bibfield{author}{\bibinfo{person}{Jean{-}Luc Watson}, \bibinfo{person}{Sameer
  Wagh}, {and} \bibinfo{person}{Raluca~Ada Popa}.}
  \bibinfo{year}{2022}\natexlab{}.
\newblock \showarticletitle{Piranha: {A} {GPU} Platform for Secure
  Computation}. In \bibinfo{booktitle}{\emph{Proc. of {USENIX Security}}}.
\newblock


\bibitem[Wegman and Carter(1981)]%
        {WegmanC81}
\bibfield{author}{\bibinfo{person}{Mark~N. Wegman} {and} \bibinfo{person}{Larry
  Carter}.} \bibinfo{year}{1981}\natexlab{}.
\newblock \showarticletitle{New Hash Functions and Their Use in Authentication
  and Set Equality}.
\newblock \bibinfo{journal}{\emph{J. Comput. Syst. Sci.}} \bibinfo{volume}{22},
  \bibinfo{number}{3} (\bibinfo{year}{1981}), \bibinfo{pages}{265--279}.
\newblock


\bibitem[Wei et~al\mbox{.}(2023)]%
        {WeiYFCW23}
\bibfield{author}{\bibinfo{person}{Chengkun Wei}, \bibinfo{person}{Ruijing Yu},
  \bibinfo{person}{Yuan Fan}, \bibinfo{person}{Wenzhi Chen}, {and}
  \bibinfo{person}{Tianhao Wang}.} \bibinfo{year}{2023}\natexlab{}.
\newblock \showarticletitle{Securely Sampling Discrete Gaussian Noise for
  Multi-Party Differential Privacy}. In \bibinfo{booktitle}{\emph{Proc. of {ACM
  CCS}}}.
\newblock


\bibitem[Xiang et~al\mbox{.}(2023)]%
        {Xiang0L023}
\bibfield{author}{\bibinfo{person}{Zihang Xiang}, \bibinfo{person}{Tianhao
  Wang}, \bibinfo{person}{Wanyu Lin}, {and} \bibinfo{person}{Di Wang}.}
  \bibinfo{year}{2023}\natexlab{}.
\newblock \showarticletitle{Practical Differentially Private and
  Byzantine-resilient Federated Learning}.
\newblock \bibinfo{journal}{\emph{Proc. {ACM} Manag. Data}}
  \bibinfo{volume}{1}, \bibinfo{number}{2} (\bibinfo{year}{2023}),
  \bibinfo{pages}{119:1--119:26}.
\newblock


\bibitem[Xiao et~al\mbox{.}(2017)]%
        {FMNIST}
\bibfield{author}{\bibinfo{person}{Han Xiao}, \bibinfo{person}{Kashif Rasul},
  {and} \bibinfo{person}{Roland Vollgraf}.} \bibinfo{year}{2017}\natexlab{}.
\newblock \showarticletitle{Fashion-MNIST: a Novel Image Dataset for
  Benchmarking Machine Learning Algorithms}.
\newblock \bibinfo{journal}{\emph{CoRR}}  \bibinfo{volume}{abs/1708.07747}
  (\bibinfo{year}{2017}).
\newblock


\bibitem[Xu et~al\mbox{.}(2024)]%
        {camel}
\bibfield{author}{\bibinfo{person}{Shuangqing Xu}, \bibinfo{person}{Yifeng
  Zheng}, {and} \bibinfo{person}{Zhongyun Hua}.}
  \bibinfo{year}{2024}\natexlab{}.
\newblock \showarticletitle{Camel: Communication-Efficient and Maliciously
  Secure Federated Learning in the Shuffle Model of Differential Privacy}. In
  \bibinfo{booktitle}{\emph{Proc. of {ACM CCS}}}.
\newblock


\bibitem[Xu et~al\mbox{.}(2025)]%
        {clover-submission}
\bibfield{author}{\bibinfo{person}{Shuangqing Xu}, \bibinfo{person}{Yifeng
  Zheng}, {and} \bibinfo{person}{Zhongyun Hua}.}
  \bibinfo{year}{2025}\natexlab{}.
\newblock \showarticletitle{Harnessing Sparsification in Federated Learning: A
  Secure, Efficient, and Differentially Private Realization}. In
  \bibinfo{booktitle}{\emph{Proc. of ACM CCS}}.
\newblock


\bibitem[Yang et~al\mbox{.}(2023)]%
        {YangHYGC23}
\bibfield{author}{\bibinfo{person}{Yuchen Yang}, \bibinfo{person}{Bo Hui},
  \bibinfo{person}{Haolin Yuan}, \bibinfo{person}{Neil~Zhenqiang Gong}, {and}
  \bibinfo{person}{Yinzhi Cao}.} \bibinfo{year}{2023}\natexlab{}.
\newblock \showarticletitle{PrivateFL: Accurate, Differentially Private
  Federated Learning via Personalized Data Transformation}. In
  \bibinfo{booktitle}{\emph{Proc. of {USENIX} Security}}.
\newblock


\bibitem[Zhu et~al\mbox{.}(2019)]%
        {ZhuLH19}
\bibfield{author}{\bibinfo{person}{Ligeng Zhu}, \bibinfo{person}{Zhijian Liu},
  {and} \bibinfo{person}{Song Han}.} \bibinfo{year}{2019}\natexlab{}.
\newblock \showarticletitle{Deep Leakage from Gradients}. In
  \bibinfo{booktitle}{\emph{Proc. of NeurIPS}}.
\newblock


\bibitem[Zippel(1979)]%
        {Zippel79}
\bibfield{author}{\bibinfo{person}{Richard Zippel}.}
  \bibinfo{year}{1979}\natexlab{}.
\newblock \showarticletitle{Probabilistic algorithms for sparse polynomials}.
  In \bibinfo{booktitle}{\emph{Proc. of {EUROSAM}}}.
\newblock


\end{thebibliography}

\appendix

\revise{
\section{Utility Evaluation of {\main} on Differentially Private Sparse Vector Summation}
\label{appendix:utility_vector_sum}
In addition to evaluating the utility of {\main} in FL tasks, we also use synthetic data to directly evaluate the utility of {\main} on the task of differentially private sparse vector summation. 
Specifically, we consider a scenario with $n=100$ clients, where each client $\mathcal{C}_i$ holds a $d=10^4$ dimensional sparse vector $\boldsymbol{v}_i$. 
Each vector is generated with a specific density $\lambda$, and is subsequently normalized to have an $\ell_2$-norm of $C=1$.
The local sparse vectors are then processed by our protocol, which outputs an aggregated dense vector $\bar{\boldsymbol{v}}$ that satisfies $(\varepsilon, \delta)$-DP.
We measure the utility using the Mean Squared Error (MSE). 
Given $\hat{\boldsymbol{v}}=\frac{1}{n} \sum_{i \in[n]} \boldsymbol{v}_i$ as the true mean of the input vectors and $\bar{\boldsymbol{v}}$ as the estimated vector from our protocol, the MSE is defined as: $\mathrm{MSE}=\frac{1}{d} \|\bar{\boldsymbol{v}}-\hat{\boldsymbol{v}}\|_2^2$. 
In our experiments, we analyze the impact of both the privacy budget $\varepsilon$ and the density $\lambda$ on the MSE (with $\delta$ fixed at $10^{-5}$). 
In Fig.~\ref{fig:utility_MSE}(a), we plot the MSE as a function of the privacy budget $\varepsilon$ (from 1.0 to 10.0). Each curve in the plot corresponds to a different fixed vector density ($\lambda \in \{0.1\%, 0.5\%, 1\%\}$).
In Fig.~\ref{fig:utility_MSE}(b), we plot the MSE as a function of the vector density $\lambda$ (from 1\% to 5\%). Each curve here corresponds to a different fixed privacy budget ($\varepsilon \in \{1.0, 5.0, 10.0\}$).
As observed in Fig.~\ref{fig:utility_MSE}(a), the MSE decreases as the privacy budget $\varepsilon$ increases. 
This is expected, as a larger $\varepsilon$ allows less noise to be added. 
Notably, the curves for different density levels ($\lambda = 0.1\%, 0.5\%, 1\%$) are almost overlapping. 
This is because our protocol correctly aggregates the non-zero elements from each client's sparse vector $\boldsymbol{v}_i$. 
Therefore, the difference between our estimated mean $\bar{\boldsymbol{v}}$ and the true mean $\hat{\boldsymbol{v}}$ is precisely the DP noise. 
As a result, the MSE in this experiment becomes a direct measure of the variance of the added noise, which is solely determined by $\varepsilon$ and is independent of $\lambda$. 
From Fig.~\ref{fig:utility_MSE}(b), we can draw a similar conclusion. 
For a fixed privacy budget $\varepsilon$ (e.g., the $\varepsilon=10.0$ curve), the MSE remains almost constant as the density $\lambda$ varies. 
}

\section{Proof of Theorem \ref{thm:analysis_overview}}
\label{appendix:analysis_overview}

\subsection{Proof of Privacy}
\label{appendix:proof:privacy}
\begin{proof}
The privacy analysis follows the RDP framework \cite{RDP} combined with privacy amplification by subsampling \cite{WangBK19}. 
Subsampling in RDP reduces the privacy cost compared to applying the same privacy mechanism to the entire population, of which the amplification result is defined as follows. 

\begin{lem}
	\label{lem:subsampled_gaussian}
	\textit{\textbf{(RDP for Subsampled Gaussian Mechanism {\cite{WangBK19}}).}}
	%
	For any $\alpha \geq 2$ and $0 < q < 1$ be a subsampling ratio of subsampling operation Samp\(_q\).
	Let $\mathcal{G}'_{\mathcal{M}} = G_{\mathcal{M}} \circ \text{Samp}_q(\cdot)$ be a subsampled Gaussian mechanism, where $G_{\mathcal{M}}$ is Gaussian Mechanism. 
	Then, $\mathcal{G}'_{\mathcal{M}}$ satisfies $(\alpha, \tau'(\alpha, \sigma))$-RDP where
	\begin{equation}
		\begin{aligned} 
			\tau'(\alpha, \sigma) \triangleq \frac{1}{\alpha-1} \log (1+2q^2 \binom{\alpha}{2} &\min (2 e^{(\alpha-1)^2 / \sigma^2}-1, e^{\alpha^2 / \sigma^2} ) \\
			&+ \sum_{j=3}^{\alpha} 2q^j \binom{\alpha}{j} e^{j(j-1) / 2\sigma^2} ).
		\end{aligned}
		\nonumber
		\vspace{-4pt}
	\end{equation}
\end{lem}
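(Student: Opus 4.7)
The plan is to follow the moments-accountant framework of Abadi et al.~\cite{DPSGD} as refined for Rényi divergence by Wang et al.~\cite{WangBK19}. At a high level, I would reduce the problem to bounding a Rényi divergence between two one-dimensional Gaussian mixtures, expand the associated $\alpha$-moment by the binomial theorem, and then bound the resulting moments of the likelihood ratio of two Gaussians.

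\textbf{Step 1: Reduction to a canonical pair.} By post-processing and data-processing for Rényi divergence, together with the translation and rotation invariance of the Gaussian mechanism, it suffices to upper-bound $D_\alpha(P\|Q)$ for the canonical pair
\begin{equation}
Q = \mathcal{N}(0,\sigma^2), \qquad P = (1-q)\,\mathcal{N}(0,\sigma^2) + q\,\mathcal{N}(1,\sigma^2). \nonumber
\end{equation}
The mixture structure comes from the Poisson coupling of subsampling: with probability $q$ the distinguishing record is drawn and the mean is shifted by the sensitivity (normalized to $1$), while with probability $1-q$ the outputs on the two neighbors coincide. The $\sigma^2$ parameter of the reference Gaussian inherits the noise scale of $G_{\mathcal{M}}$ after normalizing by $\Delta_f$.

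\textbf{Step 2: Binomial expansion and killing the linear term.} Writing $L(x)=\mathcal{N}(1,\sigma^2)(x)/\mathcal{N}(0,\sigma^2)(x)$, one has $P(x)/Q(x) = 1 + q(L(x)-1)$, so
\begin{equation}
A_\alpha := e^{(\alpha-1)D_\alpha(P\|Q)} = \mathbb{E}_{x\sim Q}\bigl[(1+q(L(x)-1))^\alpha\bigr] = \sum_{j=0}^{\alpha}\binom{\alpha}{j}q^{j}\,\mathbb{E}_Q[(L-1)^j]. \nonumber
\end{equation}
Since $L$ is a likelihood ratio, $\mathbb{E}_Q[L]=1$, so the $j=0$ term contributes $1$ and the $j=1$ term vanishes. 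This already gives the outer shape of the stated bound, with $j$-indexed coefficients starting at $j=2$.

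\textbf{Step 3: Bounding the moments $\mathbb{E}_Q[(L-1)^j]$.} Using the MGF of the Gaussian one obtains the closed form $\mathbb{E}_Q[L(x)^k] = \exp\bigl(k(k-1)/(2\sigma^2)\bigr)$ for every integer $k\ge 0$. For $j\ge 3$, I would apply the elementary inequality $|L-1|^j \le L^j+1$ (valid because $L\ge 0$ and $j\ge 1$) to obtain $|\mathbb{E}_Q[(L-1)^j]|\le 2\,e^{j(j-1)/(2\sigma^2)}$, which matches the generic term in the lemma. For $j=2$, the naive value $e^{1/\sigma^2}-1$ is sharp but must be combined carefully with a Hölder/interpolation step; the two branches of the $\min$ in the statement correspond to the two natural majorizations---one bounding $(L-1)^2$ by a power of $L$ of exponent $\alpha-1$ against a unit factor, and the other by a full power of exponent $\alpha$---yielding the competing $2e^{(\alpha-1)^2/\sigma^2}-1$ and $e^{\alpha^2/\sigma^2}$ expressions, respectively.

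\textbf{Step 4: Assembly and main obstacle.} Substituting these bounds into the binomial expansion gives
\begin{equation}
A_\alpha \;\le\; 1 + 2q^{2}\binom{\alpha}{2}\min\!\bigl(2e^{(\alpha-1)^2/\sigma^2}-1,\; e^{\alpha^2/\sigma^2}\bigr) + \sum_{j=3}^{\alpha} 2q^{j}\binom{\alpha}{j}e^{j(j-1)/(2\sigma^2)}, \nonumber
\end{equation}
and taking $\log$ divided by $\alpha-1$ yields $\tau'(\alpha,\sigma)$ as stated. The main technical obstacle is Step~3 for $j=2$: the standard binomial moment formula alone does not yield an $\alpha$-dependent bound, so one must pass through Hölder-type inequalities that trade second moments of $L-1$ for $\alpha$-moments of $L$, producing precisely the pair of competing quantities inside the $\min$. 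Once this bound is established, Steps~1, 2, and~4 are routine, and the final statement follows by the definition of $(\alpha,\tau)$-RDP together with the reduction in Step~1.
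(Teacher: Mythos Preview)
The paper does not prove this lemma at all: it is stated with a citation to \cite{WangBK19} and invoked as a black box in the privacy proof of Theorem~\ref{thm:analysis_overview}. There is therefore no paper-side argument to compare against; your sketch is essentially a reconstruction of the moments-accountant argument in the cited reference, and Steps~1, 2, and the $j\ge 3$ part of Step~3 track that argument correctly.

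One point worth flagging: your explanation of the $j=2$ term is not quite right as written. The quantity $\mathbb{E}_Q[(L-1)^2]$ is a fixed number (namely $e^{1/\sigma^2}-1$) that does not depend on $\alpha$, so no H\"older step applied to that single moment can produce the $\alpha$-dependent expressions $2e^{(\alpha-1)^2/\sigma^2}-1$ and $e^{\alpha^2/\sigma^2}$ inside the $\min$. In \cite{WangBK19} those two branches arise instead from two different ways of controlling the whole second-order contribution via the base mechanism's RDP at orders $2$ and $\infty$ (or, in one version, via a ternary divergence), not from majorizing $(L-1)^2$ by powers of $L$. You correctly identify this as the main obstacle, but the mechanism you describe for resolving it would not yield the stated form; if you were actually writing out the proof you would need to revisit that step.
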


Consider a single round $t$ of {\main}. The servers compute an aggregate based on inputs from a subset $\mathcal{P}^t$ of clients, where $|\mathcal{P}^t| = qn$. Let the function computed be $f(\{\boldsymbol{x}_i^{t+1}\}_{\mathcal{C}_i \in \mathcal{P}^t}) = \sum_{\mathcal{C}_i \in \mathcal{P}^t} \boldsymbol{x}_i^{t+1}$. The inputs $\boldsymbol{x}_i^{t+1}$ are the clipped, sparse gradient updates satisfying $\|\boldsymbol{x}_i^{t+1}\|_2 \leq C$.
The $\ell_2$-sensitivity $\Delta_f$ of the sum function $f$ is thus $C$. 
The mechanism adds noise ${\eta}^t$ to the sum $f$, where ${\eta}^t$ is effectively distributed as $\mathcal{N}_{\mathbb{Z}}(0, \sigma^2 C^2 \mathbf{I}_d)$ in each server's view (resulting from the sum of noise from two honest servers, see Section~\ref{sec:malicious:noise-sampling}). The use of discrete Gaussian $\mathcal{N}_{\mathbb{Z}}$ provides guarantees closely matching the continuous Gaussian $\mathcal{N}(0, \sigma^2 C^2 \mathbf{I}_d)$ \cite{kairouz21}.

The RDP of the (subsampled) Gaussian mechanism for sensitivity $\Delta_f = C$ and effective noise variance $\sigma_{\text{eff}}^2 = \sigma^2 C^2$ is analyzed. 
With Poisson subsampling at rate $q=|\mathcal{P}^t|/n$, the privacy is amplified. 
Using Lemma~\ref{lem:subsampled_gaussian} and Lemma 3.7 from \cite{UAI23}, for $\alpha \geq 2$, the $t$-th round RDP guarantee becomes: $\tau_{t} \leq \frac{3.5q^2\alpha}{\sigma^2}$. 
Over $T$ training rounds, by applying Lemma \ref{lem:sequantial} (RDP Composition), the entire {\main} framework satisfies $(\alpha, T \cdot \tau) = (\alpha, \frac{3.5q^2\alpha T}{\sigma^2})$-RDP. 
Finally, we convert from $(\alpha, \tau_{\text{total}})$-RDP to $(\varepsilon, \delta)$-DP using Lemma~\ref{lem:rdp_to_dp}:
$$
\varepsilon \le \tau_{\text{total}} + \frac{\log(1/\delta)}{\alpha-1} = \frac{3.5 q^2 \alpha T}{\sigma^2} + \frac{\log(1/\delta)}{\alpha-1}.
$$
Setting $\alpha = 1 + \frac{2\log(1/\delta)}{\varepsilon}$ and solving for $\sigma^2$, we obtain the noise requirement:
$$
\sigma^2 \geq \frac{14q^2T\log(1/\delta)}{\varepsilon^2} + \frac{7q^2T}{\varepsilon}.
$$
This ensures that {\main} satisfies $(\varepsilon, \delta)$-DP after $T$ training rounds.
This concludes the proof for the privacy guarantee of {\main}.
\end{proof}

\begin{figure}[t!]

\centering

\begin{minipage}[t]{0.49\linewidth}
	\centering
	\includegraphics[width=\linewidth]{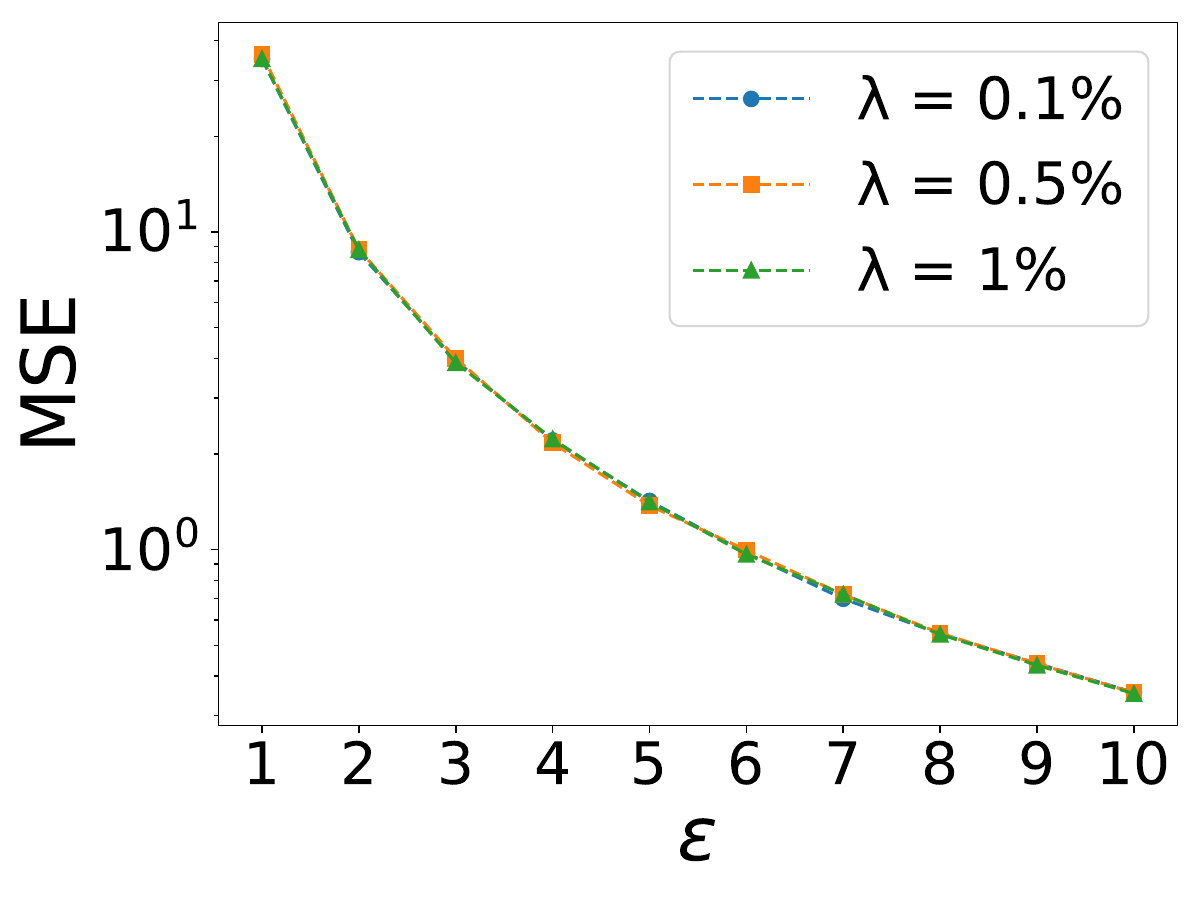}\\(a)
\end{minipage}
\begin{minipage}[t]{0.49\linewidth}
	\centering
	\includegraphics[width=\linewidth]{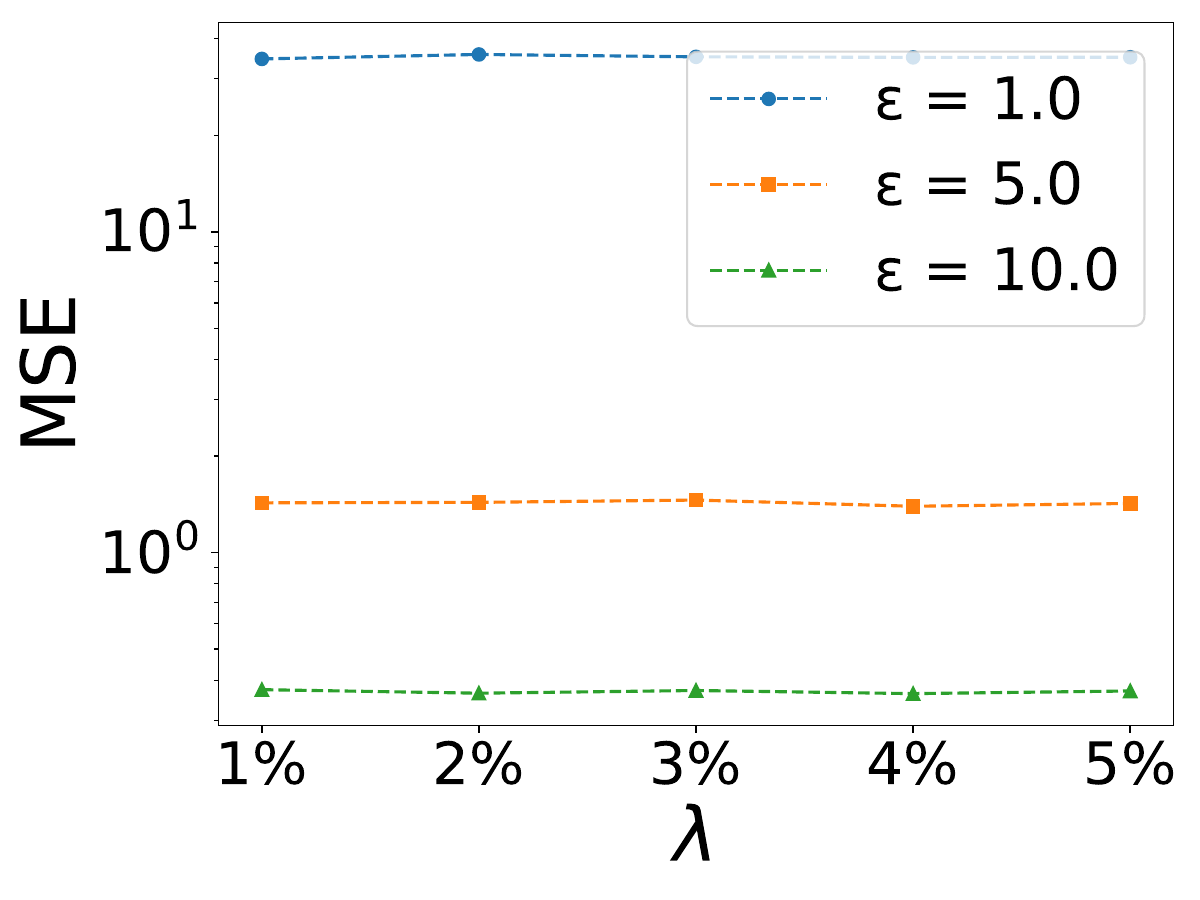}\\(b)
\end{minipage}

\caption{\revise{Utility of {\main} on differentially private sparse vector summation with 100 $10^4$-dimensional sparse vectors. \textbf{(a)} MSE vs. privacy budget $\varepsilon$ under different densities $\lambda$. \textbf{(b)} MSE vs. density $\lambda$ under different privacy budgets~$\varepsilon$.}}

\label{fig:utility_MSE}
\vspace{-5pt}

\end{figure}

\subsection{Proof of Communication}		
\begin{proof}
The communication costs are analyzed for client-server communication and inter-server communication of our maliciously secure {\main}.

\noindent\textit{Client-server Communication:}
In each round $t$, each participating client $\mathcal{C}_i$ processes its raw gradient update $\Delta_i^{t+1}$ and uploads information related to the clipped, sparse update $\boldsymbol{x}_i^{t+1}$. Based on the \textsf{SparVecAgg} protocol (Algorithm~\ref{alg:SparVecAgg}) and the associated permutation compression (Section~\ref{sec:sparvecagg:details}):
\begin{itemize}
	\item The client computes $\boldsymbol{r}_i$ containing the $k$ non-zero values of $\boldsymbol{x}_i^{t+1}$ (after potential clipping). It generates RSS shares $\llbracket \boldsymbol{r}_i \rrbracket$. Sending the necessary shares $(\langle \boldsymbol{r}_i \rangle_j, \langle \boldsymbol{r}_i \rangle_{j+1})$ to each server $\mathcal{S}_j$ requires transmitting $2k$ field elements to each of the 3 servers, totaling $6k$ field elements. Cost: $6k |\mathbb{Z}_p|$ bits.
	\item The client generates two random seeds $\mathsf{s}_{i,0}, \mathsf{s}_{i,1}$ (length $|s|$ bits each) for compressing permutations $\pi_{i,0}, \pi_{i,1}$. It sends $(\mathsf{s}_{i,0}, \mathsf{s}_{i,1})$ to $\mathcal{S}_0$, $\mathsf{s}_{i,1}$ to $\mathcal{S}_1$, and $\mathsf{s}_{i,0}$ to $\mathcal{S}_2$. Total cost: $4|s|$ bits.
	\item The client generates three random MAC key seeds and distributes them among the servers. The locally computed MAC is also secret-shared among the servers. Total cost: $6|s|+6|\mathbb{Z}_p|$ bits.
	\item The client computes the distilled third permutation $\pi_{i,2}$ and extracts the list $P_i$ of the first $k$ destination indices. It sends $P_i$ to $\mathcal{S}_1$ and $\mathcal{S}_2$. Each index can be represented as a field element (or requires $|\mathbb{Z}_p|$ bits), the cost is $k|\mathbb{Z}_p|$ to $\mathcal{S}_1$ and $k|\mathbb{Z}_p|$ to $\mathcal{S}_2$. Total cost: $2k |\mathbb{Z}_p|$ bits.
\end{itemize}
Summing these costs, the total client-to-server communication per client per round is $6k |\mathbb{Z}_p| + 4|s| + 6|s|+6|\mathbb{Z}_p| + 2k |\mathbb{Z}_p| = (8k+6) |\mathbb{Z}_p| + 10|s|$ bits.

\noindent\textit{Inter-Server Communication:}
This occurs primarily during the secure sparse gradient update aggregation for all $|\mathcal{P}^t| = qn$ clients and during the noise sampling phase.
\begin{itemize}
	\item Secure sparse gradient update aggregation (per client): The dominant cost is the secret-shared shuffle. Applying the three permutations ($\pi'_{i,0}, \pi'_{i,1}, \pi'_{i,2}$) sequentially involves communication. According to \cite{AsharovHIKNPTT22}, shuffling a vector of size $d$ using their RSS-based protocol requires communication proportional to $d$ (applying 3 permutations costs $\mathcal{O}(d)$ field elements exchanged between servers per client). Reconstruction of the final $\boldsymbol{x}_i$ also takes $\mathcal{O}(d)$ communication. Adding the MAC verification (Section~\ref{sec:malicious:aggregation}) requires a secure dot product ($\mathcal{O}(d)$ communication for multiplications) and reconstruction ($\mathcal{O}(1)$). Total per client: $\mathcal{O}(d)$.
	\item Verifiable noise sampling: Each server $\mathcal{S}_j$ samples noise $\eta_j$ (size $d$) and secret-shares it. Sharing involves $\mathcal{O}(d)$ communication. Summing the shared noises is local. Verifying the noise involves mask sharing (cost $\mathcal{O}(d)$), mask summing (local), and robust reconstruction/comparison of masked noise (cost $\mathcal{O}(d)$). Total per verified server: $\mathcal{O}(d)$. Total for all 3 servers: $\mathcal{O}(d)$. 
	\item {Final aggregation/reconstruction:} Summing the verified client aggregates is local. Summing the verified noise is local. Adding them is local. The final reconstruction of $\Delta^{t+1}$ requires $\mathcal{O}(d)$ communication.
\end{itemize}
The total inter-server communication per round involves running secure sparse gradient update aggregation for $qn$ clients and performing noise generation / verification / aggregation once. 
The cost is dominated by the call of secret-shared shuffles. Total inter-server communication complexity per round is $\mathcal{O}(|\mathcal{P}^t| \cdot d)$.
\end{proof}

\subsection{Proof of Convergence}
\label{appendix:proof:convergence}
\begin{proof}
Now we analyze the convergence of {\main} that combines FL with top-$k$ sparsification and Gaussian noise addition for achieving DP. 
The objective is to minimize the global loss function $F(\mathbf{w}, \mathcal{D}) = \sum_{i=0}^{n-1} \frac{1}{n} F_i\left(\mathbf{w}, \mathcal{D}_i\right)$, where \( F_i\) is the local loss function of client \( i \), \( n \) is the total number of clients, and \( \mathbf{w} \in \mathbb{R}^d \) is the model parameter vector of dimension \( d \). 
To understand {\main}'s behavior, we first outline its operation in each round \( t \). 
A subset \(\mathcal{P}^t\) of \( qn \) clients is sampled, where \( q \) is the sampling fraction. 
Each selected client \(\mathcal{C}_i \in \mathcal{P}^t\) computes its local update \(\Delta_i^{t+1} = \mathbf{w}_i^E - \mathbf{w}^t\), with \(\mathbf{w}_i^E\) being the local model after \( E \) steps of SGD starting from the global model \(\mathbf{w}^t\), using a local learning rate \(\eta_l\). 
Next, the client applies top-\(k\) sparsification \(\mathcal{C}(\cdot)\) to \(\Delta_i^{t+1}\), retaining the \( k \) largest-magnitude components, followed by clipping to ensure the \(\ell_2\)-norm is bounded by \( C \), producing \(\boldsymbol{x}_i^{t+1}\). 
Finally, the server securely aggregates these updates and adds Gaussian noise \(\eta^t \sim \mathcal{N}_\mathbb{Z}(0, \frac{3}{2} \sigma^2 C^2 \mathbf{I}_d)\), updating the global model as: $\mathbf{w}^{t+1} = \mathbf{w}^t + \frac{1}{qn} \left( \sum_{\mathcal{C}_i \in \mathcal{P}^t} \boldsymbol{x}_i^{t+1} + \eta^t \right).$

Before diving into the analysis, we make the following standard assumptions, which are commonly used in FL \cite{CVPR23,CVPR22,GhadimiL13a,HuGG24,ReddiCZGRKKM21,Qsparse}: 

\begin{assump}[$L$-Smoothness]
	Each local loss function \( F_i(\mathbf{w}) \) is differentiable, and its gradient \(\nabla F_i(\mathbf{w})\) is \( L \)-Lipschitz continuous, satisfying:
	\[
	\|\nabla F_i(\mathbf{w}) - \nabla F_i(\mathbf{w}')\| \leq L \|\mathbf{w} - \mathbf{w}'\|, \quad \forall \mathbf{w}, \mathbf{w}' \in \mathbb{R}^d.
	\]
\end{assump}

\begin{assump}[Bounded Local Variance]
	Since the data is IID, the expected gradient of each client matches the global gradient, i.e., \(\nabla F_i(\mathbf{w}) = \nabla F(\mathbf{w})\). The stochastic gradient \(\nabla F_i(\mathbf{w}; \beta_i)\) over batch \(\beta_i\) has bounded variance:
	\[
	\mathbb{E}_{\beta_i} \|\nabla F_i(\mathbf{w}; \beta_i) - \nabla F(\mathbf{w})\|^2 \leq \sigma_l^2.
	\]
\end{assump}

\begin{assump}[Bounded Gradient via Clipping]
	After clipping, each update satisfies:
	\[
	\|\boldsymbol{x}_i^{t+1}\|_2 \leq C.
	\]
\end{assump}

\begin{assump}[Bounded Gradient Norm]
	\label{assump:bounded-gradient}
	The gradient of the global loss function is bounded, such that for all \( t \):
	\[
	\|\nabla F(\mathbf{w}^t)\|^2 \leq G^2,
	\]
	where \( G \) is a positive constant.
\end{assump}

\begin{assump}[Bounded Compression Error]
	\label{assump:compression-error}
	The top-\(k\) sparsification operator \(\mathcal{C}(\cdot)\) introduces a bounded compression error. Specifically, for all \( i \) and \( t \), it holds that:
	\[
	\mathbb{E} \|\mathcal{C}(\Delta_i^{t+1}) - \Delta_i^{t+1}\|^2 \leq \phi \|\Delta_i^{t+1}\|^2,
	\]
	where \(\phi = 1 - \frac{k}{d}\) is the compression error parameter, with \( k \) being the number of retained components and \( d \) the total dimensionality of the model.
\end{assump}

With these assumptions in place, we now proceed to analyze {\main}'s convergence. Leveraging the \( L \)-smoothness of \( F(\mathbf{w}) \), we start with the standard inequality:

\[
F(\mathbf{w}^{t+1}) \leq F(\mathbf{w}^t) + \langle \nabla F(\mathbf{w}^t), \mathbf{w}^{t+1} - \mathbf{w}^t \rangle + \frac{L}{2} \|\mathbf{w}^{t+1} - \mathbf{w}^t\|^2.
\]

\noindent Taking expectations, we obtain:

\[
\mathbb{E}[F(\mathbf{w}^{t+1})] \leq F(\mathbf{w}^t) + \mathbb{E} \left[ \langle \nabla F(\mathbf{w}^t), u^t \rangle \right] + \frac{L}{2} \mathbb{E} \left[ \|u^t\|^2 \right],
\]

\noindent where \( u^t = \mathbf{w}^{t+1} - \mathbf{w}^t = \frac{1}{qn} \left( \sum_{\mathcal{C}_i \in \mathcal{P}^t} \boldsymbol{x}_i^{t+1} + \eta^t \right) \).
To evaluate the inner product term $\langle \nabla F(\mathbf{w}^t), u^t \rangle$, we define the compression error as \( e_i^{t+1} = \boldsymbol{x}_i^{t+1} - \Delta_i^{t+1} \).
Since the expected local update is \(\mathbb{E} [\Delta_i^{t+1}] = -\eta_l E \nabla F(\mathbf{w}^t)\), we have:

\[
\mathbb{E} [ \langle \nabla F(\mathbf{w}^t), u^t \rangle ] = \mathbb{E} [ \langle \nabla F(\mathbf{w}^t), \frac{1}{qn} \sum_{\mathcal{C}_i \in \mathcal{P}^t} \boldsymbol{x}_i^{t+1} \rangle ],
\]

\noindent since \(\mathbb{E} [\eta^t] = 0\). 
Splitting \(\boldsymbol{x}_i^{t+1} = \Delta_i^{t+1} + e_i^{t+1}\), we get:
$$
\begin{aligned}
	\mathbb{E} [ \langle \nabla F(\mathbf{w}^t)&, \frac{1}{qn} \sum_{\mathcal{C}_i \in \mathcal{P}^t} \boldsymbol{x}_i^{t+1} \rangle ] = -\eta_l E \|\nabla F(\mathbf{w}^t)\|^2 + \\ 
	& \mathbb{E} [ \langle \nabla F(\mathbf{w}^t), \frac{1}{qn} \sum_{\mathcal{C}_i \in \mathcal{P}^t} e_i^{t+1} \rangle ].
\end{aligned}
$$

Applying Young's inequality to bound the error term:
$$
\begin{aligned}
	\mathbb{E} [ \langle \nabla F(\mathbf{w}^t), \frac{1}{qn} \sum_{\mathcal{C}_i \in \mathcal{P}^t} e_i^{t+1} \rangle ] \leq \frac{\eta_l E}{4} \|\nabla F(\mathbf{w}^t)\|^2 + \\ 
	\frac{1}{\eta_l E q^2 n^2} \mathbb{E} [ \| \sum_{\mathcal{C}_i \in \mathcal{P}^t} e_i^{t+1} \|^2 ].
\end{aligned}
$$

\noindent From Assumption \ref{assump:compression-error}, \(\mathbb{E} \|e_i^{t+1}\|^2 \leq \phi \mathbb{E} \|\Delta_i^{t+1}\|^2\). From Assumption \ref{assump:bounded-gradient}, we bound:

\[
\mathbb{E} \|\Delta_i^{t+1}\|^2 \leq \eta_l^2 E^2 G^2 + \eta_l^2 E \sigma_l^2,
\]

\noindent so we have \(\mathbb{E} [ \| \sum_{\mathcal{C}_i \in \mathcal{P}^t} e_i^{t+1} \|^2 ] \leq qn \phi \eta_l^2 (E^2 G^2 + E \sigma_l^2)\). Thus:

\[
\mathbb{E} \left[ \langle \nabla F(\mathbf{w}^t), u^t \rangle \right] \leq -\frac{3 \eta_l E}{4} \|\nabla F(\mathbf{w}^t)\|^2 + \frac{\phi \eta_l (E G^2 + \sigma_l^2)}{qn}.
\]

\noindent Next, we compute the norm term \(\mathbb{E} \|u^t\|^2\). Since the noise and gradient updates are independent, we have:

\[
\mathbb{E} \|u^t\|^2 = \frac{1}{q^2 n^2} \mathbb{E} [ \| \sum_{\mathcal{C}_i \in \mathcal{P}^t} \boldsymbol{x}_i^{t+1} \|^2 ] + \frac{1}{q^2 n^2} \mathbb{E} \|\eta^t\|^2,
\]

\noindent where \(\mathbb{E} \|\eta^t\|^2 = \frac{3}{2} d \sigma^2 C^2\). 
For the gradient term, using the same bound:
\[
\mathbb{E} [ \| \sum_{\mathcal{C}_i \in \mathcal{P}^t} \boldsymbol{x}_i^{t+1} \|^2 ] \leq qn \eta_l^2 E^2 (1 + \phi) G^2 + qn \eta_l^2 E (1 + \phi) \sigma_l^2,
\]

\noindent we have:

\[
\mathbb{E} \|u^t\|^2 \leq \frac{\eta_l^2 E^2 (1 + \phi) G^2 + \eta_l^2 E (1 + \phi) \sigma_l^2}{qn} + \frac{3 d \sigma^2 C^2}{2 q^2 n^2}.
\]

\noindent Combining these, we have:
$$
\begin{aligned}
	\mathbb{E}[F(\mathbf{w}^{t+1})] \leq F(\mathbf{w}^t) - \frac{3 \eta_l E}{4} \|\nabla F(\mathbf{w}^t)\|^2 + \frac{\phi \eta_l (E G^2 + \sigma_l^2)}{qn} + \\
	\frac{L}{2} \left( \frac{\eta_l^2 E^2 (1 + \phi) G^2 + \eta_l^2 E (1 + \phi) \sigma_l^2}{qn} + \frac{3 d \sigma^2 C^2}{2 q^2 n^2} \right).
\end{aligned}
$$
\noindent for the $t$-th ground. 
Summing over \( T \) rounds and rearranging, we derive the final convergence bound:

$$
\begin{aligned}
	& \frac{1}{T} \sum_{t=0}^{T-1}  \mathbb{E} \|\nabla F(\mathbf{w}^t)\|^2 \leq \\
	& \mathcal{O} \left( \frac{1}{\eta_l E T} + \frac{L \eta_l E (1 + \phi) (G^2 + \sigma_l^2)}{qn} \right) +  \mathcal{O} \left( \frac{d \sigma^2 C^2}{q^2 n^2} \right).
\end{aligned}
$$
\end{proof}

\section{Our Protocols with Malicious Security}
\label{sec:complete-protocol}
Our proposed verifiable secret-shared noise sampling protocol is detailed in Algorithm~\ref{alg:malicious-noise-sampling}. 
We integrate this protocol with our proposed maliciously secure mechanisms for decompression and secret-shared shuffle (Section~\ref{sec:malicious:shuffle}), as well as for the aggregation of noise and gradient updates (Section~\ref{sec:malicious:aggregation}), and present the complete protocol of our maliciously secure {\main} in Algorithm~\ref{alg:mali-FL}. 

\begin{algorithm}[t]
\caption{Verifiable Noise Sampling (\textsf{VeriNoiseSamp})}
\label{alg:malicious-noise-sampling}
\begin{algorithmic}[1]
	\Require Noise parameters $\sigma$ and $C$, model dimension $d$, significance level $\alpha$.
	\Ensure $\mathcal{S}_{\{0,1,2\}}$ get encrypted discrete Gaussian noises $\llbracket \eta_i \rrbracket$ for $i\in\{0,1,2\}$, where $\eta_i \sim \mathcal{N}_\mathbb{Z}(0, \frac{1}{2}\sigma^2C^2\mathbf{I}_d)$. 
	\For{$i\in\{0,1,2\}$}
	\State // \emph{\underline{Noise sampling and sharing:}}
	\State $\mathcal{S}_i$ locally samples $\eta_i \sim \mathcal{N}_\mathbb{Z}(0, \frac{1}{2}\sigma^2C^2\mathbf{I}_d)$.
	\State $\mathcal{S}_i$ secret-shares $\eta_i$ across $\mathcal{S}_{\{0,1,2\}}$, yielding $\llbracket \eta_i \rrbracket$. 
	
	\State $\mathcal{S}_{i+1}$ generates masking noise $\xi \sim \mathcal{N}_\mathbb{Z}(0, \frac{1}{2}\sigma^2C^2\mathbf{I}_d)$.
	\State $\mathcal{S}_{i+1}$ secret-shares $\xi$ across $\mathcal{S}_{\{0,1,2\}}$, yielding $\llbracket \xi \rrbracket$.
	\State $\mathcal{S}_{\{0,1,2\}}$ securely compute $\llbracket \kappa \rrbracket \leftarrow \llbracket \eta_i \rrbracket + \llbracket \xi \rrbracket$.
	\State $\mathcal{S}_i$ and $\mathcal{S}_{i+1}$ send $\langle \kappa \rangle_{i+1}$ to $\mathcal{S}_{i-1}$.
	\State \multiline{$\mathcal{S}_{i-1}$ compares $\langle \kappa \rangle_{i+1}$ from $\mathcal{S}_i$ and $\mathcal{S}_{i+1}$. If inconsistent, output \textsf{Reject} and abort.}
	\State $\mathcal{S}_{i-1}$ computes $\kappa \leftarrow \sum_{i=0}^{2}\langle \kappa \rangle_i$ to reconstruct $\kappa$.
	\State // \emph{\underline{Noise verification using KS test on $\mathcal{S}_{i-1}$:}}
	\State Generate a reference sample $\kappa' \sim \mathcal{N}_\mathbb{Z}(0, \sigma^2C^2\mathbf{I}_d)$.
	\State $F_{\kappa}(y) \leftarrow \frac{1}{d} \sum_{i=0}^{d-1} \mathbb{I}(\kappa[i] \leq y)$.
	\State $F_{\kappa'}(y) \leftarrow \frac{1}{d} \sum_{j=0}^{d-1} \mathbb{I}(\kappa'[j] \leq y)$.
	\State $D_{\text{KS}}\leftarrow \sup_{y} |F_{\kappa}(y) - F_{\kappa'}(y)|.$ 
	\If {$D_{\text{KS}} > \sqrt{-\frac{1}{2} \ln\left(\frac{\alpha}{2}\right)} \cdot \sqrt{\frac{2}{d}}$}
	\State Output \textsf{Abort}. 
	\EndIf
	\EndFor
	
\end{algorithmic}
\end{algorithm}

\begin{algorithm*}
\caption{Our Construction for Sparsification-Enabled, Maliciously Secure, and Differentially Private FL} 
\label{alg:mali-FL}
\begin{multicols}{2}  
	\begin{algorithmic}[1]
		\vspace*{-20pt}
		\Require Each client $\mathcal{C}_i$ holds a local dataset $\mathcal{D}_i$ for $i\in[n]$.
		\Ensure The servers $\mathcal{S}_{\{0,1,2\}}$ and clients $\mathcal{C}_i$ ($i \in [n]$) obtain a global model $\mathbf{w}$ that satisfies $(\varepsilon,\delta)$-DP. 
		
		\Procedure{Mal-Train}{$\{\mathcal{D}_i\}_{i\in[n]}$}
		\State Initialize $ {\mathbf{w}}^0$. 
		\For {$t\in [T]$}\
		\State \multiline{Sample a set $\mathcal{P}^t$ of clients.} 
		\For{$\mathcal{C}_i \in \mathcal{P}^t$ \textbf{in parallel}}
		\State // \emph{\underline{Client-side computation:}}
		\State $\Delta^{t+1}_i \leftarrow \mathsf{LocalUpdate}(\mathcal{D}_i, \mathbf{w}^t)$. 
		\State $\mathcal{I}_i \leftarrow \text{argtop}_k \big(\mathsf{abs}(\Delta^{t+1}_i)\big).$ \label{alg:semi-FL:argtop}
		
		\State $\boldsymbol{x}^{t+1}_i \leftarrow \mathbf{0}^d$. 
		\State ${\boldsymbol{x}}^{t+1}_i[j]\leftarrow\Delta^{t+1}_i[j]$ for $j\in\mathcal{I}_i$. \Comment\emph{{Sparsification.}} \label{alg:semi-FL:topk-end}
		\State ${\boldsymbol{x}}^{t+1}_i \leftarrow {\boldsymbol{x}}^{t+1}_i / \max \{1, \|{\boldsymbol{x}}^{t+1}_i\|_2/C\}$. \Comment\emph{{Clipping.}} \label{alg:semi-FL:clipping}
		\State Initialize empty lists $L_i, E_i, P_i$. 
		\For{$j\in[d]$}
		\State $L_i.\mathsf{append}(j)$ if $\boldsymbol{x}^{t+1}_i[j] \neq 0$ else $E_i.\mathsf{append}(j)$.
		\EndFor
		\State $\boldsymbol{x}_i^{\prime} \stackrel{Eq. \eqref{eq:re-order}}{\longleftarrow} \boldsymbol{x}^{t+1}_i,L_i$.
		\State $\pi_i \stackrel{Eq. \eqref{eq:comp_pi}}{\longleftarrow} L_i,E_i$. 
		\State Locally sample seeds ${\mathsf{s}_{i,0},\mathsf{s}_{i,1}}$. 
		\State \multiline{$\pi_{i,0} \leftarrow G(\mathsf{s}_{i,0})$, $\pi_{i,1} \leftarrow G(\mathsf{s}_{i,1})$.}
		\State $\pi_{i,2} \leftarrow \pi_{i,1}^{-1} \circ \pi_{i,0}^{-1} \circ \pi_i$.
		\State $P_{i}.\mathsf{append}\left(\pi_{i,2}(j)\right)$ for $j \in [k]$.
		\State $\boldsymbol{r}^{}_i \leftarrow (\boldsymbol{x}^{\prime}_i[0], \boldsymbol{x}^{\prime}_i[1], \dots, \boldsymbol{x}^{\prime}_i[k-1])$
		\State Sample MAC key seeds $\mathsf{ks}_{i,0}, \mathsf{ks}_{i,1}, \mathsf{ks}_{i,2}$. \label{code:malicious-shuffle-start}
		\State $\boldsymbol{k}^{}_i \gets G(\mathsf{ks}_{i,0}) + G(\mathsf{ks}_{i,1}) + G(\mathsf{ks}_{i,2})$. 
		\State $t_i \gets \sum_{j=0}^{d-1} \boldsymbol{k}^{}_i[j] \cdot \boldsymbol{x}^{\prime}_i[j]$. \Comment \emph{Compute MAC.}
		\State \multiline{Distribute $\mathsf{s}_{i,0}$ and $\mathsf{s}_{i,1}$ to $\mathcal{S}_0$, $\mathsf{s}_{i,1}$ and $P_{i}$ to $\mathcal{S}_1$, $\mathsf{s}_{i,0}$ and $P_{i}$ to $\mathcal{S}_2$. Distribute the shares $(\langle t_i \rangle_j, \langle \boldsymbol{r}_i \rangle_j, \mathsf{ks}_{i,j})$ and $(\langle t_{i} \rangle_{j+1}, \langle \boldsymbol{r}_{i} \rangle_{j+1}, \mathsf{ks}_{{i},j+1})$ to server $\mathcal{S}_j$ ($j\in\{0,1,2\}$).}

		\State // \emph{\underline{Server-side computation:}}
		\State $\mathcal{S}_{\{0,2\}}$ locally compute $\pi^{\prime}_{i,0} \leftarrow G(\mathsf{s}_{i,0})$.
		\State $\mathcal{S}_{\{0,1\}}$ locally compute $\pi^{\prime}_{i,1} \leftarrow G(\mathsf{s}_{i,1})$. 
		\State $\mathcal{S}_{\{1,2\}}$ locally initialize an empty list $R_i$. 
		\For{$j\in[d]$}
		\State $R_i.\mathsf{append}(j)$ if {$j \notin P_i$}. 
		\EndFor
		\State $\mathcal{S}_{\{1,2\}}$ locally compute $\pi_{i,2}^{\prime}\stackrel{Eq. \eqref{eq:comp_pi2_server}}{\longleftarrow}P_i,R_i$.
		\State \multiline{$\mathcal{S}_{\{0,1,2\}}$ locally compute $\llbracket \boldsymbol{x}_i^{\prime} \rrbracket \leftarrow \llbracket \boldsymbol{r}_i \rrbracket \| \llbracket 0,\cdots,0 \rrbracket$. }
		\For{$j \in \{0,1,2\}$}
		\State \multiline{$\mathcal{S}_j$ computes $\langle \boldsymbol{k}_i \rangle_{j} \leftarrow G(\mathsf{ks}_{i,j}), \langle \boldsymbol{k}_i \rangle_{j+1} \leftarrow G(\mathsf{ks}_{i,j+1})$.} 
		\EndFor

		\For {$j \in \{0,1,2\}$} \Comment \emph{Secret-shared shuffle.} 
		\State \multiline{$\mathcal{S}_j,\mathcal{S}_{j+1}$ apply $\pi^{\prime}_{i,j+1}$ over the shares of $\llbracket \boldsymbol{x}^{\prime}_i \rrbracket,\llbracket \boldsymbol{k}_i \rrbracket$ and re-share the results to $\mathcal{S}_{j-1}$. }
		\EndFor 
		
		\State $\llbracket f_i \rrbracket \gets \llbracket 0 \rrbracket$. \Comment \emph{Initialize verification tag.} 
		\State $\llbracket t'_i \rrbracket \gets \sum_{j=0}^{d-1} \llbracket \pi_i(\boldsymbol{x}^{\prime}_i)[j] \rrbracket \cdot \llbracket \pi_i(\boldsymbol{k}_i)[j] \rrbracket$.
		\State $\llbracket f_i \rrbracket \gets \llbracket f_i \rrbracket + (\llbracket t_i \rrbracket - \llbracket t'_i \rrbracket)$. 
		
		\State $\mathcal{S}_{\{0,1,2\}}$ sample a secret-shared random number $\llbracket r \rrbracket$. 
		\State \multiline{$\mathcal{S}_{\{0,1,2\}}$ computes $f\leftarrow\mathsf{Rec}\left(\llbracket r_i \rrbracket \cdot (\sum_{\mathcal{C}_i \in \mathcal{P}^t}\llbracket f_i \rrbracket)\right)$ and output abort if $f\neq 0$.}  \label{code:malicious-shuffle-end}
		\State $\{\llbracket \eta_i \rrbracket\}_{i\in\{0,1,2\}} \leftarrow \mathsf{VeriNoiseSamp}(\sigma, C,\alpha)$. 
		\State $\llbracket {\Delta}^{t+1} \rrbracket \leftarrow \sum_{\mathcal{C}_i\in\mathcal{P}^t}\llbracket\boldsymbol{x}_i\rrbracket + \sum_{i=0}^{2}\llbracket \eta_i \rrbracket$. 
		\State \multiline{$ \Delta^{t+1} \leftarrow \mathsf{Rec}(\llbracket \Delta^{t+1} \rrbracket)$. } 
		\State $\mathbf{w}^{t+1} \leftarrow \mathbf{w}^{t} + \Delta^{t+1}/|\mathcal{P}^t|$. 
		\State \multiline{$\mathcal{S}_{\{0,1,2\}}$ locally hash $\mathbf{w}_{t+1}$, and then exchange the hashes. $\mathcal{S}_{\{0,1,2\}}$ output abort if the hashes do not match, else broadcast $\mathbf{w}_{t+1}$ to the clients.}
		\EndFor
		\EndFor
		\EndProcedure
		
	\end{algorithmic}
\end{multicols}
\end{algorithm*}

\section{Security Analysis}
\label{sec:security-analysis}
In this section, we analyze the security of {\main} using the simulation-based paradigm \cite{Lindell17} under the threat model defined in Section~\ref{sec:threat_model}. 
Since {\main} operates in an iterative training process, it suffices to analyze the security of a single round \cite{ELSA}. 
The security of the entire training process then follows naturally. 
In each FL round, the workflow consists of two phases: \textit{client-side computation} (where clients generate local sparsified gradient updates, MAC keys, and MAC values) and \textit{server-side computation} (where servers securely aggregate and process gradient updates). 
Since clients operate independently without inter-client communication, we focus exclusively on the server-side computation for security analysis.
The server-side computation involves three components: (1) secure permutation decompression and secret-shared shuffle, (2) verifiable noise sampling, and (3) secure aggregation of gradient updates with DP noise. 
We consider a probabilistic polynomial-time (PPT) adversary $\mathcal{A}$ that statically corrupts at most one of the three servers. 
The corrupted server may deviate arbitrarily from the protocol, while the remaining servers remain honest. 
Below we first define the ideal functionality of the multi-party differentially private FL. 

\begin{defn}
\label{def:overall_functionality}
\textit{\textbf{(Multi-party differentially private FL ideal functionality $\mathcal{F}_{\textsf{sc}}$).}} 
\textit{The functionality of $\mathcal{F}_{\textsf{sc}}$ interacts with three servers; at most, one server is controlled by the malicious adversary, and the rest servers are honest. 
	In the $t$-th FL round: \\
	\textit{\textbf{Input.}}
	For a selected subset $\mathcal{P}^t$, $\mathcal{F}_{\textsf{sc}}$ receives from each $\mathcal{C}_i\in \mathcal{P}^t$: (1) The $k$ non-zero values $\boldsymbol{r}_i = (\boldsymbol{x}_i'[0], \dots, \boldsymbol{x}_i'[k-1])$ derived from the clipped, sparsified update $\boldsymbol{x}_i^{t+1}$, (2) Seeds $\mathsf{s}_{i,0}, \mathsf{s}_{i,1}$ representing compressed permutations $\pi_{i,0}, \pi_{i,1}$, (3) The distilled permutation list $P_i$ derived from $\pi_{i,2}$, (4) The MAC value $t_i$ computed on $\boldsymbol{x}'_i$ with key $\boldsymbol{k}_i$, and (5) Key seeds $\mathsf{ks}_{i,0}, \mathsf{ks}_{i,1}, \mathsf{ks}_{i,2}$ used to generate $\boldsymbol{k}_i$. \\
	\textit{\textbf{Computation.}} 
	For each selected client $\mathcal{C}_i \in \mathcal{P}^t$: Reconstruct the intended permutation $\pi_i$ from seeds/distillation. 
	Reconstruct key $\boldsymbol{k}_i$ from key seeds. 
	Pad $\boldsymbol{r}_i$ to get $\boldsymbol{x}'_i$. 
	Verify client MAC: Check if $t_i = \sum_{l=0}^{d-1} \boldsymbol{k}_i[l] \cdot \boldsymbol{x}'_i[l]$. 
	If verification fails for any client, record failure and proceed (or optionally abort). 
	Compute the intended sparse update $\boldsymbol{x}_i = \pi_i(\boldsymbol{x}'_i)$.
	Compute $\mathcal{G} = \sum_{i \in \mathcal{P}^t}$ and add sampled noise ${\eta}$. 
	Compute noisy aggregate $\mathcal{G}' = \mathcal{G} + {\eta}$.
	Interact with Adversary: Send $\mathcal{G}'$ to $\mathcal{A}$. 
	Receive continue/abort from $\mathcal{A}$.
	\\
	\textit{\textbf{Output.}} $\mathcal{F}_{\textsf{sc}}$ outputs the noisy aggregated gradient update $\mathcal{G}^\prime$ if the adversary sends \textsf{continue}.
}
\end{defn}

We now formally prove the security of {\main} against malicious adversary. 
Since the server-side computation involves secret-shared multiplication, we first assume that the underlying multiplication protocol achieves an ideal functionality $\mathcal{F}_{\textsf{mult}}$ \cite{RSS}. 
This functionality accepts secret-shared inputs $\llbracket x \rrbracket$ and $\llbracket y \rrbracket$ from the servers and outputs their product in secret-shared form $\llbracket x \cdot y \rrbracket$. 
The interaction with $\mathcal{F}_{\textsf{mult}}$ can be simulated by a polynomial-time simulator $\mathcal{S}_{\textsf{mult}}$ that generates indistinguishable transcripts for honest parties.

\begin{thm}
\label{thm:mali-topkagg}
\textit{\textbf{(Maliciously secure differentially private FL)}}. 
\textit{
	Assuming that PRG is a random oracle and that the secret-shared multiplication achieves $\mathcal{F}_{\textsf{mult}}$, the protocol of {\main} securely realizes the functionality (Definition. \ref{def:overall_functionality}) in the presence of a static, non-colluding malicious adversary corrupting at most one server, with security with abort. 
}
\end{thm}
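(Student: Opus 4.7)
The plan is to construct a polynomial-time simulator $\textsf{Sim}$ that, given black-box access to $\mathcal{A}$ and interacting with $\mathcal{F}_{\textsf{sc}}$, produces a view computationally indistinguishable from the real-world view of $\mathcal{A}$. Without loss of generality, suppose $\mathcal{A}$ statically corrupts server $\mathcal{S}_j$ for some fixed $j \in \{0,1,2\}$; the other two servers $\mathcal{S}_{j-1}, \mathcal{S}_{j+1}$ remain honest and are played by $\textsf{Sim}$. The argument proceeds via a hybrid sequence over the three server-side phases of Algorithm~\ref{alg:mali-FL}: (i) permutation decompression with secret-shared shuffle and blind MAC verification, (ii) verifiable noise sampling via the KS test, and (iii) aggregation with hash-based output check. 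Each use of secret-shared multiplication is handled by invoking the $\mathcal{F}_{\textsf{mult}}$ simulator $\mathcal{S}_{\textsf{mult}}$.

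For phase (i), $\textsf{Sim}$ samples uniformly random field elements as the messages that honest parties would send to $\mathcal{S}_j$ during re-sharing in the shuffle; indistinguishability follows because the additive zero-sharing masks $\langle \alpha \rangle$ perfectly blind the permuted shares from $\mathcal{S}_j$'s view, and because the missing PRG seed held only by the honest pair makes the corresponding permutation share pseudorandom to $\mathcal{A}$. For integrity, $\textsf{Sim}$ extracts any additive deviation $\mathbf{e}_i$ that $\mathcal{A}$ introduces into client $\mathcal{C}_i$'s shuffled output; the blind batched check $f = \mathsf{Rec}(\llbracket r \rrbracket \cdot \sum_i \llbracket f_i \rrbracket)$ then equals zero only if $\sum_i \pi_i(\boldsymbol{k}_i) \cdot \mathbf{e}_i = 0$. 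Since at least one MAC key seed $\mathsf{ks}_{i,j-1}$ is unknown to $\mathcal{A}$ and the associated PRG output is modeled as a random oracle, the Carter--Wegman/Schwartz--Zippel argument bounds the cheating probability by $\mathcal{O}(1/|\mathbb{Z}_p|)$, which is negligible; otherwise $\textsf{Sim}$ signals \textsf{abort} to $\mathcal{F}_{\textsf{sc}}$.

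For phase (ii), $\textsf{Sim}$ plays the honest noise samplers and KS verifier. When $\mathcal{A}$ is scheduled as the KS verifier, $\textsf{Sim}$ hands it a freshly drawn $\kappa \sim \mathcal{N}_{\mathbb{Z}}(0, \sigma^2 C^2 \mathbf{I}_d)$ reconstructed from simulated shares; this is identical in distribution to the real view because the honest masking noise $\xi \sim \mathcal{N}_{\mathbb{Z}}(0, \tfrac{1}{2}\sigma^2 C^2 \mathbf{I}_d)$ statistically hides $\eta_0$ in the sum. When $\mathcal{A}$ plays a noise contributor, $\textsf{Sim}$ extracts the noise shares $\mathcal{A}$ sends and, by the RSS replication structure, detects any share inconsistency by comparison at the honest reconstructor and aborts accordingly; conditioned on non-abort, the KS test at significance $\alpha=0.05$ guarantees that the extracted noise is statistically close to the target distribution within the critical value $\sqrt{-\tfrac{1}{2}\ln(\alpha/2)}\sqrt{2/d}$, so the residual deviation handed to $\mathcal{F}_{\textsf{sc}}$ is absorbed into the noise without compromising the DP analysis of Theorem~\ref{thm:analysis_overview}. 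For phase (iii), $\textsf{Sim}$ obtains the ideal noisy aggregate $\Delta^{t+1}$ from $\mathcal{F}_{\textsf{sc}}$ and programs the honest-party reconstruction message so that $\mathcal{S}_j$ sees the consistent output. Any additive error $\mathbf{e}$ injected by $\mathcal{A}$ propagates to exactly one honest server's local view (not both, by the replicated structure of RSS), yielding mismatching hashes with overwhelming probability by the collision resistance of the hash function; $\textsf{Sim}$ forwards the corresponding continue/abort decision to the functionality.

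The main obstacle is phase (ii): because the KS test only bounds the statistical distance of a permitted manipulation rather than eliminating it entirely, one must argue carefully that (a) the bounded deviation admitted by the test cannot be exploited to leak information about individual clients' inputs, since the masking sample $\xi$ already perfectly hides $\eta_j$ in the verifier's view, and (b) the resulting noise still suffices for $(\varepsilon,\delta)$-DP up to a negligible slack that is subsumable into the composition analysis. Combining the three hybrids and invoking $\mathcal{S}_{\textsf{mult}}$ at each multiplication step establishes the overall computational indistinguishability, yielding security with abort.
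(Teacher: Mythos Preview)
Your proof sketch is correct and follows essentially the same simulation-based approach as the paper: both construct a simulator for the corrupted server, argue indistinguishability of the shuffle phase via the additive zero-sharing masks and the hidden PRG seed, invoke the Schwartz--Zippel/Carter--Wegman bound for the blind MAC check, handle the noise-sampling phase by role (contributor versus KS verifier), and rely on RSS replication plus hash collision resistance for the final aggregation check. Your three-phase decomposition merely coarsens the paper's five-phase one (it folds pre-shuffle, shuffle, and post-shuffle into a single phase), and your explicit discussion of the KS test only \emph{bounding} rather than eliminating manipulation is, if anything, slightly more careful than the paper's own sketch, which largely defers that point to the remark in Section~\ref{sec:malicious:noise-sampling}.
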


\begin{proof}		
The core idea is to construct simulators $\mathcal{S}_0$, $\mathcal{S}_1$, and $\mathcal{S}_2$ that emulate the corrupted server's view while ensuring indistinguishability from the ideal functionality $\mathcal{F}_{\textsf{sc}}$ (Definition~\ref{def:overall_functionality}). 
Due to the symmetry in the protocol design for shuffle, MAC verification, and final reconstruction/hash check, the simulation strategies are similar. We sketch the simulation for the case where $\mathcal{S}_0$ is corrupted by adversary $\mathcal{A}$; simulations for $\mathcal{S}_1, \mathcal{S}_2$ follow analogously, accounting for specific roles in permutation decompression and noise verification phases.

\noindent\textbf{Simulating $\mathcal{S}_0$.}
We decompose the adversary's view into five phases: \textit{Pre-shuffling} (decompressing permutations and padding), \textit{Shuffling} (applying permutations), \textit{Post-shuffling} (blind MAC verification), \textit{Noise-sampling} (sampling and verification of DP noise), and \textit{Aggregation} (secure summation). 
Each phase is simulated as follows:

\textit{Pre-shuffling.} For each client $\mathcal{C}_i$'s sparsified gradient update $\boldsymbol{x}_i^\prime$, $\mathcal{S}_0$ decompresses the permutations $\pi_{i,0}$ and $\pi_{i,1}$ using seeds $\mathsf{s}_{i,0}, \mathsf{s}_{i,1}$ received from $\mathcal{C}_i$. 
The padding of $d-k$ zeros to $\llbracket \boldsymbol{r}_i \rrbracket$ (where $\boldsymbol{r}_i$ contains the top-$k$ values) is done non-interactively \cite{RSS}. 
Therefore, the pre-shuffling process is trivial to simulate since $\mathcal{S}_0$ receives nothing. 
If the adversary provides malformed padding values or invalid permutations during the protocol execution, the probability of such tampered inputs passing the post-shuffle MAC verification is negligible by the DeMillo-Lipton-Schwartz-Zippel lemma \cite{Zippel79,Schwartz80,DemilloL78}.

\textit{Shuffling.} During secret-shared shuffle, $\mathcal{S}_0$ applies its permutation and re-shares the result with $\mathcal{S}_1$ and $\mathcal{S}_2$. 
The re-sharing operation involves generating a random mask sampled from $\mathbb{Z}^d_p$ and sending the masked share to others. 
Also, $\mathcal{S}_0$ receives masked shuffled results from others, which is uniformly random in $\mathcal{S}_0$'s view.

\textit{Post-shuffling.} After shuffling, the simulator verifies the integrity of reconstructed $\boldsymbol{x}_i$ (in secret-shared form) via blind MAC. 
The verification computes $\llbracket f \rrbracket = \sum_{\mathcal{C}_i \in \mathcal{P}^t} (\llbracket t_i \rrbracket - \llbracket t_i^\prime \rrbracket)$, where $t_i$ is the client-generated MAC and $t_i^\prime$ is the server-recomputed MAC. 
The simulator $\mathcal{S}_{\textsf{mult}}$ emulates RSS multiplications \cite{RSS} for computing $t_i^\prime$ during MAC verification. 
If the adversary performs the blind MAC verification honestly, the simulator uses the negation of the adversary's share to enforce a zero flag. 
If the adversary deviates from any part of the protocol, the simulation uses a random element of $\mathbb{Z}_p$. 
In any case where the adversary deviates from the protocol, the simulation sends abort to the ideal functionality.
By the DeMillo-Lipton-Schwartz-Zippel lemma \cite{Zippel79,Schwartz80,DemilloL78}, the probability of a forged MAC passing verification is negligible.

\textit{Noise-sampling.} This involves sampling and verification for ${\eta}_0$, ${\eta}_1$, ${\eta}_2$. We simulate $\mathcal{S}_0$'s view. 
(1) Verifying ${\eta}_0$ ($\mathcal{S}_0$ is source): the simulator simulates $\mathcal{S}_0$ sending shares of ${\eta}_0$. 
Since the mask ${\xi}_1$ is generated and secret-shared from $\mathcal{S}_1$, its shares are uniformly random in $\mathcal{S}_0$'s view. 
The summation of $\llbracket{\xi}_1\rrbracket$ and $\llbracket{\eta}_0\rrbracket$ is conducted locally without interaction, therefore it is trivial to simulate. 
Then the simulator simulates the (robust) reconstruction of ${\kappa}_0 = {\eta}_0 + {\xi}_1$. 
Since $\mathcal{S}_0$ receives nothing in the reconstruction phase, it is trivial to simulate. 
Since the KS test is performed on $\mathcal{S}_2$ locally and $\mathcal{S}_0$ receives nothing, the KS test phase for ${\eta}_0$ is trivial to simulate. 
(2) Verifying ${\eta}_1$ ($\mathcal{S}_0$ is masker): Similarly, this phase is trivial to simulate as $\mathcal{S}_0$ receives nothing. 
(3) Verifying ${\eta}_2$ ($\mathcal{S}_0$ is tester): $\mathcal{S}_0$ receives random shares for ${\eta}_2$ (from $\mathcal{S}_2$) and mask ${\xi}_1$ (from $\mathcal{S}_1$), which is trivial to simulate as the shares are uniformly randomly sampled. 
The reconstruction of ${\kappa}_2 = {\eta}_2 + {\xi}_1$ on $\mathcal{S}_0$ is trivial to simulate as the shares of $\llbracket{\kappa}_2\rrbracket$ are uniformly random in $\mathcal{S}_0$'s view. 
If the shares from $\mathcal{S}_{\{1,2\}}$ do not match, the simulation sends $\textsf{abort}$ to the ideal functionality. 
Regarding the locally performed KS test on $\mathcal{S}_0$, the test should pass w.h.p. the simulator expects $\mathcal{A}$ to report $\textsf{continue}$. 
If $\mathcal{A}$ reports \textsf{Reject}, the simulator sends $\textsf{abort}$ to $\mathcal{F}_{\textsf{sc}}$.

\textit{Aggregation.} 
The simulator knows the final value $\mathcal{G}'$ from $\mathcal{F}_{\textsf{sc}}$. 
It provides shares consistent with $\mathcal{G}'$ to $\mathcal{A}$. 
It receives shares from $\mathcal{S}_1, \mathcal{S}_2$ (also consistent with $\mathcal{G}'$). 
The simulator simulates $\mathcal{S}_0$ performing the reconstruction and comparison (trivial to simulate as the shares required for reconstruction are uniformly random in $\mathcal{S}_0$'s view). 
The simulator then simulates $\mathcal{S}_0$ doing the local gradient update $\mathbf{w}^{t+1} = \mathbf{w}^t + \mathcal{G}'/|\mathcal{P}^t|$ and hash computation. 
Then it simulates receiving the correct hashes from $\mathcal{S}_1, \mathcal{S}_2$ and checks if $\mathcal{S}_0$ reports the same hash. 
If the hashes match, the simulator sends $\textsf{continue}$ to $\mathcal{F}_{\textsf{sc}}$, else it sends $\textsf{abort}$ to $\mathcal{F}_{\textsf{sc}}$.

\end{proof}
\end{document}